\documentclass[reqno]{amsart}
\usepackage{graphicx} \usepackage{mathtools,amssymb}
\usepackage{xcolor}
\usepackage[normalem]{ulem}
\usepackage{tikz}
\usepackage[T1]{fontenc}
\usetikzlibrary{arrows.meta,positioning,decorations.pathreplacing,shapes.callouts, calc}
\usepackage[colorlinks=true, linkcolor=blue, citecolor=red, urlcolor=magenta]{hyperref}
\usepackage{thm-restate}
\usepackage{enumerate}

\theoremstyle{plain}
\newtheorem{thm}{Theorem}[section]
\newtheorem{lem}[thm]{Lemma}
\newtheorem{prop}[thm]{Proposition}

\newtheorem{cor}[thm]{Corollary}
\newtheorem{rmk}[thm]{Remark}

\theoremstyle{definition}
\newtheorem{defn}[thm]{Definition}

\renewcommand{\leq}{\leqslant}
\renewcommand{\geq}{\geqslant}
\renewcommand{\le}{\leqslant}
\renewcommand{\ge}{\geqslant}  \def\PP{{\mathbb{P}}}

\def\F{{\mathbb{F}}}

\def\Fq{{\mathbb{F}_q}}
\def\V{{\mathsf{V}}}
\def\T{{\mathsf{T}}}

\def\E{{\mathcal{E}}}
\def\H{{\mathcal{H}}}
\def\P{{\mathcal{P}}}
\def\Q{{\mathcal{Q}}}

\def\prm{{\mathsf{PRM}}}
\def\Ev{{\rm{Ev}}}
\def\N{{\mathcal{N}}}

\def\L{{\mathcal{L}}}
\def\Fqbar{{\overline{\mathbb{F}}_q}}
\def\Fqbartimes{{\overline{\mathbb{F}}^{\times}_q}}
\def\FqbarN{{\overline{\mathbb{F}}^{N+1}_q}}
\def\FqN{{\mathbb{F}^{N+1}_q}}
\def\sing{{\rm{Sing}\hspace{0.5mm} }}

\def\rk{{\rm{Rk}\hspace{0.5mm} }}
\def\rad{{\rm{Rad}\hspace{0.5mm} }}

\def\supp{{\rm{Supp}\hspace{0.5mm}}}
\def\wH{{\rm{w_H}\hspace{0.5mm}}}
\def\im{{\rm{im}\hspace{0.5mm}}}

\newcommand{\eqdef}{\stackrel{\text{def}}{=}}
\newcommand{\map}[4]{
  \left\{
  \begin{array}{ccc}
    #1 & \longrightarrow & #2 \\
    #3 & \longmapsto     & #4
  \end{array}
  \right.
}

\newcommand{\GaussBinom}[2]{{\left[
      \begin{array}{c}
        #1 \\ #2
      \end{array}
\right]}_q}

\title[Maximal quadrics and minimal codewords]{Maximal quadrics over finite fields and minimal codewords of projective Reed--Muller codes}
\author{Alain Couvreur}
\author{Rati Ludhani}
\thanks{\tiny{E--mail address: \texttt{\{alain.couvreur, rati.ludhani\}@inria.fr}\\
Inria \& Laboratoire LIX, École Polytechnique, Institut Polytechnique de Paris, 91120 Palaiseau CEDEX, France.}}

\date{}

\begin{document}

\begin{abstract}
  We study the classification of minimal codewords of projective Reed--Muller codes of order $2$. This problem is equivalent to identifying quadrics over finite fields whose set of rational points is maximal with respect to the inclusion. We prove that except one particular case over $\F_2$, any two absolutely irreducible quadrics whose sets of rational points are contained within one another should be equal as projective varieties. We deduce a precise characterisation of the minimal codewords of projective Reed--Muller codes of order $2$ and further give their exact number for each possible weight. 
\end{abstract}

\maketitle

\noindent\textbf{Keywords:} Projective Reed--Muller codes, minimal codewords, quadrics, rational points.

\noindent\textbf{MSC (2020):} 94B27, 05E14, 14N10, 11T71.

\section{Introduction}
Minimal codewords of a linear code carry important information about its combinatorial and geometric structure. A nonzero codeword is said to be \emph{minimal} if its support does not properly contain the support of any other nonzero codeword. Massey~\cite{massey1993minimal} proved that the access structure of a linear secret sharing scheme can be completely determined by the minimal codewords of the dual of the code considered for the scheme. Some decoding algorithms are also based on minimal codewords over the binary field, see~\cite{hwang1979decoding, agrell2002voronoi1,agrell2002voronoi2,ashikhmin2002minimal}. Further, just over a decade ago, Johnsen and Verdure~\cite{johnsen2013hamming} showed that one can associate a vector matroid to a linear code, where the minimal codewords correspond precisely to the circuits of this matroid. These circuits play a crucial role in determining the weight distribution and higher weight spectra of the code.  Thus, identifying minimal codewords is of significant interest, although doing so in practice is highly nontrivial. 
A general algorithm is known for binary codes \cite{agrell2002voronoi2}, yet it quickly becomes inefficient for codes of large size. This has led to studies focusing on particular code families where additional structure can be exploited; see, for instance~\cite{agrell2002voronoi2, ashikhmin2002minimal, borissov2004minimal, schillewaert2010minimal}. In this work, we address this problem for projective Reed–Muller codes of order 2.

Let $q$ be a prime power and let $\Fq$ denote the finite field with $q$ elements. Write $\Fqbar$ for an algebraic closure of $\Fq$. 
Let $N$ and $d$ be nonnegative integers. We write $\PP^N$ for the $N$-dimensional projective space over $\Fqbar$, that is, $(\overline{\F}_q^{N+1} \setminus \{0\})/{\Fqbartimes}$.
Let $\PP^N(\Fq)$ be its set of rational points and write \[p_N\eqdef \sharp \PP^N(\Fq).\] Denote by $\Fq[X_0,\ldots,X_N]$ the polynomial ring in $N+1$ variables with coefficients in $\Fq$ and by ${\Fq[X_0,\ldots,X_N]}_d$ the $\Fq$-space of homogeneous polynomials of degree $d$, including the zero polynomial. A \emph{projective Reed--Muller} (PRM) code \emph{of order $d$} and length $p_N$, denoted by $\prm_q(d,N)$, is defined as the image of the evaluation map \[
  \Ev : \map{{\Fq[X_0,X_1,\ldots,X_N]}_{d}}{\F^{^{p_N}}_q}{f}{{(f(P))}_{P \in E}~,}
\]
where $E$ is a list of $p_N$ elements of $\FqN$ representing any element of $\PP^N(\Fq)$.

PRM codes were first mentioned as an example of geometric Goppa codes from higher dimensional varieties in the lecture notes~\cite{vladut1984linear} by Manin and
Vl\u{a}du{\c{t}}.
Later, Lachaud~\cite{lachaud1986projective} formally introduced them for order $d<q$ and S{\o}rensen~\cite{sorensen1991projective} generalized to $d\geq q$. Lachaud in~\cite[\S 3]{lachaud1990parameters} further proved that PRM codes behave better 
than the ubiquitous family of codes called Reed--Muller codes, in particular, the performance measure (relative distance--transmission rate sum) is higher for PRM codes compared to Reed-Muller codes in a range of parameters. Thus, it is important to understand PRM codes better to see their potential for applications. Several fundamental parameters of PRM codes are already known: the minimum distance was determined by Lachaud~\cite{lachaud1990parameters} using  Serre's bound~\cite{serre1989lettrea} for $d<q$ and independently by S{\o}rensen~\cite{sorensen1991projective} in general, the minimum weight codewords were characterized by Rolland~\cite{rolland2007number}, and their enumeration was obtained by Ghorpade and the second named author~\cite{ghorpade2023minimum}.

\subsection*{Our contributions} In this article, we characterize all minimal codewords of $\prm_q(2,N)$ and determine their exact number at each weight. As one can observe from the definition, these questions are equivalent to characterising the
quadrics over $\Fq$ whose sets of $\Fq$--rational points are maximal with respect to the inclusion and determining the number of such sets having same cardinality.

Following Hirschfeld and Thas~\cite{hirschfeld1991general}, absolutely irreducible quadrics in $\PP^N$ over $\Fq$ fall into three classes: parabolic, hyperbolic, and elliptic. Thus, to characterise quadrics with respect to their $\Fq$--rational points, our key ingredient would be to prove the following result which may also be of independent interest in finite geometry, combinatorics or number theory.

\begin{restatable}{thm}{main}\label{thm:main}
  Let $\Q$ and $\Q'$ be two absolutely irreducible quadrics in
  $\PP^{N}$ over $\Fq$. 
If $\Q(\Fq)\subset \Q'(\Fq)$, then $\Q=\Q'$, except when $q=2$, $\Q$ is elliptic and $\Q'$ is hyperbolic both of rank $4$.
\end{restatable}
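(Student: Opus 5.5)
Throughout, write $\Q=V(F)$ and $\Q'=V(G)$ with $F,G\in{\Fq[X_0,\ldots,X_N]}_2$, both absolutely irreducible, so each has rank at least $3$. The goal is to show that $G$ is a scalar multiple of $F$. The strategy is to replace $\Q$ by its rational points: these span, and generically determine, the linear system of quadrics through them. Concretely, if $\Q(\Fq)$ imposes on quadrics exactly the conditions that $\Q$ does, then every quadric vanishing on $\Q(\Fq)$ is a multiple of $F$, and we are done. The plan is to reduce to a small number of ranks by slicing with hyperplanes, and to handle the base cases by explicit counting.

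\textbf{Reduction to the nondegenerate case.} First assume $\Q$ is a cone, with vertex $\Pi=\rad$ of dimension $N-r$, where $r=\rk F$. Then $\Q(\Fq)$ is the join of $\Pi(\Fq)$ with the rational points of a nondegenerate quadric $\Q_0$ in a complementary $\PP^{r-1}$. We show that $G$ is singular along $\Pi$. For each rational point $P$ of $\Q_0$, the line joining $P$ to a point of $\Pi(\Fq)$ contains $q+1$ rational zeros of $G$. Once $q+1\ge 3$, a quadric vanishing at $3$ points of a line contains the line, so $\Q'$ contains the join of $\Pi$ and $\Q_0(\Fq)$. Using that $\Q_0(\Fq)$ spans $\PP^{r-1}$ (true since $r\ge3$, away from tiny exceptions), we deduce that $\Pi$ lies in the singular locus of $G$. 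Hence $G$ is a polynomial in the $r$ complementary variables, and we reduce to showing $\Q_0=\Q'\cap\PP^{r-1}$. The case $q=2$ needs separate care here, since lines carry only $3$ points; it can be handled by the same argument applied to planes rather than lines.

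\textbf{Nondegenerate case, by induction on $N$.} Now suppose $\Q$ is nondegenerate and let $H$ be a rational hyperplane. Then $\Q\cap H$ is a quadric in $H\cong\PP^{N-1}$, and $(\Q\cap H)(\Fq)\subset(\Q'\cap H)(\Fq)$. For $N\ge 4$ we choose $H$ such that $\Q\cap H$ is nondegenerate, hence absolutely irreducible, and of the appropriate type; non-tangent hyperplanes exist in abundance. Since the restriction to $H$ of an absolutely irreducible $G$ could be reducible, we choose $H$ so that $\Q'\cap H$ is also absolutely irreducible. This is possible because only few hyperplanes cut $\Q'$ in a reducible quadric, namely tangent hyperplanes when $\rk G$ is small, and we compare this number with the total number of hyperplanes. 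By induction, $G|_H=\lambda_H F|_H$ for many hyperplanes $H$. Any two such hyperplanes force the scalars to agree on their intersection, so $G-\lambda F$ vanishes on the union of enough hyperplanes. That union exceeds two hyperplanes, so $G=\lambda F$.

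\textbf{Base cases and the main obstacle.} The main obstacle is the base case, ranks $3$ and $4$, that is, conics in $\PP^2$ and quadric surfaces in $\PP^3$. This is where the exception lives and where the induction cannot get started.

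For a conic, $\sharp\Q(\Fq)=q+1\ge 5$ as soon as $q\ge4$. Since five points in general position determine a conic, this gives $G=\lambda F$. For $q=2,3$ we check the small cases directly.

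For quadric surfaces, the hyperbolic case has $(q+1)^2$ points and contains rational lines. A quadric containing three skew rulings of $\Q$ contains $\Q$. The elliptic case has $q^2+1$ points with no rational lines, so instead we count: $q^2+1\ge 9$ points, of which no $4$ are coplanar-collinear, determine the quadric as soon as $q\ge3$. We verify this by showing that the points impose independent conditions, for instance by noting that the conic sections through two different planes determine $G$.

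For $q=2$, an elliptic quadric has $5$ points in $\PP^3(\F_2)$, and these are in general position. Those $5$ points lie on hyperbolic quadrics, which produces the exception. We would exhibit one such quadric explicitly and check that no other case fails. The inductive step for $N\ge4$ over $\F_2$ must then avoid slicing down to this configuration. We would handle this by choosing hyperplanes $H$ with $\Q\cap H$ parabolic of rank $N$, or by slicing down to parabolic conics, which behave well.
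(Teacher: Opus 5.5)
Your reduction to the nondegenerate case is sound, and it differs from the paper's, which goes through Proposition~\ref{prop:SingPts}. Once $\Q'$ contains every $\langle \sing\Q, P\rangle$ with $P\in\Q_0(\Fq)$, Lemma~\ref{lem:inclusion_tangent} forces $\sing\Q\subset\sing\Q'$. The case $q=2$ needs no special care there, since three points on a line already suffice. Your treatment of $\H_4$ via three skew rational lines is also fine.

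The genuine gap is the inductive step. Everything rests on finding three hyperplanes $H$ such that $\Q\cap H$ and $\Q'\cap H$ are both absolutely irreducible and $\Q\cap H$ lies outside the exceptional configuration. You only assert this through a count against ``the total number of hyperplanes'', which you never carry out. The relevant comparison is with the hyperplanes that are good for $\Q$, and that set can be small. Take $q=2$, $N=4$, $\Q=\P_5$:
\begin{itemize}
\item the $15$ tangent hyperplanes cut rank-$3$ cones, whose rational points lie on two planes (Lemma~\ref{lem:NonMaximalCases});
\item $6$ hyperplanes cut elliptic surfaces, which is the exception;
\item only $10$ hyperbolic sections remain.
\end{itemize}
If $\Q'$ were a rank-$4$ cone over a hyperbolic surface, $9$ hyperplanes would cut it in rank $\le 2$. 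Counting alone therefore guarantees one usable hyperplane, not three. Your fallback of choosing $\Q\cap H$ parabolic of rank $N$ is impossible for even $N$: non-tangent sections of a smooth quadric in $\PP^4$ have rank $4$, hence are elliptic or hyperbolic. Slicing down to conics runs into exactly the failure for $q\le 3$.

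The missing idea is the paper's Theorem~\ref{thm:irredandred}. The rational points of an absolutely irreducible quadric never lie in a union of two rational hyperplanes, except in rank $3$ with $q\le 3$ and in elliptic rank $4$ with $q=2$. Together with Lemma~\ref{lem:Qhyperplane}, this makes $\Q'\cap H$ automatically absolutely irreducible whenever $\Q\cap H$ is absolutely irreducible of rank $\ge 4$ and not elliptic of rank $4$. No counting is then needed. One hyperplane suffices, namely $\V(X_0)$ in canonical coordinates, whose section is never elliptic. The same theorem then kills the remainder $F'-\lambda F=X_0(L'-\lambda L)$.

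Your elliptic base case lacks the same ingredient.
\begin{itemize}
\item For $q=3$, a plane section of $\E_4$ has only $4$ rational points, and these also lie on three line pairs. The plane-section argument fails unless you first show that $\Q'\cap\Pi$ is an irreducible conic. The paper excludes $\Q'$ hyperbolic or a cone via $q^2+1\le 2(q+1)$, then uses $g(\Q')=0$.
\item Even for $q\ge 4$, two planes only yield $G-\lambda F=cL_1L_2$, so you still need a third plane or a point count to get $c=0$.
\item For $q=2$, ``check that no other case fails'' is precisely where the content lies. You must exclude $\Q'$ being a second elliptic quadric (the paper shows $X_0X_1$ divides $F-F'$ and that $F+X_0X_1$ is parabolic), and $\Q'$ being a rank-$3$ cone (the coplanarity argument).
\end{itemize}
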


\begin{restatable}{cor}{characterisation}\label{cor:characterisation}
  A quadric $\Q$ in $\PP^N$ defined over $\Fq$ is uniquely determined
  by its set of $\Fq$--rational points, unless it is irreducible and not
  absolutely irreducible (\emph{i.e.} if it is the union of two
  $\F_{q^2}$--hyperplanes that are conjugate under the action of the
  Frobenius map).
\end{restatable}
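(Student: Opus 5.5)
We prove Corollary~\ref{cor:characterisation} by classifying quadrics according to their absolute decomposition and using Theorem~\ref{thm:main} for the absolutely irreducible ones. A quadric $\Q=V(F)$ with $F$ of degree $2$ over $\Fq$ is of one of three types:
\begin{enumerate}[(a)]
\item absolutely irreducible;
\item a union of two $\Fq$-rational hyperplanes $H_1\cup H_2$ (possibly $H_1=H_2$, a double hyperplane);
\item a union of two Frobenius-conjugate $\F_{q^2}$-hyperplanes $H\cup H^\sigma$.
\end{enumerate}
In case (c), $\Q(\Fq)=H\cap H^\sigma\cap \PP^N(\Fq)$ is the set of rational points of a codimension-$2$ linear space defined over $\Fq$. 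It depends only on the $\Fq$-linear space $H\cap H^\sigma$. Infinitely many (or at least several) distinct conjugate pairs share this intersection whenever $q\ge 2$, since there are several $\F_{q^2}\setminus\Fq$-points on the conjugate-invariant pencil. So uniqueness genuinely fails there, which explains the exception. It remains to show that if $\Q,\Q'$ are both of type (a) or (b) and $\Q(\Fq)=\Q'(\Fq)$, then $\Q=\Q'$.

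\textbf{Step 1: the type is recovered from the rational points.} If $\Q=H_1\cup H_2$ is of type (b), then $\Q(\Fq)$ is a union of at most two rational hyperplane point sets. Conversely, if $\Q$ is of type (a), I claim $\Q(\Fq)$ does not contain $H(\Fq)$ for any rational hyperplane $H$. Suppose it did. Since $\sharp H(\Fq)=p_{N-1}$ and a nonzero quadratic form vanishing on all of $\PP^{N-1}(\Fq)$ must be, for $q\ge3$, the zero form, we get $F|_H=0$, hence $H\subset\Q$, contradicting absolute irreducibility. For $q=2$, the forms vanishing on $\PP^{N-1}(\F_2)$ are spanned by $X_iX_j(X_i+X_j)$-type relations; in degree $2$ these are exactly the combinations of $X_i^2+X_i X_j\cdot 0$, i.e.\ there are none except those in the ideal of the Frobenius relations $X_i^2X_j=X_iX_j^2$, which have degree $3$. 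So again $F|_H=0$, i.e.\ a degree-$2$ form vanishing on all of $\PP^{N-1}(\Fq)$ is zero (the vanishing ideal is generated in degree $q+1\ge3$).

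\textbf{Step 2: type (b) against type (b).} If both quadrics are of type (b), their rational point sets are unions of hyperplane point sets. A rational hyperplane's point set is determined by the hyperplane, and if $H(\Fq)\subset H_1(\Fq)\cup H_2(\Fq)$, then $H(\Fq)$ is covered by the two subsets $(H\cap H_i)(\Fq)$. These are either all of $H(\Fq)$ or of size $p_{N-2}$, and $2p_{N-2}<p_{N-1}$. Hence $H\in\{H_1,H_2\}$, so the two unions coincide, giving $\Q=\Q'$. For double hyperplanes, a double hyperplane and its reduced hyperplane are the same variety set-theoretically, which is the convention under which the statement is read.

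\textbf{Step 3: type (a) against type (a).} This is Theorem~\ref{thm:main} applied to both inclusions $\Q(\Fq)\subset\Q'(\Fq)$ and $\Q'(\Fq)\subset\Q(\Fq)$. The only exception, $q=2$ with $\Q$ elliptic and $\Q'$ hyperbolic, cannot hold in both directions simultaneously. It is also ruled out by counting: the two sets would have different cardinalities.

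The main obstacle is the $q=2$ bookkeeping in Step 1, namely that no nonzero quadratic form vanishes on all of $\PP^{N-1}(\F_2)$. I would settle it cleanly by restricting the form to lines: a form vanishing at the $3$ points of every rational line restricts to $aX^2+bXY+cY^2$ vanishing at $(1{:}0)$, $(0{:}1)$, $(1{:}1)$, which forces $a=c=0$ and $b=0$.
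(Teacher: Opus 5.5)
Your argument for the uniqueness claim is correct and follows the paper's skeleton. You split quadrics by type. For two absolutely irreducible quadrics you apply Theorem~\ref{thm:main} in both directions, and the $q=2$ exception is excluded because the two point sets have different sizes.

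The difference lies in the reducible cases. The paper separates unions of distinct hyperplanes using the equality case of Serre's bound (Theorem~\ref{thm:Serre}), and double hyperplanes using Lemma~\ref{lem:Qhyperplane}. You use a single criterion instead: an absolutely irreducible quadric cannot contain all rational points of a rational hyperplane. This is exactly Corollary~\ref{cor:linear}, which you could simply cite. Your restriction-to-lines argument proves it elementarily, without Serre's bound, and treats double and distinct hyperplanes uniformly. You also make explicit, via $2p_{N-2}<p_{N-1}$, the uniqueness among unions of rational hyperplanes, which the paper only asserts.

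Two presentational points. Drop the garbled sentence about ``$X_i^2+X_iX_j\cdot 0$'' in Step~1; the line-restriction argument at the end is the actual proof. The remark about a ``convention'' for double hyperplanes is unnecessary, since Step~2 already recovers the set of components and hence the quadric.

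Two small corrections. First, you never compare a quadric of type (a) or (b) with one of type (c), which the uniqueness claim also requires. This follows at once by counting. A type (c) quadric has exactly $p_{N-2}$ rational points. By Proposition~\ref{prop:NoOfPoints}, any quadric of type (a) or (b) has strictly more: an elliptic quadric of rank $2s\geq 4$ has $p_{N-2}+q^{N-1}-q^{N-s}$. The paper leaves this implicit as well.

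Second, your justification of the exception is inaccurate. Suppose a type (c) quadric $H\cup H^\sigma$ has rational points $\Lambda(\Fq)$. Then $H\cap H^\sigma=\Lambda$, so $H$ corresponds to a point of $\PP^1(\F_{q^2})\setminus\PP^1(\Fq)$ in the pencil of hyperplanes through $\Lambda$. This gives exactly $(q^2-q)/2$ such quadrics: finitely many, not infinitely many, and only one when $q=2$. So for $q=2$ a type (c) quadric is in fact determined by its rational points, and the exception is genuine only for $q\geq 3$. The paper's ``more than one pair'' remark contains the same slip. This second point does not affect the uniqueness statement the corollary asserts.
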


We emphasize that our approach handles both the odd and even characteristic cases of $\Fq$ in a unified way.

It is worth noting that an analogous result is easily seen to hold for linear spaces, \emph{i.e.} if two linear spaces $L_1$ and $L_2$ in $\PP^N$ over $\Fq$ are such that $L_1(\Fq)\subset L_2(\Fq)$, then $L_1\subset L_2$. This indicates that quadrics mirror some structural aspects of linear spaces. 
One possible approach toward proving Theorem~\ref{thm:main} is to invoke the Lang–Weil bound~\cite{lang1954number} which provides estimates on the number of $\Fq$--rational points of absolutely irreducible varieties. In this context, one could apply it for both $\Q$ and $\Q\cap \Q'$, whenever absolutely irreducible, to compare their $\Fq$--rational points. However, such bounds are sharp only for large $q$, and as $N$ grows, $q$ must grow rapidly for the bound to provide a contradiction.
Therefore, toward finer results over small fields we had to adopt another approach by combining together geometric and combinatorial techniques. Our main approach relies on induction on the ambient dimension $N$ and how rank behaves on hyperplane sections of quadrics, which provide an elegant framework. 

Our main theorem for minimal codewords of $\prm_q(2,N)$ is as follows.

\begin{restatable}{thm}{mincodewords}\label{thm:intro:mincodewords}
  The minimal codewords of the codes $\prm_q(2,N)$ are exactly the codewords $c=\Ev(F)$ where $F$ has one of the following forms: 
  	\begin{enumerate}
  			\item\label{item:mincodewords1} $L_1 L_2$  where $L_1$ and $L_2$ are $\Fq$-linearly independent homogeneous linear forms in $\Fq[X_0,\ldots,X_N]$; 
  			\item\label{item:mincodewords2} $F$ is an absolutely irreducible quadratic form in $\Fq[X_0,\ldots,X_N]$ excluding the forms of rank $3$ if $q\leq 3$ and of rank $4$ when elliptic if $q=2$.
	\end{enumerate}   
\end{restatable}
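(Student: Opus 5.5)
The plan is to translate minimality of $c=\Ev(F)$ into a maximality statement about zero sets. The support of $\Ev(F)$ is $\PP^N(\Fq)\setminus Z(F)(\Fq)$. Hence $c$ is minimal exactly when there is no nonzero $G$ with $\Ev(G)\neq 0$, $\Ev(G)$ not a scalar multiple of $\Ev(F)$, and $Z(F)(\Fq)\subset Z(G)(\Fq)$. Equivalently, $Z(F)(\Fq)$ must be maximal under inclusion among the rational point sets of quadrics $\neq \PP^N(\Fq)$, and must determine $\Ev(F)$ up to a scalar. Throughout I would use $Z(F)(\Fq)$ in place of $\Q(\Fq)$.

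I would then run through the standard classification of quadratic forms over $\Fq$. These are: products of two independent linear forms; squares of linear forms (up to the kernel of $\Ev$); irreducible forms that are not absolutely irreducible, i.e. $N_{\F_{q^2}/\Fq}(L)$; and absolutely irreducible forms of rank $\ge 3$.

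\emph{Non-minimal cases.}
\begin{itemize}
\item $F=L^2$: here $Z(F)(\Fq)$ is a hyperplane, strictly contained in $Z(LL')(\Fq)$ for any $L'$ independent of $L$. So it is not minimal.
\item Non-absolutely-irreducible $F$: its rational points form a codimension-$2$ linear space $\{L=\bar L=0\}$. This is contained in a rational hyperplane, so again $F$ is not minimal.
\item Rank $3$ with $q\le 3$ (a conic in three variables, i.e. a cone over a conic): the conic has $q+1\le 4$ points. Any $q+1$ points of $\PP^2(\Fq)$ with $q\le 3$ lie on a pair of lines, so the cone lies in the point set of a product $L_1L_2$. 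For $q=2$ this is three points, no three collinear; for $q=3$ it is four points in general position, which lie on two lines through pairs. Since $Z(F)$ is not itself a pair of hyperplanes, the inclusion is strict and minimality fails.
\item Elliptic rank $4$ with $q=2$: minimality fails by the exceptional case of Theorem~\ref{thm:main}.
\end{itemize}

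\emph{Minimal cases.} For $F=L_1L_2$, suppose $Z(F)(\Fq)\subset Z(G)(\Fq)$. Then $G$ vanishes on the rational points of both hyperplanes. A quadric containing all rational points of a hyperplane $H$ vanishes on $H$ as a form when restricted to $H$. This follows from the fact that a quadratic form vanishing on $\PP^{N-1}(\Fq)$ lies in the ideal generated by the $X_i^qX_j-X_iX_j^q$, and those have degree $>2$ unless $q=2$. The case $q=2$ needs a separate direct check, since the kernel of $\Ev$ on degree 2 is then nontrivial. It follows that $L_1\mid G$ and $L_2\mid G$ up to elements of $\ker\Ev$, so $\Ev(G)$ is a scalar multiple of $\Ev(F)$.

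For absolutely irreducible $F$ outside the exceptions, take $G$ with $Z(F)(\Fq)\subset Z(G)(\Fq)\neq\PP^N(\Fq)$. I would argue by the type of $G$:
\begin{itemize}
\item If $G$ is absolutely irreducible, Theorem~\ref{thm:main} gives $Z(G)=Z(F)$, hence $G=\lambda F$.
\item If $G=L_1L_2$, or $G$ is a square, or $G$ is a conjugate pair: then $Z(F)(\Fq)$ lies in a union of at most two hyperplanes. I would rule this out by a point count. For rank $r\ge 3$, the rational points of a quadric span $\PP^N$ and are not covered by two hyperplanes, except precisely in the small cases excluded above.
\end{itemize}
The counting step is to compare $\sharp Z(F)(\Fq)$ with the number of its points on a hyperplane section, which is itself a quadric in $\PP^{N-1}$ of rank $r$, $r-1$ or $r-2$ with known point counts. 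Taking the two hyperplanes to give the largest sections, their sum should fall short of $\sharp Z(F)(\Fq)$ except when $r=3$ with $q\le3$, or $r=4$ elliptic with $q\le 2$. The $r=4$ elliptic case with $q=2$ is the excluded one anyway.

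\emph{Main obstacle.} I expect this last counting inequality to be the main difficulty: checking, uniformly over parabolic, hyperbolic and elliptic types and over small $q$, that the rational points of an absolutely irreducible quadric are never contained in two hyperplanes outside the listed exceptions. I would attack it by induction on $N$, reducing via cones to the nondegenerate part. Then I would use the explicit counts $\sharp\Q(\Fq)$ for each type and the rank behaviour of hyperplane sections, the same tools the paper uses for Theorem~\ref{thm:main}. In low dimensions I would verify the inequality directly.
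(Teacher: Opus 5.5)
\textbf{Gap: the two-hyperplane step.} Your architecture matches the paper's. There the proof is simply Theorem~\ref{thm:irredandred} plus Theorem~\ref{thm:main}, with Lemma~\ref{lem:NonMaximalCases} for the exceptions and the remark that double hyperplanes and conjugate pairs have their rational points inside rational hyperplanes. The gap is in the step you chose to reprove rather than cite: that the rational points of a non-exceptional absolutely irreducible quadric do not lie in two hyperplanes.

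Your claimed inequality is false. You assert that the two largest hyperplane sections fall short of $\sharp\Q(\Fq)$ except for $r=3$, $q\le 3$ and elliptic $r=4$, $q=2$. But over $\F_2$, $\H_4\subset\PP^3$ has $9$ rational points, while a tangent plane meets it in two lines carrying $5$, and $5+5>9$. More generally, for smooth $\H_{2s}$ over $\F_2$, the point counts of two tangent hyperplane sections add up to $\sharp\H_{2s}(\F_2)+2^{s-1}-1$. Subtracting overlaps does not rescue it: for $\P_5\subset\PP^4$ over $\F_2$, counting alone cannot exclude two hyperbolic sections meeting in a conic, which gives $9+9-3=15=\sharp\P_5(\F_2)$. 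The reason is structural. $\sharp\Q(\Fq)\approx q^{N-1}$ while two sections contribute $\approx 2q^{N-2}$. For $q\ge 3$ the first term wins, but for $q=2$ the leading terms coincide. The outcome then hinges on lower-order terms and exact intersection configurations that a counting induction on $N$ does not control.

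\textbf{The missing idea.} It is the pencil argument behind Theorem~\ref{thm:irredandred}. Suppose $\Q(\Fq)\subset(\L_1\cup\L_2)(\Fq)$ and take a third hyperplane $\L_3\supset\L_1\cap\L_2$. Then $(\Q\cap\L_3)(\Fq)$ lies in the hyperplane $\L_1\cap\L_2$ of $\L_3$. So Lemma~\ref{lem:Qhyperplane} forces $\rk(\Q\cap\L_3)\le 2$, and Proposition~\ref{prop:HyperplaneCutRk} gives $\rk\Q\le 4$, uniformly in $q$. Ranks $3$ and $4$ are then reduced to $\PP^2$ and $\PP^3$ and settled as follows:
\begin{enumerate}
\item rank $3$: no three points of a conic are collinear;
\item elliptic rank $4$: $q^2+1>2(q+1)$ for $q\ge 3$;
\item hyperbolic rank $4$: each of the $q+1\ge 3$ pairwise skew lines of a ruling must lie in $\L_1$ or in $\L_2$ (Corollary~\ref{cor:linear}), so two of them would be coplanar.
\end{enumerate}
Since Theorem~\ref{thm:irredandred} precedes the statement, citing it closes your proof.

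\textbf{Two minor corrections.} First, $\Ev$ is injective on quadratic forms for every $q$: the generators $X_i^qX_j-X_iX_j^q$ have degree $q+1\ge 3$, including for $q=2$. So no separate $q=2$ check is needed in the $L_1L_2$ case. Second, the exception clause of Theorem~\ref{thm:main} does not assert that an inclusion actually occurs. For elliptic rank $4$ over $\F_2$ you need an explicit witness, such as $\Q(\F_2)\subset\V(X_0(X_2+X_3))(\F_2)$ for $\Q=\V(X_0^2+X_0X_1+X_1^2+X_2X_3)$, as in Lemma~\ref{lem:NonMaximalCases}.
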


Note that Case~(\ref{item:mincodewords1}) in the above theorem
corresponds exactly to the minimum weight codewords of $\prm_q(2,N)$,
see \cite[Lem.~2.3]{rolland2007number}.

As a final contribution, we count the number of
minimal codewords using the classification from
Theorem~\ref{thm:intro:mincodewords} and following Hirschfeld and
Thas~\cite[Thm.~1.44(ii)]{hirschfeld1991general} for the number of smooth quadrics in each orbit with respect to the action of
$\mathbf{PGL}_N(\Fq)$.

\subsection*{Outline of the article}
In Section~\ref{sec:setup}, we develop the necessary background by recalling known results needed in the sequel and introducing several new lemmas required for our main theorems. In Section~\ref{sec:max_quadrics}, we prove Theorem~\ref{thm:main}. Finally, we enumerate in Section~\ref{sec:minimal_codewords} the minimal codewords of $\prm_q(2,N)$. 

\section*{Acknowledgements}
The authors are partially funded by French \emph{Agence Nationale de la
Recherche} through the \emph{France 2030} project ANR-22-PETQ-0008 PQ-TLS and
the \emph{projet de recherche collaboratif} ANR-
21-CE39-0009-BARRACUDA, and by Horizon-Europe MSCA-DN project
ENCODE. The second author has received funding from the European Union's
Horizon 2020 research and innovation program under the Marie
Sk{\l}odowska-Curie grant agreement n$^\circ$101034255.

\section{Setup and Auxiliary Results}\label{sec:setup}
Let $q$ be a power of a prime $p$ and $\Fq$ be the finite field with $q$ elements.

\subsection{Projective varieties}
Let $\Fqbar$ denote an algebraic closure of $\Fq$. Let $N$ be a positive integer. For any field $K$, let $K[X_0,\ldots,X_N]$ denote the polynomial ring in $N+1$ variables defined over $K$. For each nonnegative integer $d$, we denote by ${K[X_0,\ldots,X_N]}_d$ the $K$-vector space consisting of all homogeneous polynomials of degree $d$ including the zero  polynomial. Let $\PP^N(\Fqbar)$ or $\PP^N$ for short, denote the $N$-dimensional projective space over $\Fqbar$, defined as $({\overline{\F}_q^{N+1}\setminus \{\bf{0}\}})/{\Fqbartimes}$.
For any $F\in \Fq[X_0,\ldots,X_N]$, let $\V(F)$ denote the (projective) variety defined by $F$ in $\PP^N$, \emph{i.e.} the set of zeros of $F$ in $\PP^N(\Fqbar)$.

Take note that, in the present article, despite studying quadrics over
a non--algebraically closed field, we chose to keep a basic geometric
point of view and to avoid the formalism of schemes.  In short, any
geometric object: projective space, quadric hypersurfaces, linear
subspaces and any other variety, \textbf{should be understood as a subset of} $\PP^N(\Fqbar)$,
\emph{i.e.} as the whole set of $\Fqbar$--rational points satisfying
some polynomial equations. Hilbert's Nullstellensatz establishes the
one--to--one correspondence between such a geometric object and the
corresponding homogeneous radical ideal. A variety is said to be
\emph{defined over $\Fq$} if the corresponding ideal is generated by
elements of $\Fq[X_0, \dots, X_N]$ or equivalently, if the variety is
left globally invariant under the Galois action of
$\text{Gal}(\Fqbar/\Fq)$.

Denote by $\PP^N(\Fq)$ the set of $\Fq$-rational points of $\PP^N$ which has cardinality
\[p_N\eqdef \sharp \PP^N(\Fq) = q^N+\cdots+q+1.\]
Similarly, for any variety $\mathcal{X}$ in $\PP^N$, we denote by $\mathcal{X}(\Fq)$ the set of \emph{$\Fq$--rational points} of $\mathcal{X}$. Moreover, since we will only be concerned with the $\Fq$--rational points of a variety throughout this paper, we will refer to them as \emph{rational points}.

\subsection{Linear codes and projective Reed--Muller codes}\label{subsec:codes}

	An {\it $[n,k]_q$--(linear) code} is a $k$--dimensional $\Fq$--subspace of $\F^n_q$ and its elements are said to be the {\it codewords}. Let $C$ be an $[n,k]_q$-code. The {\it support and Hamming weight of a codeword $c=(c_1,\ldots,c_n)\in C$} are defined as
	\[
		\supp(c)\eqdef \{i\in \{1,\ldots,n\}: c_i\ne 0\} \quad \text{and} \quad \wH(c)\eqdef \sharp \supp(c).
	\]

A nonzero codeword $c$ of an $[n,k]_q$-code $C$ is said to be {\it minimal} if there does not exist any other nonzero codeword $c'$ such that $\supp(c')\subsetneq \supp(c)$.

\begin{defn}
	Let $d$ be a nonnegative integer. Choose the representatives of the rational points of $\PP^N$ in $\F^{N+1}_q$ with last nonzero coordinate $1$ and list them as $P_1,\ldots,P_{p_{{}_N}}$. Define the evaluation map 
\[
  \Ev : \map{{\Fq[X_0,\ldots,X_N]}_d}{\F^{^{p_{_N}}}_q}{F}{(F(P_1),\ldots,F(P_{p_{_{N}}})).}
\]
The image of $\Ev$ is a linear code called the \emph{projective Reed--Muller code of length $p_{{}_N}$ and of order $d$} and denoted by $\prm_q(d,N)$.
\end{defn} 

Following Manin and Vl\u{a}du\c{t}' works \cite{vladut1984linear},
Lachaud formally introduced and studied the parameters of projective Reed-Muller codes for $d<q$ in \cite{lachaud1986projective,lachaud1990parameters}. In particular, 
for $d<q$, the dimension of $\prm_q(d,N)$ is simply $m+d\choose d$ as $\Ev$ is injective and the minimum distance is due to Serre~\cite{serre1989lettrea}. For $d\geq q$, 
the dimension and minimum distance of $\prm_q(d,N)$ are known by S\o rensen~\cite{sorensen1991projective}. Several dimension formulas for general $d$ also appear in the literature. One may refer to article~\cite{ghorpade2023minimum} for a comprehensive exposition of the code parameters, including a discussion of the equivalence of these formulas.

It is evident that for $c_F\in \prm_q(d,N)$ where $F\in {\Fq[X_0,\ldots,X_N]}_d$, 
\[
\supp(c_F)=\{i\in \{1,\ldots,p_N\}: F(P_i)\ne 0\}.
\]
Switching to the complement subset, knowing the support is directly  equivalent to knowing the rational points on the variety defined by $F$,  \emph{i.e.}
\[\V(F)(\Fq)=\{P\in \PP^N(\Fq)~:~F(P)=0\}.\] 
Thus, to determine the minimal codewords of $\prm_q(2,N)$ we are looking for nonzero quadratic forms $F\in \Fq[X_0,\ldots,X_N]$ such that there is no other quadratic form $F'\in \Fq[X_0,\ldots,X_N]$ satisfying $\V(F)(\Fq)\subsetneq \V(F')(\Fq)$. That is, we look for quadrics whose sets of rational points are maximal with respect to the inclusion.

\subsection{Projective classification of quadrics in $\PP^N(\Fq)$}\label{subsec:Classes}

A \emph{quadric} in $\PP^N$ defined over $\Fq$ is the zero locus in $\PP^N$ of a quadratic form in $\Fq[X_0,\ldots,X_N]$. A quadric in $\PP^2$ is called a \emph{conic}.

  \medskip
  
\noindent \textbf{Caution.} In the present article, most of the
varieties we deal with are defined over $\Fq$. Therefore, whenever we
refer to a \emph{variety, quadric, conic} of any linear space, unless
otherwise specified they will be defined over $\Fq$.

\medskip

As a simple consequence of the Hilbert Nullstellensatz,
the quadratic forms in $\Fq[X_0,\ldots,X_N]$ up to multiplication by a nonzero scalar are in one--to--one correspondence with the quadrics in $\PP^N$.
We should however consider separately the case of
quadratic forms that are squares of linear forms,
the corresponding ideal being non radical. In this case,
the corresponding variety is a hyperplane that we will
refer to \emph{double hyperplane} since it is a hyperplane
``with multiplicity two''.

From now on, whenever convenient, we will talk in the language of quadrics in $\PP^N$ wherever convenient. The following classification of quadrics is folklore, and we include a proof for the sake of completeness. 

\begin{prop}\label{prop:ClassQuadrics}
  The quadrics in $\PP^N$ can be classified as follows:
    \begin{enumerate}
    \item\label{item:double_hyperplane} double hyperplanes (defined by $L^2$ where $L$ is a linear form). 
    \item\label{item:pair_of_hyperplanes} unions of distinct
      hyperplanes (defined by
      $L_1 L_2$ where $L_1, L_2$ are linearly independent linear forms).
    \item\label{item:conjugate_hyperplanes} $\Fq$-irreducible quadrics
      which are $\mathbb{F}_{q^2}$-reducible (defined by
      $L_1L_2$ where $L_1, L_2$ are linear forms with coefficients in
      $\F_{q^2}$ and conjugated by the Frobenius).
\item\label{item:absolutely_irreducible} absolutely irreducible quadrics. \end{enumerate}
\end{prop}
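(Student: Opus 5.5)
The plan is a case split on whether the quadratic form defining the quadric factors over $\Fqbar$. Let $F\in{\Fq[X_0,\ldots,X_N]}_2$ be a nonzero quadratic form with $\Q=\V(F)$. If $F$ is irreducible in $\Fqbar[X_0,\ldots,X_N]$, then $\Q$ is absolutely irreducible and we are in case~(\ref{item:absolutely_irreducible}). So assume $F$ is reducible over $\Fqbar$. Since $F$ has degree $2$, it factors as $F=L_1L_2$ with $L_1,L_2$ linear forms over $\Fqbar$, and by unique factorisation this factorisation is unique up to scalars and order.

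The key step is to show that $L_1,L_2$ can be chosen with coefficients in $\F_{q^2}$, and that Frobenius either fixes each factor or swaps them. Let $\sigma$ be the $q$-power Frobenius, acting on coefficients. Since $F$ has coefficients in $\Fq$, we have $F=F^\sigma=L_1^\sigma L_2^\sigma$. Unique factorisation then forces $\sigma$ to permute the lines $\Fqbar L_1$ and $\Fqbar L_2$.

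\begin{itemize}
\item[(a)] \emph{$L_1$ and $L_2$ proportional.} Then $F=cL^2$ for some linear form $L$. Normalise $L$ so that one of its nonzero coefficients equals $1$. Comparing $F$ with $F^\sigma$ shows $L^\sigma=\pm L$, and the normalised coefficient forces the sign to be $+$. Hence $L$ is defined over $\Fq$, and then $c\in\Fq$. This is case~(\ref{item:double_hyperplane}).
\item[(b)] \emph{Non-proportional factors, each fixed by $\sigma$ up to scalar.} Normalise each $L_i$ to have a coefficient equal to $1$. Then $L_i^\sigma=L_i$, so each $L_i$ is defined over $\Fq$, and the leftover constant lies in $\Fq$ and can be absorbed into $L_1$. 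This is case~(\ref{item:pair_of_hyperplanes}).
\item[(c)] \emph{Non-proportional factors swapped by $\sigma$.} Normalise $L_1$ with a coefficient equal to $1$ and set $L_2'=L_1^\sigma$. Then $L_2'$ is proportional to $L_2$, and $L_1^{\sigma^2}$ is proportional to $L_1$ with the normalised coefficient equal to $1$, so $L_1^{\sigma^2}=L_1$. Thus $L_1$ has coefficients in $\F_{q^2}$, and $F=c\,L_1L_1^\sigma$ for some constant $c$. Comparing with $F^\sigma$ gives $c\in\Fq$, and $c$ can be absorbed into $L_1$ after extending scalars. Such an $F$ has no factorisation over $\Fq$, since $L_1$ is not proportional to a form over $\Fq$. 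This is case~(\ref{item:conjugate_hyperplanes}).
\end{itemize}

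Finally, the four cases are mutually exclusive: absolute irreducibility, being a square, splitting over $\Fq$, and being $\Fq$-irreducible but reducible are distinct properties of $F$ up to scalar.

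The only mildly delicate point is the descent in (a)–(c), namely deducing rationality from invariance up to scalar. Normalising a coefficient to $1$ handles this, or one can invoke Hilbert 90. Since quadrics correspond to forms up to scalar, this classification of forms is the classification of quadrics.
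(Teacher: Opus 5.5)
Your proof is correct and follows the same route as the paper. Factor $F$ over $\Fqbar$, use $F=F^{[q]}$ and unique factorisation to see that Frobenius either fixes each linear factor up to a scalar or swaps the two, and read off the first three classes; your normalisation of one coefficient to $1$ makes rigorous the descent steps the paper only states up to scalars (for instance, its claim $L_i^{[q^2]}=L_i$ holds only after such a normalisation).
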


\begin{proof}
	Let $\Q$ be a quadric in $\PP^N$ defined by $F\in {\Fq[X_0,\ldots,X_N]}_2$. 
	If $\Q$ is an $\Fqbar$--irreducible quadric, it is in class (\ref{item:absolutely_irreducible}). Thus, we may assume that $\Q$ is $\Fqbar$--reducible, which implies that $F=L_1 L_2$ for $L_1,L_2\in {\Fqbar[X_0,\ldots,X_N]}_1$. For any $G\in \Fqbar[X_0, \dots, X_N]$, we denote by $G^{[q]}$ the form obtained from $G$ by raising all the coefficients to the $q$--th power. Since $F$ has coefficients
    in $\Fq$, we obtain
    \[
    F=F^{[q]}=L_1^{[q]}\cdot L_2^{[q]}.
    \]
    Since $\Fqbar[X_0,\ldots,X_N]$ is a unique factorisation domain, either $L_1^{[q]}=\alpha L_1$ and $L_2^{[q]}=(1/\alpha)L_2$ or $L_1^{[q]}=\beta L_2$ and  $L_2^{[q]}=(1/\beta)L_1$ for $\alpha,\beta \in \Fqbartimes$ respectively. In the first case, $L_1,L_2$ are $\Fqbar$-multiples in $\Fq[X_0,\ldots,X_N]$, leading to classes (\ref{item:double_hyperplane}) or (\ref{item:pair_of_hyperplanes}) depending on whether $L_1$ divides $L_2$ or not. In the second case, $L_i^{[q^2]}=L_i$ for $i=1,2$, \emph{i.e.} $L_1,L_2\in {\F_{q^2}[X_0,\ldots,X_N]}_1 \setminus {\F_{q}[X_0,\ldots,X_N]}_1$, yielding class (\ref{item:conjugate_hyperplanes}).
\end{proof}

\subsection{Rank and classification}

\begin{defn}\label{def:rank}
  The \emph{rank of a quadratic form} $F$ is the least number of
  variables needed in its expression, among all the possible
  $\Fq$--linear changes of variables of $\Fq[X_0,\ldots,X_N]$. The
  \emph{rank of a quadric} $\Q \subset \PP^N$ is that of any of its defining
  polynomials in $\Fq[X_0,\ldots,X_N]_2$.
\end{defn}

Most of the quantities of interest: rank, number of rational points,
\emph{etc.}  are invariant by linear changes of variables, \emph{i.e.}
under the action of the projective general linear group
$\mathbf{PGL}_N(\Fq)$ into \emph{canonical forms}.  The classes of
absolutely irreducible quadrics under this action have canonical
representatives described in the statement below.

\begin{thm}[{Hirschfeld and Thas~\cite{hirschfeld1991general}, Section 1.1}]
\label{thm:IrrClassQuadrics}
Let $\Q$ be an absolutely irreducible quadric in $\PP^N$. Then it has one of the following canonical
forms with respect to the action of $\mathbf{PGL}_N(\Fq)$.
    \begin{enumerate}
        \item $\P_{r}=\V(X_0^2+X_1 X_2+X_3 X_4\cdots+ X_{r-2}X_{r-1})$ for $r\geq 3$ odd
    \item For $r\geq 4$ even,  
        \begin{enumerate}
            \item $\H_r=\V(X_0X_1+X_2 X_3+\cdots+X_{r-2} X_{r-1})$
            \item $\E_r=\V(f(X_0,X_1)+X_2 X_3+\cdots+X_{r-2} X_{r-1})$
              where $f$ is an irreducible binary quadratic form, in particular, $f$ can be written as $X_0^2+dX_1^2$ where $d$ is a non--square when $p\ne 2$ and as $X_0^2+X_0 X_1+ d X_1^2$ where ${\rm Tr}_{\Fq/\F_{p}}(d)=1$ when $p=2$. 
        \end{enumerate}
      \end{enumerate}
      The quadrics $\P_r$, $\H_r$ and $\E_r$ are called the \emph{parabolic}, \emph{hyperbolic}, and \emph{elliptic}, respectively. 
      Moreover, the integer $r$ is the rank of $\Q$.
\end{thm}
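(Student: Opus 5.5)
The plan is to prove the classification by repeatedly splitting off hyperbolic planes. The paper itself cites Hirschfeld and Thas, so this is a sketch of the standard argument.

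\emph{Reduction to a nondegenerate form.} Let $F\in\Fq[X_0,\ldots,X_N]_2$ define $\Q$, and let $B(x,y)=F(x+y)-F(x)-F(y)$ be its polar bilinear form. After an $\Fq$--linear change of variables, $F$ involves only $X_0,\ldots,X_{r-1}$, where $r$ is the rank of Definition~\ref{def:rank}. Minimality of $r$ means that no nonzero vector $v$ of the radical $\rad B$ (restricted to these variables) satisfies $F(v)=0$, since such a $v$ could be moved to a coordinate axis and that variable dropped.
\begin{itemize}
\item In odd characteristic this says $B$ is nondegenerate on $\Fq^r$.
\item In characteristic $2$, $B$ is alternating, so $\dim\rad B\le 1$, with $F\neq 0$ on the radical when it is nonzero. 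This is consistent with the parity of $r$.
\end{itemize}
Absolute irreducibility forces $r\ge 3$: a form in at most two variables factors over $\F_{q^2}$, which would put $\Q$ in one of classes (\ref{item:double_hyperplane})--(\ref{item:conjugate_hyperplanes}) of Proposition~\ref{prop:ClassQuadrics}.

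\emph{Splitting off hyperbolic planes.} By the Chevalley--Warning theorem, a quadratic form in at least $3$ variables over $\Fq$ has a nontrivial zero $v$. Nondegeneracy lets us choose $w$ with $B(v,w)=1$. Replacing $w$ by $w-F(w)v$ gives $F(w)=0$, so $\langle v,w\rangle$ is a hyperbolic plane on which $F$ restricts to $X_0X_1$ in the basis $v,w$. Its $B$-orthogonal complement $W$ is a complement to $\langle v,w\rangle$, since $B$ is nondegenerate on the plane. Hence in adapted coordinates $F=X_0X_1+F'$, with $F'$ a form on $W$ satisfying the same nondegeneracy condition. Iterating while at least $3$ variables remain, we reach
\[
F = X_{r-2}X_{r-1}+\cdots + G,
\]
where $G$ has $0$, $1$ or $2$ variables. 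In the $2$-variable case we split once more if $G$ is isotropic, and we relabel the variables so that $G$ comes first.

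\emph{Normalising the remainder.} There are three possibilities for $G$.
\begin{itemize}
\item No variables: we obtain $\H_r$.
\item One variable, $G=aX_0^2$: multiplying $F$ by $a^{-1}$ (allowed projectively) and absorbing the factor $a^{-1}$ into one variable of each hyperbolic pair gives $\P_r$ with $r$ odd.
\item Two variables with $G$ anisotropic: $G$ is an irreducible binary form. In odd characteristic, diagonalise it and scale so that $G=X_0^2+dX_1^2$ with $-d$ a non-square. Since any element of $\Fq$ is a sum of two squares, the value $1$ is represented, and the discriminant class then fixes $d$ up to squares. In characteristic $2$, scale so that $G$ represents $1$ and $B(e_0,e_1)=1$. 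This gives $X_0^2+X_0X_1+dX_1^2$, and irreducibility of $T^2+T+d$ is equivalent to ${\rm Tr}_{\Fq/\F_2}(d)=1$; a substitution $X_1\mapsto X_1+cX_0$ moves $d$ within its trace class. This yields $\E_r$.
\end{itemize}

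\emph{Uniqueness and the expected obstacle.} To see that $r$ is the rank and that $\H_r\neq\E_r$, note that the number of variables produced is exactly the minimal one by construction. The point counts of $\H_r$ and $\E_r$ over $\Fq$ differ: project from the hyperbolic pairs and count inductively using the binary factor. I expect the main obstacle to be characteristic $2$. There the nondegeneracy notion must be handled via the quadratic form rather than $B$ alone, the orthogonal complement argument needs the odd-rank case with a one-dimensional radical, and the trace normalisation of the anisotropic binary form must be carried out.
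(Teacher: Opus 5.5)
The paper does not prove this statement. It quotes Hirschfeld--Thas, and Remark~\ref{rmk:charnot2matrix} only points to diagonalising the symmetric matrix of $F$ in odd characteristic, deferring characteristic $2$ to Dickson. Your proposal takes a different, characteristic-free route, a Witt-type decomposition, and it is correct:
\begin{enumerate}
\item minimality of the rank yields the nondegeneracy condition;
\item Chevalley--Warning supplies isotropic vectors;
\item hyperbolic planes are split off until an anisotropic remainder of dimension at most $2$ is left.
\end{enumerate}
Compared with diagonalisation, your route treats $p=2$ and $p\neq 2$ in one argument. It also lands directly on the $X_{2i}X_{2i+1}$ shape of $\P_r$, $\H_r$, $\E_r$. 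A diagonal form, by contrast, must afterwards be rewritten, and the projective identification of discriminant classes must be tracked by the parity of $r$. The paper's remark in fact has the two parities the wrong way round: projectively there is one class for $r$ odd and two for $r$ even. The characteristic-$2$ obstacles you anticipate at the end are already handled by your own setup. A nonzero isotropic $v$ cannot lie in $\rad B$, by the minimality condition. Moreover, $\rad B$ is contained in every successive orthogonal complement, so it passes unchanged to the final remainder.

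A few local repairs are needed.

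(1) ``$B$ is alternating, so $\dim \rad B\le 1$'' is not the right reason. On $\rad B$ the form $F$ is additive with $F(\lambda v)=\lambda^2F(v)$. So $v\mapsto F(v)^{1/2}$ is an $\Fq$-linear form on $\rad B$ (as $\Fq$ is perfect), and its kernel is zero by minimality. You genuinely need this at the end: when $p=2$ it guarantees that $B$ is nonzero on the anisotropic binary remainder, so that $e_1$ with $B(e_0,e_1)=1$ exists.

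(2) The substitution that moves $d$ is $X_0\mapsto X_0+cX_1$ (equivalently $e_1\mapsto e_1+ce_0$), which sends $d$ to $d+c^2+c$. The substitution $X_1\mapsto X_1+cX_0$ changes the coefficient of $X_0^2$ instead.

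(3) In odd characteristic, ``every element is a sum of two squares'' is beside the point. After diagonalising $G=aX_0^2+bX_1^2$, divide $F$ by $a$ and absorb $a^{-1}$ into the hyperbolic pairs, exactly as you did for $\P_r$.

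(4) Your condition that $-d$ be a non-square is the correct one. The statement's ``$d$ a non-square'' fails when $q\equiv 3 \pmod 4$: over $\F_3$, for instance, $X_0^2+2X_1^2=(X_0-X_1)(X_0+X_1)$.
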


Since the properties of quadrics we will use depend only on their
class modulo the action of $\mathbf{PGL}_N(\Fq)$, by abuse of notation
we allow ourselves to denote by $\P_r$, $\E_r$ and $\H_r$ any quadric
equivalent to $\P_r$, $\E_r$ and $\H_r$, respectively.

\begin{rmk}\label{rmk:CanonicalFormsRed}
  \rm{To complete the previous statement, forms $X_0^2$, $X_0X_1$, and $f(X_0, X_1)$, where $f \in \Fq[X_0,X_1]$ is irreducible
    yield respective canonical forms for the classes (\ref{item:double_hyperplane}), (\ref{item:pair_of_hyperplanes}), and (\ref{item:conjugate_hyperplanes}) in Proposition~\ref{prop:ClassQuadrics}.}
\end{rmk}

Finally, we can characterize quadrics according to their rank, as follows. 
\begin{prop}\label{prop:Rk}
	Let $\Q$ be a quadric in $\PP^N$ among the classes of Proposition~\ref{prop:ClassQuadrics}. 
	\begin{enumerate}
	\item \label{item:rkDoubleHyperplane}
	If $\Q$ is from class (\ref{item:double_hyperplane}), then $\rk \Q=1$;
	\item \label{item:rkHyperplanes}If $\Q$ is from class (\ref{item:pair_of_hyperplanes}) or
      (\ref{item:conjugate_hyperplanes}), then $\rk \Q=2$;
	\item \label{item:rkAbsIrr}If $\Q$ is from class (\ref{item:absolutely_irreducible}), then for $r\geq 3$ odd, $\rk \P_r=r$ and for $r\geq 4$ even, $\rk \E_r=\rk \H_r=r$.   
	\end{enumerate}
\end{prop}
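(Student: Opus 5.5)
The plan is to use the canonical forms from Proposition~\ref{prop:ClassQuadrics}, Remark~\ref{rmk:CanonicalFormsRed} and Theorem~\ref{thm:IrrClassQuadrics}, together with Definition~\ref{def:rank}. Rank is invariant under $\Fq$-linear changes of variables, by definition a minimum over all such changes. So it suffices to bound the rank of each canonical representative from above and from below.

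\textbf{Upper bounds.} These can be read off directly. The form $X_0^2$ involves one variable. The forms $X_0X_1$ and $f(X_0,X_1)$ involve two. The forms defining $\P_r$, $\H_r$ and $\E_r$ involve exactly $X_0,\ldots,X_{r-1}$, hence $r$ variables.

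\textbf{Lower bounds.} Here I would use a rank-invariant that cannot drop under linear changes of variables: the dimension of the \emph{vertex}, i.e. of the largest linear subspace $W$ of the dual coordinate space such that the form can be written without using $W$. Intrinsically, the minimal number of variables equals $N+1-\dim K$, where $K$ is the set of vectors $v\in\FqN$ satisfying $F(x+v)=F(x)$ for all $x$ (over $\Fqbar$). Equivalently, $K$ is the common kernel of the polar bilinear form $B_F(x,y)=F(x+y)-F(x)-F(y)$ intersected with the zeros of $F$. Indeed, if $F$ depends only on $\ell$ linear forms, their common kernel has dimension $N+1-\ell$ and lies in $K$. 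Conversely, completing a basis of $K$ lets one eliminate $\dim K$ variables.

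Computing $K$ for each canonical form is then routine:
\begin{itemize}
\item For $X_0X_1+\cdots+X_{r-2}X_{r-1}$, the polar form is nondegenerate on the first $r$ coordinates, so $K=\{x_0=\cdots=x_{r-1}=0\}$.
\item For $\E_r$, the polar form of $f$ is nondegenerate on the first two coordinates: in odd characteristic it is $2\mathrm{diag}(1,d)$, and for $p=2$ the coefficient of $X_0X_1$ is $1$.
\item For $\P_r$ with $p$ odd, the polar form is nondegenerate on $r$ coordinates.
\item For $\P_r$ with $p=2$, the radical of $B_F$ contains the $e_0$ direction. However, $F(e_0)=1\ne 0$, and $F(x+\lambda e_0)-F(x)=\lambda^2$, so $e_0\notin K$. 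Hence $K$ is again the coordinates beyond $r-1$.
\item For the degenerate classes, $X_0^2$ gives $K=\{x_0=0\}$. The forms $X_0X_1$ and $f$ give $K=\{x_0=x_1=0\}$, since $f$ is anisotropic.
\end{itemize}

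The main obstacle is characteristic $2$, where the polar bilinear form alone underestimates the rank for odd $r$ and for $X_0^2$. This is why $K$ must be defined via invariance $F(x+v)=F(x)$ rather than via the bilinear radical. With that definition, the checks in the cases above handle all characteristics uniformly, and together with the upper bounds they yield items (\ref{item:rkDoubleHyperplane})--(\ref{item:rkAbsIrr}).
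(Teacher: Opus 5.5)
Your proof is correct, but it takes a different route from the paper. The paper's proof is a one-line appeal to the canonical forms of Remark~\ref{rmk:CanonicalFormsRed} and Theorem~\ref{thm:IrrClassQuadrics}. For item~(\ref{item:rkAbsIrr}), the lower bound (that no $\Fq$-linear change of variables uses fewer than $r$ variables) is outsourced to the cited Hirschfeld--Thas classification, whose statement already asserts that $r$ is the rank. For items~(\ref{item:rkDoubleHyperplane})--(\ref{item:rkHyperplanes}) it is treated as evident.

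You instead certify the lower bound intrinsically, via the subspace $K=\{v : F(x+v)=F(x)\}$. This $K$ is exactly the set of rational vectors of the paper's $\rad F$, and your identity $\rk F=N+1-\dim K$ is the paper's Proposition~\ref{prop:rankRad}. The paper only reaches that identity later, through Proposition~\ref{prop:rankSing}. The proof of that proposition itself relies on a rank-$r$ quadric having a canonical form in exactly $X_0,\dots,X_{r-1}$, i.e.\ on the present statement. Your argument therefore gives a self-contained, characteristic-uniform foundation for both results, at the cost of a few routine radical computations. It also correctly isolates why the bilinear radical alone fails in characteristic~$2$.

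Two small remarks.
\begin{enumerate}
\item For this proposition you only need the inequality $\ell\ge N+1-\dim K$, since the upper bounds are read off the canonical forms. The converse ``completing a basis of $K$'' step is therefore not needed.
\item Your deduction $K=\{x_0=\cdots=x_{r-1}=0\}$ for $\P_r$ in characteristic~$2$ from $e_0\notin K$ uses that $K$ is a linear subspace lying between $W=\{x_0=\cdots=x_{r-1}=0\}$ and $W\oplus\langle e_0\rangle$. You should state that explicitly. Closure under addition follows directly from the translation-invariance definition. Closure under scaling follows from $F(x+\lambda v)-F(x)=\lambda B_F(x,v)+\lambda^2F(v)$.
\end{enumerate}
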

\begin{proof}
Follows from Remark~\ref{rmk:CanonicalFormsRed}, Theorem~\ref{thm:IrrClassQuadrics}. 
\end{proof}

\begin{rmk}\label{rmk:charnot2matrix}
  \rm{Suppose $p\ne 2$. Then, for any quadratic form
    \[F=\sum_{0\leq i\leq j\leq N} a_{ij} X_i X_j,\]
    we may write
    $F=X^\top A X$ where $X=(X_i)_{0\leq i\leq N}$ and
    $A=(A_{i,j})_{1\leq i,j \leq N}$ is a symmetric matrix
    defined by
	\[A_{i,j}=A_{j,i}=\begin{cases}
			a_{ij}/2 & \text{if } i\ne j\\
a_{ij} & \text{if }i=j.
		\end{cases}
      \]
    Furthermore, $A$ can be reduced to a diagonal matrix, in particular, \[F=(PX)^\top D (PX)\]
    for some $P\in \mathbf{PGL}_N(\Fq)$ where $D=diag(1,\ldots,1,\lambda,0,\ldots,0)$ for $\lambda=1$ or a non--square in $\Fq$. When considering
    quadrics, \emph{i.e.} quadratic forms up to multiplication by a nonzero scalar, one gets the following possible canonical forms: either $X_0^2+X_1^2+\cdots+X_{r-1}^2$ if $r$ is even or $X_0^2+X_1^2+\cdots+X_{r-2}^2+\lambda X_{r-1}^2$ where $\lambda$ is either $1$ or a non-square in $\Fq$ if $r$ is odd. Finally, observe that the rank of the quadratic form is nothing but that of $A$.

    Note that in even characteristic, these forms are $\Fqbar$--reducible and hence cannot describe absolutely irreducible quadrics. Dickson~\cite{dickson1899determination} addressed this question for even characteristic resulting in the canonical forms mentioned in Theorem~\ref{thm:IrrClassQuadrics}. To the best of our knowledge, Hirschfeld~\cite{hirschfeld1998projective} rewrote the canonical forms in odd characteristic to unify the forms for both characteristics.}
\end{rmk}

\subsection{Singular locus and cone structure}
For any quadric $\Q$, we denote its singular locus by $\sing \Q$.
The rank of a quadric can be interpreted in terms of the dimension of its singular locus, as follows.

\begin{prop}\label{prop:rankSing}
  Let $F$ be a quadratic form in $\Fq[X_0,\ldots,X_N]$ and let $\Q=\V(F)$ be the corresponding quadric in $\PP^N$. Then, the
  singular locus of $\Q$ is a linear subspace and 
     \[
    \rk \Q=N-\dim \sing(\Q),
  \]
  with the convention that an empty singular locus has dimension $-1$.
\end{prop}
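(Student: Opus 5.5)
The plan is to reduce to canonical forms via a linear change of variables, compute the singular locus explicitly there, and then read off its dimension. Both $\sing\Q$ and the rank are invariant under $\mathbf{PGL}_N(\Fq)$. Indeed, a linear change of variables $X\mapsto PX$ maps the singular locus of $\V(F)$ onto that of $\V(F\circ P^{-1})$, since the partial derivatives transform by the chain rule. So by Proposition~\ref{prop:ClassQuadrics}, Remark~\ref{rmk:CanonicalFormsRed} and Theorem~\ref{thm:IrrClassQuadrics}, we may assume $F$ is one of the canonical forms $X_0^2$, $X_0X_1$, $f(X_0,X_1)$, $\P_r$, $\H_r$ or $\E_r$. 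In each case $F$ involves exactly the variables $X_0,\ldots,X_{r-1}$, where $r=\rk\Q$ by Proposition~\ref{prop:Rk}.

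I will take $\sing\Q$ to be the set of points $P\in\Q$ at which all partial derivatives $\partial F/\partial X_i$ vanish. For a nonreduced double hyperplane one should instead use the whole hyperplane, which is consistent with $\rk=1$, $\dim=N-1$.

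Since $F$ does not involve $X_r,\ldots,X_N$, the conditions only concern the first $r$ coordinates. The key claim is that the conditions $F=0$, $\partial_iF=0$ for $i<r$ force $X_0=\cdots=X_{r-1}=0$. Then $\sing\Q=\V(X_0,\ldots,X_{r-1})$, which is a linear subspace of dimension $N-r$ (empty when $r=N+1$, matching the convention $-1$).

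Checking the claim for each canonical form:
\begin{enumerate}
\item For $\H_r$, the derivatives of $X_{2i}X_{2i+1}$ give $X_{2i+1}$ and $X_{2i}$, so all of $X_0,\ldots,X_{r-1}$ vanish at once.
\item For $\E_r$ and for $f(X_0,X_1)$, the pairs $X_{2i}X_{2i+1}$ with $i\geq1$ are handled the same way. In odd characteristic, $f=X_0^2+dX_1^2$ gives $2X_0=2dX_1=0$. In characteristic $2$, $f=X_0^2+X_0X_1+dX_1^2$ has derivatives $X_1$ and $X_0$.
\item For $X_0X_1$ the same computation works directly.
\item For $\P_r$, the hyperbolic part forces $X_1=\cdots=X_{r-1}=0$. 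In odd characteristic $\partial_0F=2X_0$ gives $X_0=0$. In characteristic $2$ the derivative $\partial_0F$ vanishes identically, and here the equation $F=0$ itself, which reduces to $X_0^2=0$, forces $X_0=0$.
\end{enumerate}

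Conversely, every point of $\V(X_0,\ldots,X_{r-1})$ satisfies all these equations, which gives the reverse inclusion.

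The main obstacle is characteristic $2$, specifically the parabolic case and the double hyperplane. There the Jacobian criterion alone is insufficient, and one must include $F=0$ in the definition of singular points (or, for $X_0^2$, treat the variety as nonreduced). I expect to handle this by making the definition of $\sing\Q$ explicit as above and checking these two forms by hand.
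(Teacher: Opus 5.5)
Your proposal is correct and follows the paper's proof. Both arguments reduce $F$ to a canonical form by a linear change of variables and then compute partial derivatives to get $\sing\Q=\V(X_0,\ldots,X_{r-1})$.

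You are more thorough than the paper in two ways:
\begin{itemize}
\item you also treat the reducible classes $X_0^2$, $X_0X_1$, $f(X_0,X_1)$;
\item you note that in characteristic $2$ the parabolic case needs the equation $F=0$ itself, since $\partial F/\partial X_0$ vanishes identically there.
\end{itemize}
Your separate handling of $X_0^2$ is unnecessary: your definition via $F$ already gives the whole hyperplane.
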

\begin{proof}
  After a suitable linear change of coordinates, $F$ is one of the
  canonical forms of Theorem~\ref{thm:IrrClassQuadrics} and hence is
  expressed in the variables $X_0, \dots, X_{r-1}$, where
  $r = \rk \Q$. A calculation of partial derivatives shows that the
  singular locus is $\V (X_0, \dots, X_{r-1})$.
\end{proof}

\begin{prop}[{\cite[Section~1.1]{hirschfeld1991general}}]\label{prop:SingCone}
Any singular quadric $\Q$ in $\PP^N$ is a cone on its singular locus $\sing(\Q)$ and a smooth quadric in a linear subspace $\Pi\subset \PP^N$ such that $\Pi \cap \sing (\Q) = \emptyset$ and
  $\dim \Pi + \dim \sing (\Q) = N-1$. By \emph{cone} we mean that $\Q$ is the union of all the possible lines in
  $\PP^N$ joining a point of $\Q \cap \Pi$ and a point of $\sing \Q$.
\end{prop}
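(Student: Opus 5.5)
The plan is to reduce to canonical forms and read off the cone structure directly from the coordinates. Let $\Q=\V(F)$ be singular of rank $r$, so $r\leq N$. By Theorem~\ref{thm:IrrClassQuadrics} and Remark~\ref{rmk:CanonicalFormsRed}, after an $\Fq$--linear change of variables we may assume that $F$ is one of the canonical forms. Such a form involves only the variables $X_0,\dots,X_{r-1}$. By the proof of Proposition~\ref{prop:rankSing}, we then have $\sing\Q=\V(X_0,\dots,X_{r-1})$, a linear subspace of dimension $N-r$. Since a projective change of coordinates preserves lines, cones and smoothness, it suffices to prove the statement in these coordinates.

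Set $\Pi\eqdef\V(X_r,\dots,X_N)$, a linear subspace of dimension $r-1$. A point lying in both $\Pi$ and $\sing\Q$ would have all coordinates zero, so $\Pi\cap\sing\Q=\emptyset$. The dimension condition also holds, since $\dim\Pi+\dim\sing\Q=(r-1)+(N-r)=N-1$. Identify $\Pi$ with $\PP^{r-1}$ via the coordinates $X_0,\dots,X_{r-1}$. Then $\Q\cap\Pi$ is the quadric of $\PP^{r-1}$ defined by the same form $F$. This form has rank $r$ in $r$ variables, so Proposition~\ref{prop:rankSing}, applied inside $\PP^{r-1}$, gives $\dim\sing(\Q\cap\Pi)=r-1-r=-1$. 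Hence $\Q\cap\Pi$ is smooth.

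For the cone property, write any point of $\PP^N$ as $P=(a:b)$ with $a\in\Fqbar^{r}$ and $b\in\Fqbar^{N-r+1}$.
\begin{itemize}
\item If $a=0$, then $P\in\sing\Q\subset\Q$, and $P$ trivially lies on a line joining a point of $\sing\Q$ to a point of $\Q\cap\Pi$, provided $\Q\cap\Pi$ is nonempty.
\item If $b=0$, then $P\in\Pi$, and $P\in\Q$ if and only if $P\in\Q\cap\Pi$.
\item Otherwise, $P$ lies on the line through $A=(a:0)\in\Pi$ and $B=(0:b)\in\sing\Q$.
\end{itemize}
The key point is that $F$ depends only on the first $r$ coordinates. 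Hence $F(\lambda a,\mu b)=\lambda^2F(a)$ for all $\lambda,\mu$. Consequently $P\in\Q$ if and only if $A\in\Q\cap\Pi$, and in that case the whole line $AB$ is contained in $\Q$. This proves both inclusions: $\Q$ is exactly the union of the lines joining a point of $\Q\cap\Pi$ to a point of $\sing\Q$.

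The step I expect to need the most care is the identification of $\sing\Q$ with $\V(X_0,\dots,X_{r-1})$ in the degenerate cases, namely rank $1$, and characteristic $2$ for forms such as $X_0^2$. There, naive partial derivatives may vanish identically, and $\Q\cap\Pi$ may even be empty (for instance $X_0^2$ on $\PP^0$, or an irreducible binary form on $\PP^1$ over $\Fqbar$ after taking rational points). I would handle this by using the convention of Proposition~\ref{prop:rankSing}, which defines the singular locus through the canonical form. I would also note that when $\Q\cap\Pi$ is empty the cone statement reduces to $\Q=\sing\Q$, which is consistent with the computation above. Over $\Fqbar$, the set $\Q\cap\Pi$ is nonempty whenever $r\geq2$.
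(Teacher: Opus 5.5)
Your proof is correct. The paper gives no proof of its own here (it quotes Hirschfeld--Thas), but your argument is exactly the canonical-form computation the paper relies on around this statement. It extends the proof of Proposition~\ref{prop:rankSing}, and your choice $\sing\Q=\V(X_0,\dots,X_{r-1})$, $\Pi=\V(X_r,\dots,X_N)$ is the normalisation the paper adopts at the start of the proof of Theorem~\ref{thm:main}; the identity $F(\lambda a,\mu b)=\lambda^2F(a)$ then gives both inclusions at once.

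On your last paragraph, the characteristic~$2$ concern is harmless. A singular point must lie on $\Q$ as well as annihilate the partial derivatives (cf.\ Lemma~\ref{lem:SingRad}), so $\sing\V(X_0^2)=\V(X_0)$ in every characteristic, and the same holds for the parabolic forms; no extra convention is needed. The only genuinely degenerate case is $r=1$. There $\Q\cap\Pi=\emptyset$, and the literal statement needs the convention that a cone over an empty base is its vertex, as you note. For $r=2$ the base consists of two (possibly Frobenius-conjugate) points of $\Pi(\Fqbar)$, so it is nonempty.
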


Note that in the statement above, $\Pi$ is not unique but the
equivalence class of $\Q \cap \Pi$ with respect to the action of the
projective linear group does not depend on the choice of $\Pi$. In the reverse
direction, the following statement expresses defining polynomials from the knowledge
of the singular locus.

 \begin{prop}\label{prop:ToBaseQuadric}
	Let $F$ be a quadratic form in $\Fq[X_0,\ldots,X_N]$ and let $\Q$ be the quadric $\V(F)$ in $\PP^N$ of rank $r$.  If $\sing(\Q)=\V(X_0,\ldots,X_{r-1})$, then $F$ is a quadratic form in $\Fq[X_0,\ldots,X_{r-1}]$.  
\end{prop}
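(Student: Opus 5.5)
The statement to prove is Proposition~\ref{prop:ToBaseQuadric}: a quadric of rank $r$ whose singular locus is $\V(X_0,\ldots,X_{r-1})$ is defined by a form in $X_0,\ldots,X_{r-1}$ only. The plan is to first reduce to a normal form by an $\Fq$-linear change of variables that fixes the given singular locus, and then to show that $F$ is unchanged by translations along $\sing\Q$.

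For the reduction, Theorem~\ref{thm:IrrClassQuadrics} and Remark~\ref{rmk:CanonicalFormsRed} give $P\in\mathbf{GL}_{N+1}(\Fq)$ with $F(X)=G(PX)$, where $G$ is a canonical form in $Y_0,\ldots,Y_{r-1}$ only. For $G$, the singular locus is $\V(Y_0,\ldots,Y_{r-1})$, as in the proof of Proposition~\ref{prop:rankSing}. Hence $\sing\Q$ is the zero set of the linear forms $(PX)_0,\ldots,(PX)_{r-1}$.

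Next I identify these forms with the coordinates. Both $\{(PX)_i\}_{i<r}$ and $\{X_i\}_{i<r}$ are linearly independent families of $r$ forms, and they cut out the same codimension-$r$ linear subspace. A linear form vanishes on this subspace if and only if it lies in the span of each family. So the two spans in $\Fq[X]_1$ coincide, being the annihilator of the underlying vector space, which is defined over $\Fq$. Therefore $(PX)_i=\sum_{j<r} m_{ij}X_j$ with $m_{ij}\in\Fq$, and substituting into $G$ writes $F$ in $X_0,\ldots,X_{r-1}$ only.

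For a more intrinsic route, consider $W=\{v\in\FqbarN : F(x+v)=F(x)\ \forall x\}$. Expanding, $F(x+v)=F(x)+B(x,v)+F(v)$, where $B$ is the associated polar bilinear form. One checks on the canonical form that $W$ is exactly the cone over $\sing\Q$. Invariance of $F$ under translation by $e_r,\ldots,e_N$ then forces the coefficients of every monomial containing $X_j$, $j\ge r$, to vanish. Monomials $X_j^2$ are killed by $F(e_j)=0$, and mixed monomials $X_iX_j$ by $B(e_i,e_j)=0$.

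The main obstacle is characteristic $2$. There, points with $B(\cdot,v)=0$ alone do not determine singularity; one also needs $F(v)=0$, which matters for the parabolic case, where the radical of $B$ is strictly larger. The first approach avoids this issue entirely, since it relies only on the explicit singular locus of the canonical forms. I would therefore present that argument as the main proof and mention the second only as a remark.
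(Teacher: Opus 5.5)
Your proof is correct, but your main route differs from the paper's. The paper argues directly on coefficients, with no classification:
\begin{itemize}
\item since $F$ vanishes on $\sing\Q=\V(X_0,\ldots,X_{r-1})$, the polynomial $F(0,\ldots,0,X_r,\ldots,X_N)$ is zero;
\item for $i<r\le t$, the vanishing of $\partial F/\partial X_i$ at the singular point $e_t$ forces the coefficient of $X_iX_t$ to be zero.
\end{itemize}
That argument treats all characteristics uniformly. It also only needs the inclusion $\V(X_0,\ldots,X_{r-1})\subseteq\sing\Q$, not equality.

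Your main proof instead transports the problem to a canonical form via Theorem~\ref{thm:IrrClassQuadrics} and Remark~\ref{rmk:CanonicalFormsRed}. Note that these give $F=c\,G(PX)$ only up to a nonzero scalar $c$, which is harmless. You then transfer the explicit singular locus of $G$ back to $F$; this is justified by the chain rule $\nabla F(x)=P^{\top}(\nabla G)(Px)$ with $P$ invertible. Finally you compare annihilators, which amounts to $P_{ij}=0$ for $i<r\le j$. This is valid, but it relies on the full classification, including Dickson's even-characteristic normal forms and the case-by-case singular loci, for what is an elementary fact.

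Your ``intrinsic'' alternative is essentially the paper's proof. Your worry about characteristic $2$ there is unfounded. Taking $x=0$ in the definition of $W$ gives $F(v)=0$, and then $B(x,v)=0$ for all $x$. So $W=\rad F$ by definition, and Lemma~\ref{lem:SingRad} gives $\PP(W)=\sing\Q$ directly, with no check on canonical forms. The parabolic phenomenon in characteristic $2$, where $\rad B_F$ is strictly larger, is already absorbed by the condition $F(v)=0$. The conditions you then use, $F(e_j)=0$ and $B(e_i,e_j)=0$, are exactly the paper's vanishing of $F$ and of its partial derivatives on $\sing\Q$. That route is the simpler and more self-contained one to present.
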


\begin{proof}
	Write $F=\sum_{0\le i,j\le N} a_{ij} X_i X_j$. Since $F(u)=0$ for $u\in \sing(\Q)$,  we must have $F(0,\ldots,0,X_r,\ldots,X_N)=0$, \emph{i.e.} 
	\begin{equation}\label{eq:Fform}
		F=\sum_{0\le i,j<r} a_{ij} X_i X_j+\sum_{\substack{0\le i<r\\ N\ge j\ge r}} a_{ij} X_i X_j
	\end{equation}	
	Let $i_0<r$. For $t\ge r$, define $e_t\eqdef (0:\dots:0:1:0:\dots:0)\in \sing(\Q)$ where $1$ is at the $(t+1)$--th position. Then, 
	\[
		\frac{\partial F}{\partial X_{i^{}_0}}(e_t)=0 \quad \text{while} \quad \frac{\partial F}{\partial X_{i^{}_0}}=\sum_{0\le j\le N} a_{i^{}_0j} X_j.
	\]
	Thus, $a_{i^{}_0 t}=0$. In particular, $a_{ij}=0$ for all $i<r$ and $j\ge r$. Hence, by \eqref{eq:Fform}, we obtain 
	\[
		F=\sum_{0\le i,j<r} a_{ij} X_i X_j.
	\]
\end{proof}

\subsection{Projective index}
   The \emph{projective index} of a quadric $\Q$ in $\PP^N$ is the largest dimension of a projective linear subspace defined over $\Fq$ and contained in  $\Q$. It is denoted by $g(\Q)$.

It is clear that for any quadric of class (\ref{item:double_hyperplane}) or (\ref{item:pair_of_hyperplanes}) in Proposition~\ref{prop:ClassQuadrics} \emph{i.e.} double hyperplanes and unions of distinct hyperplanes, the projective index is $N-1$. For unions of conjugate hyperplanes (class (\ref{item:conjugate_hyperplanes})), the projective index is $N-2$. The following proposition describes the projective index of quadrics of higher rank.

\begin{prop}[{\cite[Cor. 1]{hirschfeld1998projective}}]\label{prop:ProjIndex}
For $1\le s\le \lfloor N/2\rfloor$, the projective indices of $\P_{2s+1}$, $\H_{2s}$ and $\E_{2s}$ in $\PP^N$ are $N-s-1$, $N-s$ and $N-s-1$, respectively.
\end{prop}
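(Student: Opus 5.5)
The plan is to reduce to the canonical forms of Theorem~\ref{thm:IrrClassQuadrics} and exhibit explicit $\Fq$-linear subspaces inside each quadric, using the cone structure of Proposition~\ref{prop:SingCone} to pass from the base quadric to the whole quadric. Write $r$ for the rank. Since $\sing(\Q)=\V(X_0,\ldots,X_{r-1})$ has dimension $N-r$ by Proposition~\ref{prop:rankSing}, it is a linear space defined over $\Fq$. The cone over it with base an $\Fq$-linear subspace $\Lambda\subset \Q\cap\Pi$ is again an $\Fq$-linear subspace of dimension $\dim\Lambda+(N-r)+1$, because the join of disjoint linear spaces adds dimensions plus one. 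So $g(\Q)=g(\Q\cap\Pi)+N-r+1$, where $\Q\cap\Pi$ is a smooth quadric of rank $r$ in $\PP^{r-1}$. This identity needs both inequalities. The inequality $\ge$ is the join construction just described. For $\le$, take a maximal $\Fq$-subspace $L\subset\Q$. By maximality it contains $\sing\Q$, since the join of $L$ with $\sing\Q$ stays inside $\Q$ by the cone property. Projecting $L$ from $\sing\Q$ onto $\Pi$ then gives an $\Fq$-subspace of $\Q\cap\Pi$ of dimension $\dim L-(N-r)-1$.

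It therefore suffices to compute the index of the smooth base quadrics $\P_{2s+1}\subset\PP^{2s}$, $\H_{2s}\subset\PP^{2s-1}$ and $\E_{2s}\subset\PP^{2s-1}$, and to show these are $s-1$, $s-1$ and $s-2$. With $r=2s+1$ or $r=2s$, the identity above then gives $N-s-1$, $N-s$ and $N-s-1$, as claimed.

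\textbf{Lower bounds.} These are explicit.
\begin{itemize}
\item For $\H_{2s}$, the space $\V(X_0,X_2,\ldots,X_{2s-2})$ has dimension $s-1$ and lies in the quadric.
\item For $\P_{2s+1}$, the space $\V(X_0,X_1,X_3,\ldots,X_{2s-1})$ has dimension $s-1$ and lies in the quadric.
\item For $\E_{2s}$, the space $\V(X_0,X_1,X_2,X_4,\ldots,X_{2s-2})$ has dimension $2s-1-(s+1)=s-2$ and lies in the quadric.
\end{itemize}

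\textbf{Upper bounds.} This is the main obstacle, and I would argue via the associated bilinear form, taking care in characteristic $2$. Let $B(x,y)=F(x+y)-F(x)-F(y)$. If $L$ is a totally singular subspace, meaning $F$ vanishes on its vector lift $\widetilde{L}$, then $\widetilde{L}\subset\widetilde{L}^{\perp_B}$. For $\H_{2s}$ and $\E_{2s}$, the form $B$ is nondegenerate on $\Fq^{2s}$ in every characteristic, because the canonical forms give a direct sum of hyperbolic planes, respectively an irreducible binary plane plus hyperbolic planes. Hence $2\dim\widetilde{L}\le 2s$, so the projective dimension is at most $s-1$. For the elliptic case, I would use a Witt-type argument. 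Given a totally singular $\widetilde{L}$ of dimension $s$, pick a complementary totally singular $\widetilde{L}'$ paired with it; this is possible by nondegeneracy. The whole space is then the orthogonal sum of $s$ hyperbolic planes, so $F$ is hyperbolic. This contradicts the non-equivalence of $\H_{2s}$ and $\E_{2s}$, which I would check by counting points: $\sharp\H_{2s}(\Fq)\ne\sharp\E_{2s}(\Fq)$, since the counts are $(q^{s}-1)(q^{s-1}+1)/(q-1)$ and $(q^{s}+1)(q^{s-1}-1)/(q-1)$ respectively.

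For $\P_{2s+1}$ in characteristic $2$, $B$ has a one-dimensional radical, the nucleus, on which $F$ is nonzero. So $\widetilde{L}$ meets the radical trivially, and the induced form on the quotient by the radical is nondegenerate of dimension $2s$. This again gives $\dim\widetilde{L}\le s$. In odd characteristic the same perp argument applies directly, since there $B$ is nondegenerate of dimension $2s+1$. Alternatively, one could bypass Witt's theorem entirely by induction: intersect a maximal subspace with the tangent hyperplane at one of its points, which gives the cone $\sing\cdot\Q_{r-2}$ of the same type, and use the cone identity to drop the rank by $2$ until the rank-$2$ or rank-$3$ base cases are reached.
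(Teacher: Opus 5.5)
\textbf{Verdict: your proof is correct, and it takes a genuinely different route from the paper.} The paper does not prove this proposition at all; it quotes it from Hirschfeld~\cite[Cor.~1]{hirschfeld1998projective}. Your argument makes it self-contained using only tools already in the paper:
\begin{itemize}
\item the cone structure of Proposition~\ref{prop:SingCone}, which gives your identity $g(\Q)=g(\Q\cap\Pi)+N-r+1$; this is sound, since $F(x+y)=F(x)+B_F(x,y)+F(y)$ shows that $\widetilde{L}+\rad F$ is totally singular whenever $\widetilde{L}$ is;
\item the polar form $B_F$ for the bound $\dim\widetilde{L}\le s$, including the nucleus for $\P_{2s+1}$ in characteristic $2$;
\item the point counts to separate $\H_{2s}$ from $\E_{2s}$. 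This is not circular, since those counts come from Primrose independently of the index.
\end{itemize}
Your route gives a uniform treatment of odd and even characteristic, in the spirit of the paper, at the cost of about a page; the citation gives brevity.

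If you write this out, three steps need one more line each.
\begin{itemize}
\item \textbf{Characteristic-$2$ parabolic case.} Say explicitly that it is $B_F$, not $F$, that descends to the quotient by $\rad B_F$ (indeed $F(x+ce_0)=F(x)+c^2$). The isotropy bound is applied to the induced nondegenerate alternating form.
\item \textbf{The paired complement $\widetilde{L}'$ in the elliptic case.} Take a basis $u_1,\dots,u_s$ of $\widetilde{L}$. By nondegeneracy, choose $v$ with $B(u_1,v)=1$ and $B(u_j,v)=0$ for $j\ge 2$. Replace $v$ by $v-F(v)u_1$; it is singular and keeps the same pairings because $\widetilde{L}$ is totally singular. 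Then pass to the orthogonal of $\langle u_1,v\rangle$ and induct. Everything stays over $\Fq$ in every characteristic, so you do get $F\sim X_0X_1+\cdots+X_{2s-2}X_{2s-1}$ over $\Fq$.
\item \textbf{Range of the point counts.} Proposition~\ref{prop:NoOfPoints} as stated only covers $s\le\lfloor N/2\rfloor$, which excludes the smooth case $N=2s-1$ that you need. Either cite the standard formulas you wrote, or compare the cones over the two forms in $\PP^{2s}$, where $s$ is in range.
\end{itemize}

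Your closing alternative via tangent sections also works, but it is not free. The claim that $\T_P\Q\cap\Q$ is a cone over a quadric of the same type needs the one-step splitting $F\sim XY+F_0$ at a rational point and the inequivalence of the hyperbolic and elliptic types. These are the same ingredients as your main argument.
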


\subsection{Number of rational points}

The number of rational points on quadrics of classes (\ref{item:double_hyperplane}),  (\ref{item:pair_of_hyperplanes}), and (\ref{item:conjugate_hyperplanes}) in Proposition~\ref{prop:ClassQuadrics} can be computed as follows. For class (\ref{item:double_hyperplane}), it equals the number of rational points on a hyperplane. For class (\ref{item:pair_of_hyperplanes}), it equals  the number of rational points on a union of two hyperplanes. The latter number is twice the number of rational points of a hyperplane, minus that of their intersection which is a linear space of codimension $2$. For class (\ref{item:conjugate_hyperplanes}), it equals the number of rational points on a codimension $2$ linear subspace, as the rational points are those on the intersection of the two $\F_{q^2}$-hyperplanes defining the quadric. These numbers are $p_{N-1}$, $2q^{N-1}+p_{N-2}$ and $p_{N-2}$, respectively.

By Primrose~\cite{primrose1951quadrics}, 
we also know the number of rational points on absolutely irreducible quadrics, which is as follows.

\begin{prop}\label{prop:NoOfPoints}
For $1\le s\le \lfloor N/2\rfloor$, the number of rational points on absolutely irreducible quadrics in $\PP^N$ are
    \begin{enumerate}
        \item $\displaystyle\sharp\P_{2s+1}(\Fq)= p_{N-1}$;
        \item $\displaystyle\sharp\H_{2s}(\Fq)= p_{N-1}+q^{N-s}$;
        \item $\displaystyle\sharp\E_{2s}(\Fq)= p_{N-1}-q^{N-s}$.
        \end{enumerate}
\end{prop}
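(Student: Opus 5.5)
The statement to prove is Proposition~\ref{prop:NoOfPoints}, the point counts of $\P_{2s+1}$, $\H_{2s}$ and $\E_{2s}$ in $\PP^N$. I would proceed in two steps. First I would count the points of the smooth base quadric in $\PP^{r-1}$, where $r=\rk\Q$. Then I would pass to $\PP^N$ using the cone structure of Proposition~\ref{prop:SingCone}.

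\textbf{Cone reduction.} Let $\Q\subset\PP^N$ have rank $r$. By Proposition~\ref{prop:rankSing}, $\sing\Q$ is a linear space of dimension $N-r$, defined over $\Fq$. Choose $\Pi\cong\PP^{r-1}$ defined over $\Fq$ and disjoint from it. Every rational point of $\PP^N\setminus\sing\Q$ lies on a unique line joining a point of $\sing\Q$ to a point of $\Pi$. More precisely, each such point lies in a unique $(N-r+1)$-space spanned by $\sing\Q$ and a point $x\in\Pi(\Fq)$. That space contributes $q^{N-r+1}$ rational points outside $\sing\Q$. Moreover, the whole space lies in $\Q$ if $x\in\Q$ and otherwise meets $\Q$ only in $\sing\Q$, since in the coordinates of Proposition~\ref{prop:ToBaseQuadric} the form $F$ depends only on $X_0,\dots,X_{r-1}$. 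Hence
\[
\sharp\Q(\Fq)=p_{N-r}+q^{N-r+1}\,\sharp(\Q\cap\Pi)(\Fq).
\]
This reduces everything to smooth quadrics in $\PP^{r-1}$.

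\textbf{Smooth case.} Use the canonical forms of Theorem~\ref{thm:IrrClassQuadrics}, and count affine zeros in $\F_q^{r}$ of $g(X_0,X_1)+X_2X_3+\cdots+X_{r-2}X_{r-1}$, where $g$ is $X_0X_1$, $f$, or $X_0^2$ (the last with one fewer variable). Let $N_k(c)$ denote the number of solutions of $Y_1Z_1+\dots+Y_kZ_k=c$ in $\F_q^{2k}$. Splitting on whether $Y_k=0$ gives a recursion with closed forms
\[
N_k(0)=q^{2k-1}+q^k-q^{k-1},\qquad N_k(c)=q^{2k-1}-q^{k-1}\ (c\ne0).
\]
Convolving with the value distribution of $g$ then gives the affine counts:
\begin{itemize}
\item $X_0X_1$ has $2q-1$ zeros and each nonzero value $q-1$ times.
\item The anisotropic $f$ has the single zero $0$ and each nonzero value $q+1$ times.
\item $X_0^2$ takes the value $0$ once. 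In odd characteristic it takes half the nonzero values twice each; in even characteristic it takes every value exactly once.
\end{itemize}
The resulting affine counts are $q^{2s-1}\pm(q^s-q^{s-1})$ for the hyperbolic and elliptic cases and $q^{2s}$ for the parabolic case. In the parabolic case the sum over the values of $X_0^2$ cancels the $c\ne0$ deviations, so the count does not depend on $p$. Subtracting $1$ and dividing by $q-1$ yields the projective counts
\[
\sharp\H_{2s}(\Fq)=p_{2s-2}+q^{s-1},\qquad \sharp\E_{2s}(\Fq)=p_{2s-2}-q^{s-1},\qquad \sharp\P_{2s+1}(\Fq)=p_{2s-1}.
\]

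\textbf{Assembly.} Substituting into the cone formula, with $r=2s$ or $r=2s+1$, uses the identity $p_{N-r}+q^{N-r+1}p_{r-2}=p_{N-1}$. This gives $p_{N-1}$ in the parabolic case and $p_{N-1}\pm q^{N-r+1}q^{s-1}=p_{N-1}\pm q^{N-s}$ in the hyperbolic and elliptic cases, as claimed.

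The main obstacle is carrying out the parabolic count uniformly in characteristic $2$ and in odd characteristic. The cleanest route is to note that in both cases $\sum_{c}\sharp\{X_0:X_0^2=-c\}\,N_k(c)$ equals $N_k(0)+(q-1)\cdot\big(q^{2k-1}-q^{k-1}\big)$ up to a term that sums to zero. Alternatively, one can avoid the issue by counting lines through the point $(1:0:\dots:0)$ on the hyperbolic quadric.
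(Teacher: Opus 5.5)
Your proof is correct, and it takes a different route from the paper: the paper gives no proof of this proposition and simply quotes Primrose~\cite{primrose1951quadrics}. You instead derive the counts from scratch in three steps:
\begin{itemize}
\item the cone formula $\sharp\Q(\Fq)=p_{N-r}+q^{N-r+1}\,\sharp(\Q\cap\Pi)(\Fq)$, which is immediate once $F$ involves only $X_0,\dots,X_{r-1}$;
\item the recursion $N_k(c)=qN_{k-1}(c)+(q-1)q^{2k-2}$ for hyperbolic sums, with its closed forms;
\item the identity $p_{N-r}+q^{N-r+1}p_{r-2}=p_{N-1}$.
\end{itemize}
These steps, and the final values, all check out, including the degenerate case $k=0$ needed for $\E_2$.

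Your route buys a short, self-contained and characteristic-free derivation that uses only the canonical forms of Theorem~\ref{thm:IrrClassQuadrics}. It also covers every rank $r\le N+1$. In particular it includes the smooth $\H_{N+1}$ and $\E_{N+1}$ for odd $N$. The paper actually uses these, for example $\sharp\E_4(\Fq)=q^2+1$ in $\PP^3$, even though the stated range $s\le\lfloor N/2\rfloor$ formally excludes them. The citation, by contrast, costs nothing and points to the classical source.

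The ``main obstacle'' you flag is not really one. Since $N_k(c)$ takes the same value for every $c\neq 0$, the convolution equals
\[
Z_g\,N_k(0)+(q^m-Z_g)\,(q^{2k-1}-q^{k-1}),
\]
where $Z_g$ is the number of zeros of $g$ in $\F_q^m$. So only $Z_g$ matters: $2q-1$ for $X_0X_1$, and $1$ for both $f$ and $X_0^2$. In particular, the parabolic count is exactly $N_k(0)+(q-1)(q^{2k-1}-q^{k-1})=q^{2k}$ in every characteristic, with no compensating term. The finer value distributions of $f$ and $X_0^2$ that you list are never needed.
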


The classical Serre bound~\cite{serre1989lettrea}, which gives the
maximum number of rational points on a hypersurface of degree
$d \leq q$, as well as the hypersurfaces that attain it, can be
recovered in the case of quadrics using
Proposition~\ref{prop:NoOfPoints}, and the previous discussion. We
recall it below.

\begin{thm}[Serre bound \cite{serre1989lettrea}, degree 2 case]\label{thm:Serre}
  The number of rational points on a quadric in $\PP^N$ is bounded by $2q^{N-1}+p_{N-2}$. The bound is attained precisely for the quadrics defined by union of distinct hyperplanes.
\end{thm}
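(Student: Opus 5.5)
The plan is a direct case analysis over the classification of Proposition~\ref{prop:ClassQuadrics}, using the point counts already collected in this section. We compare each count with the candidate bound $B \eqdef 2q^{N-1}+p_{N-2}$ and show that only class~(\ref{item:pair_of_hyperplanes}) reaches it. The one identity used throughout is $p_{N-1}=q^{N-1}+p_{N-2}$, which gives
\[
B = p_{N-1}+q^{N-1}.
\]

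First we treat the reducible and non-absolutely-irreducible quadrics.
\begin{enumerate}
\item A double hyperplane has $p_{N-1}$ rational points. This equals $B-q^{N-1}$, so it is strictly less than $B$.
\item A union of two distinct $\Fq$-hyperplanes has exactly $2q^{N-1}+p_{N-2}=B$ rational points, by the inclusion--exclusion count recalled above. So the bound is attained in this class.
\item A union of two Frobenius-conjugate $\F_{q^2}$-hyperplanes has $p_{N-2}$ rational points. This is strictly less than $p_{N-1}<B$.
\end{enumerate}

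Next we treat the absolutely irreducible quadrics, using Theorem~\ref{thm:IrrClassQuadrics} and Proposition~\ref{prop:NoOfPoints}, where the rank is $r=2s$ or $2s+1$ with $s\geq 1$.
\begin{enumerate}
\item For parabolic quadrics $\P_{2s+1}$ the count is $p_{N-1}<B$.
\item For elliptic quadrics $\E_{2s}$ the count is $p_{N-1}-q^{N-s}<B$.
\item Hyperbolic quadrics $\H_{2s}$ have $p_{N-1}+q^{N-s}$ rational points. Absolute irreducibility forces $r=2s\geq 4$, hence $s\geq 2$. Therefore $q^{N-s}\leq q^{N-2}<q^{N-1}$, and the count is strictly less than $B$.
\end{enumerate}

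Combining the two parts, every quadric has at most $B$ rational points, with equality exactly in class~(\ref{item:pair_of_hyperplanes}).

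The only step needing care is the hyperbolic case. There one must check that the rank constraint $r\geq 4$ from Theorem~\ref{thm:IrrClassQuadrics} excludes $s=1$. A rank-$2$ hyperbolic form $X_0X_1$ would give exactly $B$, but it is the reducible pair of hyperplanes, i.e.\ class~(\ref{item:pair_of_hyperplanes}). This is consistent with the equality characterization. Since every inequality above is strict except in class~(\ref{item:pair_of_hyperplanes}), the "precisely" part follows at once.
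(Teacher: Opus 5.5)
Your proof is correct and follows the route the paper indicates: the paper gives no detailed argument and just says the bound is recovered from the point counts for classes (1)--(3) and Proposition~\ref{prop:NoOfPoints}. Your case-by-case comparison with $B=p_{N-1}+q^{N-1}$, including the observation that absolute irreducibility forces $s\ge 2$ in the hyperbolic case, supplies exactly that argument.

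One small point: you apply Proposition~\ref{prop:NoOfPoints} to smooth $\H_{2s},\E_{2s}$ with $2s=N+1$ when $N$ is odd. Its stated range $s\le\lfloor N/2\rfloor$ omits these, but the same formulas hold there, and the paper uses them that way itself.
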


\begin{cor}\label{cor:linear}
If a quadric $\Q$ contains all the rational points of a linear subspace $\Pi$, it contains the linear subspace.
\end{cor}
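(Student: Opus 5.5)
Let $\Pi$ be a linear subspace, defined over $\Fq$, with $\Pi(\Fq)\subset \Q(\Fq)$, and let $F$ be a quadratic form defining $\Q$. The plan is to restrict $F$ to $\Pi$, apply the Serre bound (Theorem~\ref{thm:Serre}) inside $\Pi$, and conclude that the restriction vanishes identically.

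\textbf{Step 1: restriction.} Set $k=\dim\Pi$. The case $k=0$ is trivial, since $\Pi$ is then a single rational point. For $k\ge 1$, choose an $\Fq$-linear parametrisation of $\Pi$ by $\PP^k$. This gives a quadratic form $G=F|_\Pi\in \Fq[Y_0,\ldots,Y_k]_2$, and $\V(G)=\Q\cap\Pi$ after identifying $\Pi$ with $\PP^k$. By hypothesis, $G$ vanishes at every point of $\PP^k(\Fq)$.

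\textbf{Step 2: the case $G\neq 0$.} Suppose $G\neq 0$, so that $\V(G)$ is a quadric in $\PP^k$. Theorem~\ref{thm:Serre}, applied in $\PP^k$, gives
\[
\sharp\V(G)(\Fq)\le 2q^{k-1}+p_{k-2}.
\]
Here $p_{-1}=0$, so for $k=1$ the bound reads $2$; this is consistent, since a nonzero binary quadratic form has at most two projective zeros. We compare this bound with $p_k$:
\[
p_k-(2q^{k-1}+p_{k-2})=q^k+q^{k-1}-2q^{k-1}=q^{k-1}(q-1)>0.
\]
Hence $\V(G)(\Fq)\ne \PP^k(\Fq)$, which contradicts Step 1.

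\textbf{Step 3: conclusion.} Therefore $G=0$. This means $F$ vanishes identically on $\Pi$, not merely at its rational points, so every point of $\Pi$ over $\Fqbar$ lies in $\Q$, i.e.\ $\Pi\subset\Q$.

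The only real point is the strict inequality in Step 2. It holds because $q\ge 2$. It also requires the Serre bound in the form stated, without the restriction $d\le q$; for quadrics this is covered by the case analysis before Theorem~\ref{thm:Serre}.
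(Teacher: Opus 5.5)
Your proof is correct and takes the same approach as the paper. The paper also restricts to $\Pi$ and observes that if $\Pi\not\subset\Q$, the Serre bound applied to the quadric $\Q\cap\Pi$ in $\Pi\simeq\PP^k$ gives strictly fewer than $p_k$ rational points. Your computation $p_k-(2q^{k-1}+p_{k-2})=q^{k-1}(q-1)>0$ simply makes explicit the strict inequality the paper leaves implicit. Your caveat about the restriction $d\le q$ is unnecessary, since $d=2\le q$ always holds.
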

\begin{proof}
Assume $\Pi\not\subset \Q$, then  Theorem~\ref{thm:Serre} applied to $\Q\cap \Pi$ shows that $\sharp \Q\cap \Pi(\Fq)<\sharp \Pi(\Fq)$, a contradiction. 
\end{proof}

\subsection{Bilinear and quadratic forms}
    Let $V$ be a vector space over a field $K$. A \emph{bilinear form $B$ on $V$} is a bilinear map from $V\times V$ to $K$ and its \emph{radical} is 
    \[
    \rad B\eqdef \{v\in V: B(v,w)=0 \text{ for all }w\in V\}.
    \]

\begin{defn}
  The \emph{polarisation} of a quadratic form $F\in \Fq[X_0,\ldots,X_N]_2$ is the bilinear form:
    \[
    B_F(u,v)\eqdef \map{\FqbarN \times \FqbarN}{\Fqbar}{(u,v)}{F(u+v)-F(u)-F(v).}
    \]
\end{defn}

The polarisation $B_F$ of $F$ can be interpreted as the differential of $F$. Namely, for any $u \in \overline{\F}_q^{N+1}$, $B_F(u, \cdot)$ is the unique linear map satisfying
  \[\forall v \in \overline{\F}_q^{N+1},\quad F(u+v) = F(u) + B_F(u,v) + F(v)\]
  where the rightmost term is quadratic in $v$.

\begin{defn}    
Let $F$ be a quadratic form in $\Fq[X_0,\ldots,X_N]$ which polarizes to $B_F$. The \emph{radical of $F$} is the $\Fqbar$--space 
    \[
    \rad F\eqdef \{v\in \rad B_F: F(v)=0\}.
    \]
\end{defn}

Note that in odd characteristic we always have $\rad F = \rad B_F$ but this equality may fail in
characteristic $2$.

\begin{lem}\label{lem:SingRad}
      Let $F$ be a quadratic form in $\Fq[X_0,\ldots,X_N]$ and let $\Q$ be the quadric $\V(F)$ in $\PP^N$. Then the singular locus $\sing \Q$ satisfies
      \[\sing (\Q) = \PP (\rad (F)).\]
\end{lem}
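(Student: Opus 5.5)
The plan is to identify the singular locus directly from the definition. A point $[u]\in\PP^N$ is singular on $\Q=\V(F)$ exactly when $F(u)=0$ and all partial derivatives $\partial F/\partial X_i$ vanish at $u$. By the definition of $\rad F$, it therefore suffices to prove that for $u\in\FqbarN$,
\[
\frac{\partial F}{\partial X_i}(u)=0 \text{ for all } i \iff u\in\rad B_F .
\]
Indeed, the extra condition $F(u)=0$ appears on both sides: it is part of the definition of a point of $\Q$, and it is part of the definition of $\rad F$. Since $\rad F$ is defined by homogeneous conditions, it is stable under scalars. Hence it makes sense to projectivise it, and the equality of sets $\sing\Q=\PP(\rad F)$ then follows.

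The key step uses the remark after the definition of polarisation: $B_F(u,\cdot)$ is the differential of $F$ at $u$. Concretely, I will expand $F(u+tv)$ for a scalar $t$ and compare the coefficient of $t$. On one side this coefficient is $B_F(u,v)$, by bilinearity and the identity $F(u+tv)=F(u)+tB_F(u,v)+t^2F(v)$. On the other side, the first-order Taylor term of a polynomial is $\sum_i v_i\,\partial F/\partial X_i(u)$. This expansion is a formal identity for polynomials, valid in any characteristic. Alternatively, one writes $F=\sum_{i\le j}a_{ij}X_iX_j$ and computes explicitly
\[
B_F(u,v)=\sum_{i\le j}a_{ij}(u_iv_j+u_jv_i)=\sum_k v_k\frac{\partial F}{\partial X_k}(u).
\]
It follows that $B_F(u,\cdot)$ is the zero linear form if and only if the gradient of $F$ at $u$ vanishes, which gives the displayed equivalence.

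The only point requiring care is characteristic $2$. There the Euler relation $\sum u_i\,\partial F/\partial X_i(u)=2F(u)$ no longer forces $F(u)=0$ from the vanishing of the gradient. This is exactly why $\rad F$ is defined with the extra condition $F(u)=0$ rather than as $\rad B_F$, and why that condition must be kept explicitly in the singularity criterion. With that condition kept, the argument goes through uniformly in all characteristics, so I expect no real obstacle beyond this bookkeeping. As a consistency check, I would compare with the canonical forms of Theorem~\ref{thm:IrrClassQuadrics}: the conclusion should match Proposition~\ref{prop:rankSing}, since in those coordinates both sides equal $\V(X_0,\dots,X_{r-1})$.
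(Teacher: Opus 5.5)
Your proposal is correct and follows essentially the same route as the paper. Both arguments establish $B_F(u,v)=\sum_k v_k\,\frac{\partial F}{\partial X_k}(u)$, so that $u\in\rad B_F$ exactly when the gradient of $F$ vanishes at $u$, and then add the condition $F(u)=0$ on both sides; the paper checks the identity on the monomials $X_iX_j$ and extends by linearity (its displayed formula over-counts by a factor $2$ when $i=j$), whereas your Taylor-expansion or direct computation with $\sum_{i\le j}a_{ij}X_iX_j$ handles the diagonal terms correctly.
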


    \begin{proof}
      Let us first consider the case where $F = X_i X_j$ for some
      possibly equal $i,j\in \{0,1,\ldots,N\}$.
     Then, 
\[\forall u,v \in \overline{\F}_q^{N+1},\quad     B_F(u,v)=u_i v_j+u_j v_i =\frac{\partial F}{\partial X_i}(u) v_i+ \frac{\partial F}{\partial X_j}(u)  v_j.\]
Hence, for any $F\in {\Fq[X_0,\ldots,X_N]}_2$, 
\[B_F(u,v)=  \left\langle \left(\frac{\partial F}{\partial X_0}(u),\ldots,\frac{\partial F}{\partial X_N}(u)\right),(v_0,\ldots,v_N)\right\rangle,\]
where $\langle \cdot, \cdot \rangle$ denotes the canonical Euclidean product on $\overline{\F}_q^{N+1}$.

Now, $B_F(u,v)=0$ for every $v\in \FqbarN$ if and only if 
$(\frac{\partial F}{\partial X_0}(u),\ldots,\frac{\partial F}{\partial X_N}(u))=0$. In particular, $u\in \rad F$ if and only if $F(u)=0$ and $(\frac{\partial F}{\partial X_0}(u),\ldots,\frac{\partial F}{\partial X_N}(u))=0$, \emph{i.e.} if and only if $\PP(u)\in \sing(\Q)$.

The general case is deduced by linearity since the map $F\mapsto B_F$ is linear.
    \end{proof}

\begin{prop}\label{prop:rankRad}
 Let $F$ be a quadratic form in $\Fq[X_0,\ldots,X_N]$ and let $\Q=\V(F)$ be the corresponding quadric in $\PP^N$. Then the rank of $\Q$ (or $F$),
  introduced in Definition~\ref{def:rank}, satisfies
     \[
    \rk \Q=N-\dim \sing(\Q)=N-\dim_{\Fqbar} \rad F+1.  \]
\end{prop}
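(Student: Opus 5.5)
The statement is a direct combination of Proposition~\ref{prop:rankSing} and Lemma~\ref{lem:SingRad}. The first equality $\rk \Q = N - \dim \sing(\Q)$ is exactly Proposition~\ref{prop:rankSing}, so nothing new is needed there. For the second equality, Lemma~\ref{lem:SingRad} gives $\sing(\Q) = \PP(\rad F)$. It then suffices to check that $\rad F$ is an $\Fqbar$-linear subspace of $\FqbarN$, because the projective dimension of $\PP(W)$ is $\dim_{\Fqbar} W - 1$. This gives $\dim \sing(\Q) = \dim_{\Fqbar}\rad F - 1$, and substituting yields $\rk \Q = N - \dim_{\Fqbar}\rad F + 1$. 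The convention $\dim \emptyset = -1$ corresponds to $\rad F = \{0\}$.

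The only point needing care is linearity of $\rad F$, since $F$ itself is not linear. Take $u, v \in \rad F$ and $\lambda \in \Fqbar$. Then $u+v \in \rad B_F$ because the radical of a bilinear form is a subspace. Moreover $F(u+v) = F(u) + F(v) + B_F(u,v) = 0$, because $B_F(u,v) = 0$ when $u \in \rad B_F$. Finally $F(\lambda u) = \lambda^2 F(u) = 0$. So $\rad F$ is a subspace, including in characteristic $2$, where it may be strictly smaller than $\rad B_F$.

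A slightly more intrinsic alternative avoids relying on the computation inside Proposition~\ref{prop:rankSing}. Pass to the canonical form in $X_0,\dots,X_{r-1}$ via Theorem~\ref{thm:IrrClassQuadrics} and Remark~\ref{rmk:CanonicalFormsRed}, then compute $\rad F$ directly and observe that it equals $\{x : x_0 = \cdots = x_{r-1} = 0\}$, which has dimension $N+1-r$.

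Neither route has a real obstacle. The only subtle point is making sure the characteristic-$2$ distinction between $\rad F$ and $\rad B_F$ is handled. Lemma~\ref{lem:SingRad} already does this, because it is stated with $\rad F$ rather than $\rad B_F$.
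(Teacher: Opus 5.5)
Your proof is correct and is the same as the paper's, whose proof is just the one-line combination of Proposition~\ref{prop:rankSing} and Lemma~\ref{lem:SingRad}; you make the bookkeeping $\dim \PP(W) = \dim_{\Fqbar} W - 1$ explicit. You also check that $\rad F$ is an $\Fqbar$-subspace, including in characteristic $2$, which the paper takes for granted by defining $\rad F$ as a space; this check is correct and worth including.
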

\begin{proof}
  Follows from Proposition~\ref{prop:rankSing} and Lemma~\ref{lem:SingRad}.
\end{proof}

The following proposition is given as an exercise in \cite[Ex.~22.1]{Harris1992}. We prove
it for the sake of completeness.
 
 \begin{prop}\label{prop:HyperplaneCutRk}
   Let $F$ be a quadratic form in $\Fq[X_0,\ldots,X_N]$. Denote by $\Q$ the quadric $\V(F)$ in $\PP^N$. For any hyperplane $\L$, 
   \[ \rk(\Q) - 2\leq \rk (\Q \cap \L) \leq \rk (\Q).
   \]  
 \end{prop}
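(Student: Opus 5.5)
The plan is to use Proposition~\ref{prop:rankRad}, which expresses the rank through the dimension of the radical. This reduces the statement to comparing $\dim \rad F$ with $\dim \rad F|_{\L}$, where $F|_{\L}$ is the restriction of $F$ to the hyperplane, viewed as a quadratic form on an $N$-dimensional space. After an $\Fq$-linear change of coordinates we may assume $\L=\V(X_N)$. Set $V=\FqbarN$ and $W=\{v\in V: v_N=0\}$, so that $\dim W=N$. Write $G=F|_W$. It is a quadratic form in $N$ variables, its polarisation is $B_G=B_F|_{W\times W}$, and $\Q\cap\L=\V(G)$ in $\L\simeq\PP^{N-1}$. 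By Proposition~\ref{prop:rankRad}, applied in $\PP^N$ and in $\PP^{N-1}$ respectively, $\rk\Q = N+1-\dim\rad F$ and $\rk(\Q\cap\L)=N-\dim\rad G$. The claimed inequalities are therefore equivalent to
\[
\dim\rad F-1\le \dim\rad G\le \dim\rad F+1 .
\]

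For the left inequality, consider $\rad F\cap W$. Every $v$ in it satisfies $G(v)=F(v)=0$ and $B_F(v,w)=0$ for all $w\in W$, so $\rad F\cap W\subset\rad G$. Since $W$ is a hyperplane, $\dim(\rad F\cap W)\ge\dim\rad F-1$, and the inequality follows.

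The right inequality is the main point. Fix a vector $e\notin W$, so that $V=W\oplus\langle e\rangle$. Define the linear map $\phi:\rad B_G\to\Fqbar$ by $\phi(v)=B_F(v,e)$. Its kernel $K$ consists of vectors of $W$ that are $B_F$-orthogonal to all of $V$, so $K=\rad B_F\cap W\subset\rad B_F$ and $\dim K\ge\dim\rad B_G-1$. This alone only compares the radicals of the bilinear forms. The obstacle is characteristic $2$, where $\rad F$ may be strictly smaller than $\rad B_F$ because the extra condition $F(v)=0$ has to be taken into account.

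To handle this, I will work directly with $\rad G$ rather than $\rad B_G$. Let $K'=\ker(\phi|_{\rad G})$, so $\dim K'\ge\dim\rad G-1$. Every $v\in K'$ lies in $W$, is $B_F$-orthogonal to $W$ and to $e$, hence to all of $V$, and satisfies $F(v)=G(v)=0$. Thus $K'\subset\rad F$, which gives $\dim\rad F\ge\dim\rad G-1$, as required. Note that I must use that $\rad G$ is a linear subspace, which holds because $F$ restricted to $\rad B_F$ is additive. Indeed, $F(u+v)=F(u)+F(v)$ when $B_F(u,v)=0$, so the zero set of $F$ on $\rad B_F$ is closed under sums, and also under scalar multiples. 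The same argument applies to $G$ on $\rad B_G$. With this, both inequalities hold uniformly in all characteristics, and Proposition~\ref{prop:rankRad} converts them into the statement.
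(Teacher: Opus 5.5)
Your proof is correct and is essentially the paper's argument. The paper also reduces via Proposition~\ref{prop:rankRad} to comparing $\dim \rad F$ with $\dim \rad F_\L$. For one inequality it uses the functional $v\mapsto v_N$ on $\rad F$, whose kernel is your $\rad F\cap W\subset \rad F_\L$. For the other it uses $v\mapsto B_F((v,0),(0,\ldots,0,1))$ on $\rad F_\L$, whose kernel lies in $\rad F$, exactly as your $K'$ does. Your check that $\rad F$ and $\rad G$ are linear subspaces in characteristic $2$, via additivity of $F$ on $\rad B_F$, is a point the paper takes for granted and is a welcome addition.
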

\begin{proof}
     Without loss of generality, we assume that $\L=\V(X_N)$. Denote by $F_\L$ the quadratic form $F(X_0,\ldots,X_{N-1},0)$ in $\Fq[X_0,\ldots,X_{N-1}]$ defining the quadric $\Q\cap \L$ in  $\L\simeq \PP^{N-1}$. Define the $\Fqbar$-linear map 
     \[
          \phi: \map{\rad F}{\Fqbar}{(v_0,\ldots,v_N)}{v_N.}
        \]
        On one hand, $\ker \phi$ is either equal to $\rad F$ or to a hyperplane of $\rad F$.
        On the other hand,
     any element of $\ker \phi$ has form $(v_0, \dots, v_{N-1},0)$ where
     $(v_0, \dots, v_{N-1}) \in \rad {F_\L}$. The two previous observations entail 
    \begin{equation}\label{eq:radKer1}
    	\dim_{\Fqbar} \rad F-1 \leq \dim_{\Fqbar} \ker \phi\leq \dim_{\Fqbar} \rad F_{\L}.
    \end{equation} 
    Next, let $(v_0,\ldots,v_{N-1})\in \rad F_{\L}$. Therefore, 
    \[F(v_0,\ldots,v_{N-1},0)=0\quad \text{and} \quad B_F((v_0,\ldots,v_{N-1},0),(w_0,\ldots,w_{N-1},0))=0
    \]
    for all $(w_0,\ldots,w_{N-1})\in \overline{\F}^{N}_q$. It implies that $(v_0,\ldots,v_{N-1},0)\in \rad F$ if we have $B_F((v_0,\ldots,v_{N-1},0),(0,\ldots,0,1))=0$. 
    So, consider the linear form
    \[
          \psi: \map{\rad F_{\L}}{\Fqbar}{(v_0,\ldots,v_{n-1})}{B((v_0,\ldots,v_{n-1},0),(0,\ldots,0,1)).}
     \]
We deduce that
    \begin{equation}\label{eq:radKer2}
    	\dim \rad F_{\L}-1\leq \dim \ker \psi\leq \dim \rad F.
    \end{equation}
    Combining \eqref{eq:radKer1} and \eqref{eq:radKer2}, we obtain
    \[
    N+1-(\dim \rad F_{\L}+1)\leq N+1-\dim \rad F\leq N+1-(\dim \rad F_{\L}-1)
    \]
    which implies that \[
    \rk F_{\L}\leq \rk F\leq \rk F_{\L}+2,
    \]
    as required. 
    \end{proof}

    \begin{prop}\label{prop:SameRk}
      Let $F$ be a quadratic form in ${\Fq[X_0,\ldots,X_N]}$ and let
      $\Q=\V(F)$ in $\PP^N$ with a non empty singular locus. Consider a hyperplane
      $\L\subset \PP^N$ which does not contain $\sing (\Q)$. Then
      $\rk (\Q \cap \L) = \rk (\Q)$.
    \end{prop}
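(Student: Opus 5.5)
The plan is to use the cone structure from Proposition~\ref{prop:SingCone} and reduce to a radical computation as in the proof of Proposition~\ref{prop:HyperplaneCutRk}.

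By Proposition~\ref{prop:HyperplaneCutRk} we already know $\rk(\Q\cap\L)\le \rk(\Q)$, so the task is to show that $\dim \rad F_\L = \dim \rad F - 1$. This gives equality, because $\Q\cap \L$ lives in $\L\simeq\PP^{N-1}$, so its rank is $(N-1)-\dim\rad F_\L+1 = N-\dim\rad F+1 = \rk \Q$ by Proposition~\ref{prop:rankRad}.

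First I would choose coordinates. Since $\sing(\Q)$ is a nonempty linear subspace defined over $\Fq$, we may pick coordinates so that $\sing(\Q)=\V(X_0,\ldots,X_{r-1})$ with $r=\rk\Q<N+1$. Proposition~\ref{prop:ToBaseQuadric} then says $F\in\Fq[X_0,\ldots,X_{r-1}]$.

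The hyperplane is $\L=\V(\ell)$ with $\ell=\sum a_iX_i$. The hypothesis $\sing\Q\not\subset\L$ means that some $a_j\neq 0$ with $j\ge r$. After a linear change of variables fixing $X_0,\ldots,X_{r-1}$, we can take $\ell=X_N$ (if $a_N\ne0$, set $X_N'=\ell$, which is allowed since $N\ge r$). In these coordinates $F$ still involves only $X_0,\ldots,X_{r-1}$.

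Restricting to $\L=\V(X_N)$, the form $F_\L=F(X_0,\ldots,X_{N-1},0)$ is literally the same polynomial in $X_0,\ldots,X_{r-1}$, now viewed in $N$ variables. Its rank is therefore at most $r$. By the explicit description $\sing=\V(X_0,\ldots,X_{r-1})$, valid for a form written in canonical variables (or by counting radicals: $\rad F = \rad F_\L\oplus \langle e_N\rangle$), it equals $r$.

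Alternatively one can argue directly: in the proof of Proposition~\ref{prop:HyperplaneCutRk}, the map $\psi$ is identically zero because $e_N\in\rad F$, and the map $\phi$ is surjective because $e_N\in \rad F$ has $\phi(e_N)=1$. Hence $\dim\ker\phi=\dim\rad F-1$ and $\rad F_\L\simeq\ker\phi$, which gives the equality of ranks.

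The main subtlety is making sure the change of coordinates sending $\ell$ to $X_N$ preserves the fact that $F$ depends only on $X_0,\ldots,X_{r-1}$. Concretely, I would check that $\rad F$ is unchanged and still contains a vector with $\ell\ne0$, which holds because $\sing\Q=\PP(\rad F)$ by Lemma~\ref{lem:SingRad}.
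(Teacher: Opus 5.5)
Your proposal is correct and is essentially the paper's proof. The paper also places a rational singular point at $(0:\cdots:0:1)$ with $\L=\V(X_N)$, and shows that $\rad F$ contains $\rad F_\L\times\{0\}$ together with the extra vector $e_N$; this is exactly your ``alternatively'' paragraph, while your main route via Proposition~\ref{prop:ToBaseQuadric} only makes explicit that $F$ then does not involve $X_N$, so that $\rad F=\rad F_\L\oplus\langle e_N\rangle$. Two minor points: if $a_N=0$ you must first swap $X_N$ with some $X_j$, $j\ge r$, $a_j\neq 0$, before setting $X_N'=\ell$; and the remark about ``canonical variables'' does not justify the rank equality on its own, since it is the radical splitting that shows the rank is $r$.
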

    \begin{proof}
      After a possible linear change of variables, it suffices to assume that the point
      $(0:\cdots:0:1) \in \sing(\Q)$ and $\L = \V (X_N)$. Denote
      by $F_\L \eqdef F(X_0, \dots, X_{N-1},0)$.  By
      Lemma~\ref{lem:SingRad}, it follows that $(0,\ldots,0,1)\in \rad F$. In particular, for every $(v_0,\ldots,v_{N-1})\in \rad F_\L$, we obtain $(v_0,\ldots,v_{N-1},0)\in \rad F$. Moreover, since $(0,\ldots,0,1)\in \rad F$, we have $\dim_{\Fqbar} \rad F\ge \dim_{\Fqbar} \rad F_\L+1$. But by Propositions~\ref{prop:HyperplaneCutRk} and \ref{prop:rankRad}, $\dim_{\Fqbar} \rad F\le \dim_{\Fqbar} \rad F_\L+1$, which proves the equality.
    \end{proof}

\subsection{Tangent spaces}
In the sequel
we make use of \emph{tangent spaces} which will always refer
to as \emph{projective tangent spaces} as defined in \cite[Lect.~14]{Harris1992}. The notion can be
defined for any projective variety but we will only use it for quadric
hypersurfaces.  Given a hypersurface $\mathcal S \subset \PP^N$
defined as $\mathcal S = \V (F)$ for some homogeneous polynomial
$F\in \Fq[X_0, \dots, X_N]$ and a point $P \in \mathcal S$, we define
\[
\T_P \mathcal S \eqdef \left\{(x_0:\cdots:x_N) \in \PP^N ~:~ \sum_{i=0}^N x_i \frac{\partial F}{\partial X_i}(P) = 0\right\}.
\]
From Euler theorem on homogeneous polynomials, $P \in \T_P \mathcal S$
and it is well--known that $\T_P \mathcal S$ is a hyperplane of
$\PP^N$ if and only if $\mathcal S$ is smooth at $P$. Otherwise, when
$P$ is singular then $\T_P \mathcal S = \PP^N$.

With this definition at hand, a line $\ell$ through $P$ is
\emph{tangent} to $\mathcal S$ at $P$ if
$\ell \subset \T_P \mathcal S$. In particular, if $P$ is a singular
point of $\mathcal S$, then any line through $P$ will be said to be
tangent to $\mathcal S$ at $P$. The following statement is a direct consequence of Bézout's Theorem.

\begin{lem}\label{lem:tangent_line}
  Let $\Q \subset \PP^N$ be a quadric hypersurface, $\ell \subset \PP^N$
  be a line with $\ell \not\subset \Q$ and $P \in \ell \cap \Q$. Then $\ell$ is tangent to $\Q$ at $P$
  if and only if $\ell \cap \Q = \{P\}$.
\end{lem}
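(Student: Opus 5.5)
We restrict $F$ to the line $\ell$ and use that a nonzero binary quadratic form has at most two roots, counted with multiplicity. Choose distinct points $P$ and $R \neq P$ on $\ell$, with representatives $u,w\in\FqbarN$, so that every point of $\ell$ other than $P$ is $\PP(w+tu)$ for some $t\in\Fqbar$. Since $F(u)=0$, the polarisation identity gives
\[
F(w+tu)=F(w)+t\,B_F(u,w).
\]
We also have
\[
B_F(u,w)=\sum_{i=0}^N w_i\frac{\partial F}{\partial X_i}(u),
\]
by the expression of $B_F$ in terms of partial derivatives from the proof of Lemma~\ref{lem:SingRad}.

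Since $\ell$ spans and $P\in\T_P\Q$ by Euler, the condition $\ell\subset \T_P\Q$ is equivalent to $R\in\T_P\Q$. By the formula above, this is in turn equivalent to $B_F(u,w)=0$.

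Suppose first that $\ell$ is tangent at $P$. Then $F(w+tu)=F(w)$ is constant in $t$. This constant is nonzero, since otherwise every point of $\ell\setminus\{P\}$ would lie on $\Q$ and $\ell\subset\Q$. Hence no point of $\ell$ other than $P$ lies on $\Q$, and $\ell\cap\Q=\{P\}$.

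Suppose instead that $B_F(u,w)\neq 0$. If $F(w)\neq 0$, the affine linear function $t\mapsto F(w)+tB_F(u,w)$ has the root $t=-F(w)/B_F(u,w)$. This gives a second point of $\ell\cap\Q$ distinct from $P$. If $F(w)=0$, then $R$ itself is such a second point. In either case $\ell\cap\Q\neq\{P\}$, which proves the converse.

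The only subtle step is the case where $P$ is singular. There $B_F(u,\cdot)\equiv 0$ by Lemma~\ref{lem:SingRad}, so the argument still applies and every line through $P$ is tangent. This matches the convention $\T_P\Q=\PP^N$. No characteristic assumption is needed, since we use $B_F$ rather than a symmetric matrix.
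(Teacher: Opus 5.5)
Your proof is correct and is essentially the explicit version of the paper's one-line appeal to Bézout's theorem. Restricting $F$ to $\ell$ gives $F(w+tu)=F(w)+t\,B_F(u,w)$, so $P$ is a double zero of $F|_{\ell}$ exactly when $B_F(u,w)=0$, i.e.\ exactly when $\ell\subset\T_P\Q$.

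Working it out by hand has two benefits. It proves the tangency/multiplicity link that the citation of Bézout leaves implicit, and it does so uniformly in characteristic $2$ and at singular points. It also shows that when $P$ and $R$ are rational, the second intersection point, given by $t=-F(w)/B_F(u,w)\in\Fq$, is rational. That is precisely Lemma~\ref{lem:rationalpoints}, which the paper states without proof.
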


Finally, we will make use of the following very classical
statement which can be proven in whole generality and for which a
proof for the specific case of quadrics (which is the purpose of the
present article) is given in \cite[Thm.~1.11(ii)]{hirschfeld1991general}.

\begin{lem}\label{lem:inclusion_tangent}
  Let $\mathcal S$ be a hypersurface of $\PP^N$ and
  $P \in \mathcal S$.  Let $\Pi$ be a linear subspace of $\PP^N$ such
  that $P \in \Pi \subset \mathcal S$.
  Then, $\Pi \subset \T_P \mathcal S$.
\end{lem}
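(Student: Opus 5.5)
This lemma is a purely local statement about tangency. Only the case $\Pi$ a line really matters, since every $Q\in\Pi$ lies on a line through $P$ contained in $\Pi$ (or $Q=P$, and $P\in\T_P\mathcal S$ by Euler's theorem). So the plan is to reduce to lines and then read off the tangency condition from a Taylor expansion along the line.

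\textbf{Step 1 (reduction).} If $P$ is a singular point of $\mathcal S$, then $\T_P\mathcal S=\PP^N$ and there is nothing to prove. Assume $P$ is smooth, so that $\T_P\mathcal S$ is a hyperplane. Take $Q\in\Pi$ with $Q\neq P$, and let $\ell$ be the line $PQ$. Since $\Pi$ is linear, $\ell\subset\Pi\subset\mathcal S$, so it suffices to show $Q\in\T_P\mathcal S$.

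\textbf{Step 2 (expansion along the line).} Let $F$ be the defining form of $\mathcal S$, of degree $d$, and choose affine representatives $u$ of $P$ and $v$ of $Q$. Expanding in $t$ gives
\[
F(u+tv)=F(u)+t\sum_{i=0}^N v_i\frac{\partial F}{\partial X_i}(u)+t^2 G(t),
\]
where $G\in\Fqbar[t]$. This is an identity of polynomials, since Taylor expansion of polynomials in any characteristic only uses the first-order term with integer coefficient $1$.

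\textbf{Step 3 (conclusion).} Because $\ell\subset\mathcal S$, the polynomial $F(u+tv)$ vanishes for every $t\in\Fqbar$. As $\Fqbar$ is infinite, it is therefore the zero polynomial, and in particular its coefficient of $t$ vanishes:
\[
\sum_{i=0}^N v_i\frac{\partial F}{\partial X_i}(u)=0.
\]
This says exactly that $Q\in\T_P\mathcal S$. Hence $\Pi\subset\T_P\mathcal S$.

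For quadrics the computation is even more explicit: $F(u+tv)=F(u)+tB_F(u,v)+t^2F(v)$, and the formula for $B_F$ in the proof of Lemma~\ref{lem:SingRad} identifies $B_F(u,v)$ with the tangency condition.

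The only step needing care is the claim that the linear coefficient equals the gradient pairing in characteristic $p$. It holds because the expansion of each monomial, $(u+tv)^\alpha$, has linear coefficient $\sum_i\alpha_i v_i u^{\alpha-e_i}$, which is exactly $\sum_i v_i\,\partial_i(X^\alpha)(u)$.
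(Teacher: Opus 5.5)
Your proof is correct. It differs from the paper only in that the paper gives no proof of this lemma. The paper calls the statement classical, valid in full generality, and refers to \cite[Thm.~1.11(ii)]{hirschfeld1991general} for the quadric case, which is the only case it uses later (in Proposition~\ref{prop:SingPts}).

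Your argument restricts $F$ to the parametrised line $t\mapsto u+tv$ and uses that $\Fqbar$ is infinite to see that $F(u+tv)$ is the zero polynomial. You then read off its linear coefficient $\sum_i v_i\,\frac{\partial F}{\partial X_i}(u)$. This is self-contained and works for hypersurfaces of arbitrary degree, in any characteristic, and even when $F$ is not reduced. For instance, for a double hyperplane $\V(L^2)$ the gradient vanishes along $\V(L)$, so $\T_P\mathcal S=\PP^N$, and your argument is consistent with that. So your route actually substantiates the paper's remark that the lemma holds ``in whole generality'', whereas the citation only covers quadrics.

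For quadrics, your closing remark is the natural polar-form argument. The identity $F(u+tv)=F(u)+tB_F(u,v)+t^2F(v)$, combined with the gradient formula for $B_F$ from the proof of Lemma~\ref{lem:SingRad}, reduces everything to the vanishing of $B_F(u,v)$.

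One small simplification: the case split in your Step~1 is unnecessary, since Steps~2--3 never use smoothness of $P$. At a singular point the gradient vanishes and the conclusion is automatic. What you do need is the separate treatment of $Q=P$ via Euler's identity $\sum_i u_i\,\frac{\partial F}{\partial X_i}(u)=d\,F(u)=0$, which is valid in any characteristic.
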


\section{Quadrics with maximal rational points}\label{sec:max_quadrics}

As noted in Subsections~\ref{subsec:codes}, to find the minimal codewords of $\prm_q(2,N)$, we need to know the quadrics in $\PP^N$ with maximal set of rational points with respect to the inclusion. Thus, we ask ourselves the following question: if $\Q$ and $\Q'$ are two quadrics in $\PP^N$ such that $\Q(\Fq)\subset \Q'(\Fq)$, what can we say about the quadrics $\Q$ and $\Q'$? From Proposition~\ref{prop:ClassQuadrics}, rational points on quadrics from class (\ref{item:double_hyperplane}) are clearly contained in some quadrics from class (\ref{item:pair_of_hyperplanes}). Moreover, since the rational points on any quadric from class (\ref{item:conjugate_hyperplanes}) are the rational points on codimension $2$ linear subspace, it is contained in some hyperplane and hence in  quadrics of class (\ref{item:double_hyperplane}) or (\ref{item:pair_of_hyperplanes}). Thus, we want to compare the quadrics which are either absolutely irreducible or a union of distinct hyperplanes. We begin by identifying some properties of quadrics satisfying $\Q(\Fq)\subset \Q'(\Fq)$. 
We first present some technical tools that will be useful in the sequel.

\subsection{Some technical lemmas}

\begin{lem}\label{lem:rationalpoints}
  Let $\Q$ be a quadric in $\PP^N$ and let $P$ be a rational
  point on $\Q$. If $\Q$ is smooth at $P$, then any $\Fq$--line $\ell$ through $P$
  that is non-tangent to $\Q$ at $P$ intersects $\Q$ at exactly two
  rational points, of which one is $P$. Equivalently, if $(\ell\cap \Q)(\Fq)=\{P\}$ for an $\Fq$-line $\ell$, then $\ell\cap \Q=\{P\}$, \emph{i.e.} $\ell$ is tangent to $\Q$ at $P$. 
\end{lem}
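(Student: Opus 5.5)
The plan is to restrict the quadratic form $F$ defining $\Q$ to the line $\ell$ and analyse the resulting binary quadratic form. Take a rational representative $u\in\FqN$ of $P$ and a second rational point $Q\in\ell(\Fq)$ with representative $w\in\FqN$, $w\notin \Fq u$. Every point of $\ell$ other than $Q$ can be written $u+tw$ with $t\in\Fqbar$. Using the polarisation identity,
\[
F(u+tw)=F(u)+t\,B_F(u,w)+t^2F(w)=t\bigl(B_F(u,w)+tF(w)\bigr),
\]
since $F(u)=0$. The key point is that this is a polynomial in $t$ with coefficients in $\Fq$, so its roots are governed by rationality.

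\textbf{Linking the linear coefficient to tangency.} By the proof of Lemma~\ref{lem:SingRad},
\[
B_F(u,w)=\sum_i w_i\,\frac{\partial F}{\partial X_i}(u).
\]
Hence $B_F(u,w)=0$ if and only if $Q\in\T_P\Q$. Because $\Q$ is smooth at $P$, $\T_P\Q$ is a hyperplane containing $P$. So $\ell\subset\T_P\Q$ exactly when $B_F(u,w)=0$, and $\ell$ is non-tangent at $P$ if and only if $b\eqdef B_F(u,w)\neq 0$.

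\textbf{The non-tangent case.} Assume $b\neq 0$. Then $\ell\not\subset\Q$, since otherwise $\ell\subset\T_P\Q$ by Lemma~\ref{lem:inclusion_tangent}. Split according to $F(w)$:
\begin{itemize}
\item If $F(w)\neq0$, the zeros on $\ell\setminus\{Q\}$ are $t=0$, giving $P$, and $t=-b/F(w)\in\Fq^\times$, giving a rational point distinct from $P$. The point $Q$ itself is not on $\Q$.
\item If $F(w)=0$, then $Q\in\Q(\Fq)$, and the only zero on $\ell\setminus\{Q\}$ is $t=0$.
\end{itemize}
In both cases $\ell\cap\Q$ consists of exactly two points, both rational, one of which is $P$. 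Bézout guarantees there are no more, since $\ell\not\subset\Q$.

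\textbf{The equivalent formulation.} Suppose $(\ell\cap\Q)(\Fq)=\{P\}$. By the previous paragraph $\ell$ cannot be non-tangent, so $\ell$ is tangent at $P$. If $\ell\not\subset\Q$, Lemma~\ref{lem:tangent_line} gives $\ell\cap\Q=\{P\}$. If instead $\ell\subset\Q$, then all $q+1\geq3$ rational points of $\ell$ lie on $\Q$, contradicting the hypothesis.

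There is no serious obstacle here. The only step needing care is identifying the nonvanishing of $B_F(u,w)$ with non-tangency, which relies on the gradient expression of $B_F$ from Lemma~\ref{lem:SingRad} and on smoothness at $P$.
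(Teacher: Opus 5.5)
Your proof is correct and complete. The paper gives no proof of this lemma and treats it as immediate. The implicit argument is Bézout plus Galois invariance: a non-tangent $\Fq$-line meets $\Q$ in $P$ and exactly one further point. Since $\ell\cap\Q$ is stable under Frobenius and $P$ is fixed, that further point is also fixed, hence rational.

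Your parametrisation $F(u+tw)=t\bigl(B_F(u,w)+tF(w)\bigr)$ is the explicit coordinate version of this reasoning: the second root of a binary quadratic over $\Fq$ with one rational root is rational. It also makes two points fully explicit: that non-tangency is exactly $B_F(u,w)\neq 0$, and the case where the second intersection point is $Q$ itself (the root ``at $t=\infty$'').

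One small remark: the closing appeal to Bézout in the non-tangent case is superfluous. Your parametrisation already accounts for every point of $\ell$, namely $Q$ together with all $u+tw$.
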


\begin{lem}\label{lem:Qhyperplane}
  Let $\Q$ be a quadric in $\PP^N$. Then, $\Q(\Fq)$ is contained in
  a hyperplane if and only if $\Q$ is either a double hyperplane (class (\ref{item:double_hyperplane}) of Proposition~\ref{prop:ClassQuadrics}) or an
  irreducible non-absolutely irreducible quadric (class (\ref{item:conjugate_hyperplanes})).
\end{lem}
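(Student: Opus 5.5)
For the ``if'' direction, I will check the two classes directly. If $\Q=\V(L^2)$ is a double hyperplane, then $\Q(\Fq)$ is the set of rational points of the hyperplane $\V(L)$. If $\Q=\V(L_1L_2)$ with $L_2=L_1^{[q]}$ and $L_1\notin\Fq[X_0,\ldots,X_N]_1$, I write $L_1=A+\alpha B$ with $A,B\in\Fq[X_0,\ldots,X_N]_1$ and $\alpha\in\F_{q^2}\setminus\Fq$; then $B\neq 0$. A rational point $P$ of $\Q$ satisfies $L_1(P)=0$ or $L_2(P)=0$, and conjugating shows that both vanish. Since $A(P),B(P)\in\Fq$ and $1,\alpha$ are $\Fq$-independent, this gives $A(P)=B(P)=0$. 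Hence $\Q(\Fq)\subset\V(B)$, which is a rational hyperplane.

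For the ``only if'' direction, by Proposition~\ref{prop:ClassQuadrics} it suffices to exclude classes (\ref{item:pair_of_hyperplanes}) and (\ref{item:absolutely_irreducible}). Suppose $\Q(\Fq)\subset\L$ for a hyperplane $\L$. If $\Q=\V(L_1L_2)$ with distinct rational hyperplanes $H_i=\V(L_i)$, then $H_1(\Fq)\cup H_2(\Fq)\subset\L(\Fq)$. By Corollary~\ref{cor:linear} (or trivially, since a linear space is spanned by its rational points) this forces $H_1=\L=H_2$, a contradiction.

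The main case is $\Q$ absolutely irreducible of rank $r\geq 3$. The first step is to produce a rational point $P\in\Q$ at which $\Q$ is smooth. By Propositions~\ref{prop:rankSing} and~\ref{prop:rankRad}, $\sing\Q$ is a rational linear space of dimension $N-r$, so $\sharp\sing\Q(\Fq)=p_{N-r}$. By Proposition~\ref{prop:NoOfPoints}, $\sharp\Q(\Fq)\geq p_{N-1}-q^{N-s}$ in the worst (elliptic) case $r=2s$, and $p_{N-1}$ in the parabolic case. I will check that this is strictly larger than $p_{N-r}$: in the elliptic case $s\geq 2$, so $p_{N-1}-q^{N-s}$ still contains the term $q^{N-1}>p_{N-2s}$. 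This gives a smooth rational point $P$.

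The second step is to find a rational point $R\notin \T_P\Q\cup\L$. Both are hyperplanes (possibly equal), so their union has at most $2p_{N-1}-p_{N-2}$ rational points, which is less than $p_N$. Let $\ell$ be the $\Fq$-line $PR$. It is not contained in $\T_P\Q$, so it is not tangent at $P$. By Lemma~\ref{lem:rationalpoints}, $\ell$ meets $\Q$ in a second rational point $P'\neq P$. Since $\ell\not\subset\L$ and $P\in\L$, we have $\ell\cap\L=\{P\}$, so $P'\notin\L$. This contradicts $\Q(\Fq)\subset\L$.

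The only delicate points are the numerical inequality guaranteeing a smooth rational point, which I expect to be the main (though mild) obstacle, especially for small $q$ and $r=4$ elliptic, and making sure $\ell\not\subset\Q$ so that Lemma~\ref{lem:rationalpoints} applies. The latter follows from $\ell\not\subset\T_P\Q$ together with Lemma~\ref{lem:inclusion_tangent}.
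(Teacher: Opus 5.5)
Your proof is correct and follows essentially the same route as the paper. Both arguments rest on a smooth rational point $P$ and on Lemma~\ref{lem:rationalpoints} applied to an $\Fq$-line through $P$ that leaves $\L$; the paper phrases this contrapositively, showing every rational point would be singular because such lines span $\PP^N$. The only real differences are these: you obtain the smooth rational point by comparing $\sharp\Q(\Fq)$ with $\sharp\sing\Q(\Fq)=p_{N-r}$ via Proposition~\ref{prop:NoOfPoints}, whereas the paper takes it from the cone structure of Proposition~\ref{prop:SingCone}, and you also write out the ``if'' direction, which the paper calls straightforward.
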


\begin{proof}
  The ``if'' part is straightforward.
  
  Note that if $\Q$ is reducible (class
  (\ref{item:pair_of_hyperplanes}) of
  Proposition~\ref{prop:ClassQuadrics}), then it is the union of two
  distinct rational hyperplanes and cannot have all its rational
  points inside a hyperplane. Suppose now that $\Q$ is not a double
  hyperplane, is absolutely irreducible and $\Q(\Fq)$ is contained in
  some hyperplane $\L$ of $\PP^N$.  Let $P \in \Q(\Fq)$, then any line
  $\ell$ such that $P\in \ell$ and $\ell \not \subset \L$ intersects
  $\L$ only at $P$. In particular, $\Q(\Fq)\cap \ell=\{P\}$. Thus by Lemma~\ref{lem:rationalpoints}, it intersects
  $\Q$ only at $P$ and any such line is tangent to $\Q$ at $P$.
  Since the span of these lines is $\PP^N$ itself, we deduce that
  $\T_P \Q$ is not a hyperplane and hence that $P$ is a singular point
  of $\Q$. Thus,
  \begin{equation*}\Q (\Fq) \subset \sing (\Q).
  \end{equation*}
  Since $\Q$ is absolutely irreducible and is a cone over a smooth quadric in $\PP^{r-1}$ for $r\ge 3$ by Proposition~\ref{prop:SingCone}, $\Q$ has a smooth rational point, a contradiction. 
\end{proof}

\begin{prop}\label{prop:SingPts}
  Let $\Q$ and $\Q'$ be two quadrics in $\PP^N$ such that
  $\Q(\Fq) \subset \Q'(\Fq)$ and $\Q$ is absolutely irreducible. Then
  $\sing(\Q) \subset \sing(\Q')$.
\end{prop}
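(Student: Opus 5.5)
We argue pointwise. Let $S \in \sing(\Q)$; we must show $S\in\sing(\Q')$. Since $\sing(\Q)$ is a linear subspace defined over $\Fq$ (Proposition~\ref{prop:rankSing}, Lemma~\ref{lem:SingRad}), and $\sing(\Q')$ is also a linear subspace, Corollary~\ref{cor:linear}-type reasoning shows it suffices to treat rational singular points. Indeed, if every point of $\sing(\Q)(\Fq)$ lies in $\sing(\Q')$, then the linear span of these points, which is all of $\sing(\Q)$ since this space is defined over $\Fq$, is contained in the linear space $\sing(\Q')$. So we fix a rational point $S\in\sing(\Q)(\Fq)$.

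First, $S\in\Q(\Fq)\subset\Q'(\Fq)$. We show that $\T_S\Q'=\PP^N$. By Proposition~\ref{prop:SingCone}, $\Q$ is a cone with vertex containing $S$: for every rational point $P\in\Q(\Fq)$, the whole line $\langle S,P\rangle$ lies in $\Q$. Hence every rational point of that $\Fq$-line lies in $\Q(\Fq)\subset\Q'$. By Corollary~\ref{cor:linear}, applied to this line, the line is contained in $\Q'$. Lemma~\ref{lem:inclusion_tangent} then gives $\langle S,P\rangle\subset\T_S\Q'$, so $P\in\T_S\Q'$ for all $P\in\Q(\Fq)$.

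Suppose, for contradiction, that $S$ is a smooth point of $\Q'$. Then $\T_S\Q'$ is an $\Fq$-hyperplane containing $\Q(\Fq)$. By Lemma~\ref{lem:Qhyperplane}, $\Q$ would then be a double hyperplane or of class~(\ref{item:conjugate_hyperplanes}), contradicting absolute irreducibility. Hence $S\in\sing(\Q')$. The rationality of $\T_S\Q'$ holds because $S$ is rational and $\Q'$ is defined over $\Fq$, so the partial derivatives at $S$ have coefficients in $\Fq$.

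The main subtlety is the first reduction, namely that $\sing(\Q)$ is spanned by its rational points. This holds because it is the projectivisation of $\rad F$. That space is $\Fq$-rational: it is Galois-stable, being defined by equations over $\Fq$, and in the canonical form it equals $\V(X_0,\dots,X_{r-1})$. In the case where $\sing(\Q)$ is empty the statement is trivial.
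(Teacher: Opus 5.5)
Your proof is correct and is essentially the paper's argument. Both reduce to rational singular points $S$, show that every line joining $S$ to a point of $\Q(\Fq)$ lies in $\Q$, hence in $\Q'$ by Corollary~\ref{cor:linear}, hence in $\T_S\Q'$ by Lemma~\ref{lem:inclusion_tangent}, and then contradict Lemma~\ref{lem:Qhyperplane}. The differences are cosmetic: the paper argues by contradiction from a single point $P_2\in\Q(\Fq)$ outside $\T_S\Q'$, and it gets $\langle S,P_2\rangle\subset\Q$ from Lemma~\ref{lem:tangent_line} (every line through a singular point is tangent) rather than from the cone structure. That route is slightly cleaner than citing Proposition~\ref{prop:SingCone}, which as stated only speaks of lines to points of $\Pi$; alternatively, $F(sS+tP)=t^2F(P)=0$ for $S$ in $\rad F$ gives it directly.
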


\begin{proof}
  Since singular loci are linear spaces, if
  $\sing(\Q)(\Fq) \subset \sing(\Q')(\Fq)$, then
  $\sing(\Q) \subset \sing(\Q')$. Therefore, we only have to prove the
  statement on rational points.

  Let $P \in \sing(\Q)(\Fq)$ and suppose that
  $P \notin \sing(\Q')$. Then $\Q$ is smooth at $P$ and the tangent space
  $\T_P \Q'$ is a hyperplane.  Since
  $\Q$ is absolutely irreducible, from
  Lemma~\ref{lem:Qhyperplane} the rational points of $\Q$ cannot be
  all contained in $\T_P \Q'$. Hence, let $P_2 \in \Q (\Fq)$ and
  $P_2 \notin \T_P \Q'$.
  The line $\ell$ joining $P$ to $P_2$ is not
  contained into $\T_P \Q'$. However, $\ell \cap \Q$ contains $P, P_2$ and since $P$ is a singular point of $\Q$, the line $\ell$ is tangent to $\Q$ at $P$. Then, it follows from Lemma~\ref{lem:tangent_line} that $\ell \subset \Q$.
  Next, since $\Q (\Fq) \subset \Q' (\F_q)$, then
  $\ell (\Fq)\subset \Q'$ which further entails 
  $\ell \subset \Q'$ by Corollary~\ref{cor:linear}.  However, from
  Lemma~\ref{lem:inclusion_tangent}, $P\in \ell \subset \Q'$ implies
  that $\ell\subset \T_P \Q'$. A contradiction.
\end{proof}

\begin{lem}\label{lem:NonMaximalCases}
Let $\Q$ be a quadric in $\PP^N$. In the following cases, $\Q$ does not have a maximal set of rational points among all quadrics in $\PP^N$. 
\begin{enumerate}
\item\label{item:q_more_than_3} $q\leq 3$ and $\Q$ has rank $3$. 
\item\label{item:q=2} $q=2$ and $\Q$ is an elliptic quadric of rank $4$.
\end{enumerate}
\end{lem}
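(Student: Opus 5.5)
The plan is to reduce both cases to the base quadric of the cone structure and to exhibit, in each case, a union of two distinct hyperplanes $\Q'$ with $\Q(\Fq)\subsetneq \Q'(\Fq)$. Let $r=\rk \Q$. After a linear change of variables (Theorem~\ref{thm:IrrClassQuadrics}), $\Q=\V(F)$ with $F\in\Fq[X_0,\ldots,X_{r-1}]$, and $\sing(\Q)=\V(X_0,\ldots,X_{r-1})$ by Proposition~\ref{prop:rankSing}. A rational point $P=(x_0:\cdots:x_N)$ lies on $\Q$ if and only if either $(x_0,\ldots,x_{r-1})=0$, or $(x_0:\cdots:x_{r-1})$ is a rational point of the quadric $\Q_0=\V(F)$ in $\PP^{r-1}$.

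The key reduction is the following. Suppose $G_1,G_2\in\Fq[X_0,\ldots,X_{r-1}]_1$ are linearly independent and $\Q_0(\Fq)\subset\V(G_1G_2)(\Fq)$ in $\PP^{r-1}$. Then $\Q'\eqdef\V(G_1G_2)\subset\PP^N$ satisfies $\Q(\Fq)\subset\Q'(\Fq)$. Indeed, $G_1G_2$ vanishes at every point of $\sing(\Q)$, and its value at any other rational point of $\Q$ depends only on the first $r$ coordinates. Strictness follows from Theorem~\ref{thm:Serre}: $\Q'$ is a union of distinct hyperplanes and thus has $2q^{N-1}+p_{N-2}$ rational points. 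Meanwhile $\Q$ is absolutely irreducible, so by Proposition~\ref{prop:NoOfPoints} it has at most $p_{N-1}+q^{N-s}<2q^{N-1}+p_{N-2}$ rational points; in our cases $s\geq 2$ or $\Q$ is parabolic. So the inclusion is strict, and $\Q$ is not maximal.

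It then remains to cover $\Q_0(\Fq)$ by two rational hyperplanes of $\PP^{r-1}$.

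\textbf{Case (\ref{item:q_more_than_3}).} Here $r=3$ and $\Q_0$ is a smooth conic $\V(X_0^2+X_1X_2)$ in $\PP^2$. It has $q+1\leq 4$ rational points, namely $(0:1:0)$, $(0:0:1)$ and $(t:-t^2:1)$ for $t\in\Fq^\times$.
\begin{itemize}
\item If $q=2$, there are $3$ points: take the line through two of them and any rational line through the third, distinct from the first.
\item If $q=3$, there are $4$ points: split them into two pairs and take the two lines joining each pair. These lines are distinct because no three points of a smooth conic are collinear, by Bézout (a line meets the irreducible conic in at most $2$ points).
\end{itemize}

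\textbf{Case (\ref{item:q=2}).} Here $r=4$, $q=2$ and $\Q_0=\V(X_0^2+X_0X_1+X_1^2+X_2X_3)\subset\PP^3$. By Proposition~\ref{prop:NoOfPoints} with $N=3$, $s=2$, it has $p_2-q=5$ rational points. Take a plane through three of them. It does not contain all five, since otherwise $\Q_0(\Fq)$ would lie in a hyperplane, contradicting Lemma~\ref{lem:Qhyperplane}. The remaining at most two points lie in a second rational plane distinct from the first; one can choose it, for instance, through those two points and a rational point off the first plane. These two planes give $G_1G_2$.

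The main point needing care is the lifting step, namely checking that the cone over $\Q_0$ really inherits the inclusion, including the points of the vertex. This is handled by the coordinate description above.
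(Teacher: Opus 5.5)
Your proof is correct and uses the same strategy as the paper: exhibit a union of two distinct rational hyperplanes containing $\Q(\Fq)$. The paper writes these down directly in $\PP^N$ as $\V(X_0(X_1+X_2))$ and $\V(X_0(X_2+X_3))$ and checks the inclusion pointwise, which makes your cone-lifting step automatic; you instead find the hyperplanes by a combinatorial argument on the base quadric, and you also verify strictness of the inclusion via Serre's bound, a point the paper leaves implicit.
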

\begin{proof}
    (\ref{item:q_more_than_3}) Without loss of generality, suppose $\Q=\V(X_0^2+X_1 X_2)$. Then,
    \[\Q(\Fq)\subset \V(X_0(X_1+X_2))(\Fq) \quad \text{for } q\leq 3.\]
    Indeed, let $P = (x_0:\cdots:x_N) \in \Q(\Fq)$, if $x_0 = 0$ then
    obviously $P \in \V(X_0(X_1+X_2))$.
    Otherwise, one can normalise to $x_0=1$ and hence $x_1 = -x_2^{-1}$, which in $\Fq$ with $q \in \{2,3\}$ entails $x_1 +x_2 = 0$. Hence, any $P \in \Q(\Fq)$ is in $\V(X_0(X_1+X_2))$.

    (\ref{item:q=2}) Without loss
    of generality, suppose $\Q=\V(X_0^2+X_1^2+X_0 X_1+X_2 X_3)$. Then
    \[\Q(\F_2)\subset \V(X_0(X_2+X_3))(\F_2).\]
\end{proof}

\subsection{Main statements}

Before stating the main theorems, we recall the Segre embedding
\begin{equation}\label{eq:mapTheta}
          \theta: \map{\PP^1\times \PP^1}{\PP^3}{((a_0:a_1), (b_0:b_1))}{(a_0 b_0: a_0 b_1: a_1 b_0: a_1 b_1).}
        \end{equation}
     This map is injective, and its image $\im\theta$ is precisely the hyperbolic quadric in  $\PP^3$ defined by $X_0 X_3-X_1X_2$.

\begin{thm}\label{thm:irredandred}
    Let $\Q$ be an absolutely irreducible quadric of $\PP^N$ and let $\Q'$ be a union of distinct rational hyperplanes in $\PP^N$. Then the rational points of $\Q$ and $\Q'$ are not contained in one another, except for the cases of Lemma~\ref{lem:NonMaximalCases}.
\end{thm}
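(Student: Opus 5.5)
The plan has two directions. The easy one is the inclusion $\Q'(\Fq)\subset\Q(\Fq)$. Write $\Q'=\L_1\cup\L_2$. This inclusion would give $\L_1(\Fq)\subset \Q(\Fq)$, so by Corollary~\ref{cor:linear} the hyperplane $\L_1$ lies in $\Q$. Then the linear form of $\L_1$ divides the defining form of $\Q$, which contradicts absolute irreducibility. So the whole work is to show that $\Q(\Fq)\subset (\L_1\cup\L_2)(\Fq)$ forces $\Q$ into one of the cases of Lemma~\ref{lem:NonMaximalCases}.

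First I reduce to the case where $\Q$ is smooth. By Proposition~\ref{prop:SingPts}, $\sing\Q\subset\sing\Q'=\L_1\cap\L_2$. By Proposition~\ref{prop:SingCone}, $\Q$ is a cone with vertex $\sing\Q$ over a smooth quadric $\Q_0=\Q\cap\Pi$, where $\Pi\simeq\PP^{r-1}$, $r=\rk\Q\ge 3$ and $\Pi\cap\sing\Q=\emptyset$.
\begin{itemize}
\item Since each $\L_i$ contains $\sing\Q$, the sections $\L_i\cap\Pi$ are hyperplanes of $\Pi$. They are distinct: $\L_1\cap\L_2$ has codimension $2$ and contains $\sing\Q$, and the dimension formula of Proposition~\ref{prop:SingCone} then forces $\L_1\cap\L_2\cap\Pi$ to have codimension $2$ in $\Pi$.
\item Restricting the inclusion to $\Pi$ gives $\Q_0(\Fq)\subset(\L_1\cap\Pi)(\Fq)\cup(\L_2\cap\Pi)(\Fq)$.
\end{itemize}
So we may assume $N=r-1$ and $\Q$ smooth. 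The exceptional cases correspond exactly to a smooth conic with $q\le 3$ and a smooth elliptic quadric in $\PP^3$ with $q=2$.

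In the smooth case I would count points via inclusion--exclusion:
\[
\sharp\Q(\Fq)=\sharp(\Q\cap\L_1)(\Fq)+\sharp(\Q\cap\L_2)(\Fq)-\sharp(\Q\cap\L_1\cap\L_2)(\Fq).
\]
Each term on the right is a quadric in a linear space of dimension $N-1$ or $N-2$. Its rank is controlled by Proposition~\ref{prop:HyperplaneCutRk}: a hyperplane section of a smooth rank-$r$ quadric has rank $r-1$ or $r-2$, and a codimension-$2$ section has rank between $r-4$ and $r-2$. In each case Proposition~\ref{prop:NoOfPoints}, together with the counts for reducible quadrics and Theorem~\ref{thm:Serre}, bounds the number of rational points. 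Two constraints also prune the configurations:
\begin{itemize}
\item the type of a section must be compatible with that of $\Q$, for instance through the projective index of Proposition~\ref{prop:ProjIndex};
\item the codimension-$2$ section is itself a hyperplane section of each $\Q\cap\L_i$.
\end{itemize}
Taking the maximal possible values of the two hyperplane terms and the minimal possible value of the intersection term, one checks that the right-hand side is strictly smaller than $\sharp\Q(\Fq)$. The deficit is of order $q^{N-s}$ against quantities of order $q^{N-s-1}$, so this works for all but small $r$ and $q$. I would organise the check by the parity of $r$ and by the type (parabolic, hyperbolic, elliptic) of $\Q$.

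The main obstacle is this case analysis in low dimension, $r\in\{3,4\}$ and perhaps $5$, where the inequalities are tight.
\begin{itemize}
\item For $r=3$: a smooth conic has $q+1$ points, and two lines meet it in at most $4$ points, which rules out $q\ge 4$.
\item For $r=4$ hyperbolic: the count is $(q+1)^2$ against at most $2(2q+1)-1$ points on two planes, which already fails for $q=2$.
\item For $r=4$ elliptic: the count is $q^2+1$ against at most $2(q+1)-1$, which fails exactly when $q\ge 3$.
\end{itemize}
These are precisely the exceptions of Lemma~\ref{lem:NonMaximalCases}. The remaining work is to confirm that the general inequality holds uniformly for $r\ge 5$, and that the tangent sections (rank $r-2$, corresponding to $\L_i=\T_P\Q$) never supply enough extra points. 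I would try to handle the latter by noting that a tangent section is a cone whose point count is again given explicitly by Proposition~\ref{prop:NoOfPoints}.
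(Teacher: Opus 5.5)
Your reverse inclusion, your reduction to a smooth $\Q$ through the cone structure, and your cases $r=3$ and $r=4$ elliptic are correct. The gap is in the counting strategy itself. Inclusion--exclusion over $\L_1$, $\L_2$ and $\L_1\cap\L_2$ is too weak when $q=2$: there the leading terms $2q^{N-2}$ and $q^{N-1}$ coincide, so everything depends on lower-order terms. In two cases your bound gives no contradiction.

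\emph{Case $r=4$ hyperbolic, $q=2$.} You get $2(2q+1)-1=9=(q+1)^2$, which is no contradiction. Your claim that it ``already fails for $q=2$'' is an arithmetic slip.

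\emph{Case $r=5$, $q=2$.} This one is more serious. Here $\sharp\P_5(\F_2)=15$, non-tangent hyperplane sections are $\H_4$ with $9$ points, and non-tangent plane sections of $\H_4$ are conics with $3$ points. This configuration passes both of your pruning rules: the projective indices are both $1$, and a conic is a plane section of $\H_4$. Your inclusion--exclusion then gives $9+9-3=15$, i.e.\ equality.

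The configuration is in fact impossible, but only because of a \emph{third} hyperplane. The pencil through $\L_1\cap\L_2$ has exactly $q+1=3$ rational members $\L_1,\L_2,\L_3$. Every rational point off $\L_1\cap\L_2$ lies on exactly one of them, so $\sharp(\Q\cap\L_3)(\F_2)=15+2\cdot 3-18=3$. On the other hand, $\Q\cap\L_3$ has rank $\geq 3$ in $\PP^3$ and hence at least $5$ rational points. No constraint involving only $\L_1$, $\L_2$ and their intersection detects this. So your ``remaining work'' for $r\geq 5$ is not a routine verification; it needs a new ingredient.

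That ingredient is the paper's key step, and it makes the counting unnecessary. Choose $\L_3\neq\L_1,\L_2$ containing $\L_1\cap\L_2$. Then $(\Q\cap\L_3)(\Fq)\subset(\L_1\cap\L_2)(\Fq)$, which is a hyperplane of $\L_3$. By Lemma~\ref{lem:Qhyperplane} the section $\Q\cap\L_3$ has rank at most $2$, and Proposition~\ref{prop:HyperplaneCutRk} then forces $\rk\Q\leq 4$. After your reduction only $\PP^2$ and $\PP^3$ remain.

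For $r=4$ hyperbolic, replace counting by the rulings. $\H_4$ contains $q+1\geq 3$ pairwise skew rational lines. By Corollary~\ref{cor:linear} each is contained in $\L_1$ or $\L_2$, so two of them are coplanar, which is absurd.

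Also, your ``$-1$'' for the line $\L_1\cap\L_2$ needs justification, since that line may contain no rational point of $\Q$. In the elliptic case this does not matter, because $2(q+1)<q^2+1$ already holds for $q\geq 3$.
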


\begin{proof}
  Assume that $\Q'=\L_1\cup \L_2$ for distinct rational hyperplanes
  $\L_1$ and $\L_2$ in $\PP^N$. Pick a new hyperplane, say $\L_3$, from
  the pencil of $\L_1$ and $\L_2$, that is to say a hyperplane distinct
  from $\L_1, \L_2$ which contains $\L_1 \cap \L_2$. Such a hyperplane
  exists since, without loss of generality, one can assume that
  $\L_1 = \V (X_1)$, $\L_2 = \V(X_2)$ and one can take
  $\L_3 = \V (X_1+X_2)$. Note that
  $\L_3 \cap (\L_1 \cup \L_2) = \L_1 \cap \L_2 = \L_1 \cap \L_3$.

  Then, since $\Q(\Fq) \subset (\L_1 \cup \L_2)(\Fq)$,
  \[\Q\cap \L_3(\Fq)\subset (\L_1\cup \L_2)\cap \L_3(\Fq)=(\L_1 \cap \L_2)(\Fq) = (\L_1 \cap \L_3)(\Fq).\]
  Regarding $Q \cap \L_3$ as a quadric of $\L_3 \simeq \PP^{N-1}$, we
  see that $(\Q\cap \L_3)(\F_q)$ is contained in $\L_1 \cap \L_3$ which is
  a hyperplane of $\L_3 \simeq \PP^{N-1}$.  By
  Lemma~\ref{lem:Qhyperplane}, $\Q\cap \L_3$ is either a double
  hyperplane or an irreducible non-absolutely irreducible quadric of
  $\L_3$. This implies, by
  Proposition~\ref{prop:Rk}, that $\rk (\Q\cap \L_3)$ equals either $1$ or $2$, while $\Q$
  is absolutely irreducible and hence has rank $\geq 3$. This,
  together with Proposition~\ref{prop:HyperplaneCutRk} implies that $\rk \Q$ equals either $3$
  or $4$. We proceed by considering the two aforementioned cases.

  \medskip

{\it Case 1.} $\rk \Q=3$.

Suppose $N\geq 3$. Then by Lemma~\ref{lem:SingRad},
$\dim \sing(\Q)\geq 0$.  Moreover, by Proposition~\ref{prop:SingPts},
$\sing\Q\subset \sing(\Q')$.
Now, we can pick a hyperplane, call it
$\L$, which does not contain $\sing(\Q')$. From
Proposition~\ref{prop:SameRk}, $\rk (\Q\cap \L)=\rk \Q$ and
$\rk (\Q'\cap \L)=\rk \Q'$ in $\PP^{N}\cap \L\simeq \PP^{N-1}$. This
hyperplane cut out process can be iterated until we reach $\PP^2$. Thus, we are
reduced to a problem in $\PP^2$ with $\Q$ parabolic of rank $3$ and $\Q'$ a union of
two distinct lines in $\PP^2$. Such a plane conic has $q+1$ rational points such that no $3$
of them are collinear. Hence, for $q\geq 4$,
$\Q(\Fq)\not\subset (\L_1\cup \L_2)(\Fq)$.  For $q = 2$ and $3$, we
land in the case of
Lemma~\ref{lem:NonMaximalCases}(\ref{item:q_more_than_3}).

\medskip

{\it Case 2.} $\rk \Q=4$.

Suppose $N\geq 4$. Following the same steps as in Case~1, we can
reduce our problem to solving it in $\PP^3$ with $\Q$ either elliptic
or hyperbolic both of rank $4$ and $\Q'=\L_1\cup \L_2$. First assume that
$\Q$ is elliptic of rank
$4$.  From Proposition~\ref{prop:ProjIndex}, its projective index is
$0$, hence $\Q$ contains no
rational line and so do the plane conics $\Q \cap \L_1$ and $\Q \cap
\L_2$ which therefore can only be either parabolic quadric or unions
of conjugate $\F_{q^2}$-lines. Consequently, by Proposition~\ref{prop:NoOfPoints},
each plane section $\Q \cap \L_1$ and $\Q \cap\L_2$ has at most
$q+1$ rational points, hence
\[
\Q(\Fq) \subset (\Q\cap \L_1) \cup (\Q\cap \L_2)\quad
\implies \quad \sharp\Q(\Fq) \leq 2(q+1).
\]
But $\sharp\Q(\Fq)=q^2+1$ by Proposition~\ref{prop:NoOfPoints} which gives a contradiction for $q\geq 3$. 

Now suppose $\Q$ is hyperbolic of rank $4$. By the Segre embedding~\eqref{eq:mapTheta}, we
can write $\Q$ as a $\PP^1$--bundle over $\PP^1$. In particular
it contains $\sharp \PP^1 (\Fq) = q+1 \geq 3$ pairwise skew rational lines, \emph{i.e.} no two of them are coplanar in $\PP^3$. By Corollary~\ref{cor:linear}, any such line is in $\L_1 \cup \L_2$ and hence
at least two of them are coplanar, a contradiction.
\end{proof}

Recall Theorem~\ref{thm:main}:

\main*

The unique exception in Theorem~\ref{thm:main} can be made explicit as
follows. Take $\Q=\V(X_0^2 + X_0X_1 + X_1^2 + X_2X_3)$ and
$\Q' = \V(X_0(X_0+X_3)+X_1(X_1+X_2))$. One can check that
$\Q (\F_2) \subset \Q' (\F_2)$.

Before proving the above statement, let us restate an interesting consequence stated in the introduction.

\characterisation*

\begin{proof}[Proof of Corollary~\ref{cor:characterisation}]
  Clearly, irreducible non absolutely irreducible quadrics can not be characterised by their set of rational points since this set is that of a linear subspace of codimension $2$ and for such a linear
  subspace, one can find more than one pair of conjugate $\F_{q^2}$-- hyperplanes containing it.

  Suppose now that two absolutely irreducible quadrics $\Q$ and $\Q'$
  have the same set of rational points.  In particular,
  $\Q (\Fq) \subset \Q'(\Fq)$. According to
  Theorem~\ref{thm:main}, they should be equal since the only possible exception of
  Theorem~\ref{thm:main} corresponds to quadrics that do not have the same number of rational points
  (elliptic and hyperbolic quadrics over $\F_2$).

  If one of the quadrics, say $\Q'$, is a union of distinct rational
  hyperplanes, then by the equality case of Serre's bound
  (Theorem~\ref{thm:Serre}), so is $\Q$ and we deduce that the two
  quadrics should be equal.

  Finally, if one of the quadrics is a double hyperplane, then, from
  Lemma~\ref{lem:Qhyperplane}, they should also be equal.
\end{proof}

\subsection{Proof of Theorem~\ref{thm:main}}
The proof consists in an induction on $N$. 
Let $F$ and $F'$ be the quadratic forms defining $\Q$ and $\Q'$ and let $r$ and $r'$ be the respective ranks of $\Q$ and $\Q'$. From Proposition~\ref{prop:SingCone}, $\Q$ is a cone on $\sing(\Q)$ of dimension $N-r$ and a smooth quadric in an $(r-1)$--dimensional linear subspace $\Pi$ skew to $\sing(\Q)$. Without loss of generality, assume that $\sing(\Q)=\V(X_0,\ldots,X_{r-1})$ and $\Pi=\V(X_r,\dots,X_N)$. Now, since $\Q(\Fq)\subset \Q'(\Fq)$, by Proposition~\ref{prop:SingPts}, $\sing(\Q)=\V(X_0,\ldots,X_{r-1})\subset \sing(\Q')$. Thus, using Proposition~\ref{prop:ToBaseQuadric}, both $F$ and $F'$ can be written in $\Fq[X_0,\ldots,X_{r-1}]$. Therefore, we can consider $\Q\cap \Pi$ and $\Q'\cap \Pi$ in the space $\Pi$ to study the same quadratic forms $F$ and $F'$. Since $\Q\cap \Pi$ is smooth in $\Pi$, 
we can reduce our problem to solving the case when $\Q$ is smooth. We first prove the induction step, which carries the main mechanism of the proof, and then verify the base cases.

\subsubsection{The case $N\ge 4$}\label{subsec:InductionStep} Assume that the theorem holds for $N-1$. Since $\Q$ is smooth, then $r=N+1\ge 5$.  Note that $r\ge r'$ by Proposition~\ref{prop:SingPts} and Proposition~\ref{prop:rankSing}. Moreover, since $\Q'$ is absolutely irreducible, $r'\ge 3$ because of Proposition~\ref{prop:Rk}(\ref{item:rkAbsIrr}). 
     
Without loss of generality, assume that $F$ is one of the forms in Theorem~\ref{thm:IrrClassQuadrics}, \emph{i.e.} $X_0^2+X_1 X_2+\cdots+X_{r-2} X_{r-1}$ if $r$ is odd, $X_0 X_1+\cdots+X_{r-2}X_{r-1}$ or $X_0^2+\alpha X_0 X_1+d X_1^2+X_2 X_3+\cdots+X_{r-2} X_{r-1}$ if $r$ is even. In the latter case, $\alpha=1$ with $\text{Tr}_{\Fq/\F_p}(d)=1$ if $p=2$ and $\alpha=0$ with $d$ is a non square if $p\ne 2$. Remark that these forms of $\Q$ agree with the assumption $\Pi=\V(X_0,\ldots,X_{r-1})$ in the beginning of the proof. 
      Consider the hyperplane sections of $\Q$ and $\Q'$ with respect to $\V(X_0)$. Since $\Q(\Fq)\subset \Q'(\Fq)$, we obtain $\Q(\Fq)\cap \V(X_0)\subset \Q'(\Fq)\cap \V(X_0)$. Because of the defined form, $\Q\cap \V(X_0)$ is never elliptic (hence avoiding the exceptional case of the theorem) and has rank greater than or equal to $4$.  Moreover, by Proposition~\ref{prop:HyperplaneCutRk}, $\rk (\Q'\cap \V(X_0))\geq 1$. Now by Lemma~\ref{lem:Qhyperplane} and Theorem~\ref{thm:irredandred}, $\rk (\Q'\cap \V(X_0))\geq 3$, \emph{i.e.} it is absolutely irreducible. Thus, we can use induction hypothesis in the space $\V(X_0)\simeq \PP^{N-1}$ and derive that 
\[F'(0,X_1,\ldots,X_{N-1},X_N)=\lambda F(0,X_1,\ldots,X_{N-1},X_N)\quad \text{for some } \lambda \in \F^{\times}_q.\]  
Write 
\[F=X_0 L+F(0,X_1,\ldots,X_{N-1},X_N) \text{ and }  F'=X_0 L'+ \lambda F(0,X_1,\ldots,X_{N-1},X_N)\]
for some linear forms $L,L'\in \Fq[X_0,\ldots,X_N]$. Then $F'-\lambda F=X_0(L'-\lambda L)$. Since $\Q(\Fq)\subset \Q'(\Fq)$, we obtain 
\[\Q(\Fq)\subset \V(F'-\lambda F)(\Fq)=\V(X_0(L'-\lambda L))(\Fq),\]
which is not possible by Theorem~\ref{thm:irredandred} as $r\ge 5$ unless $L'=\lambda L$ and hence
$F'=F$.

\subsubsection{The case $N=2$}

In $\PP^2$, all the absolutely irreducible conics
are parabolic of rank
$3$ by Theorem~\ref{thm:IrrClassQuadrics}. Therefore, by Proposition~\ref{prop:NoOfPoints}, $\sharp\Q(\Fq)=q+1$. Note
that an arbitrary conic is given by
\begin{equation}\label{eq:conic}
a_0 X_0^2 + a_1 X_1^2 + a_2 X_2^2 + a_3 X_0 X_1 + a_4 X_1 X_2 + a_5 X_0 X_2 = 0,
\end{equation}
where $a_i \in \F_q$ for $0 \leq i \leq 5$. In particular, the set of all plane
conics\footnote{In algebraic geometry, such sets are referred as
  \emph{linear systems} of curves.} is parameterized by a projective
space of dimension $5$.
It is well--known that $5$ points such that no $4$ of them are
collinear impose $5$ independent conditions
on~\eqref{eq:conic}. Therefore:

\medbreak

\noindent \textbf{If $\mathbf{q\geq 4}$}, $\sharp \Q(\Fq)=q+1\geq 5$ and no $3$ of these points are collinear.
Then, the condition $\Q(\Fq) \subset \Q'(\Fq)$ forces $\Q=\Q'$.

\medbreak

\noindent \textbf{If $\mathbf{q=3}$}, there are $4$ rational points on $\Q$ such that no $3$ are
collinear.  Hence they impose $4$ independent conditions on
conics~\eqref{eq:conic}. Thus the set of conics interpolating
$\Q(\F_3)$ has dimension $1$: it is a pencil. Thus, there are $q+1=4$
conics passing through $\Q (\F_3)$. Given these four points such that
no $3$ of them are collinear, there are exactly three ways to choose a
pair of lines (hyperplanes in $\PP^2$) passing through them. See Figure \ref{fig:qis3} for an illustration. This entails that $\Q$ is the unique irreducible conic in this pencil and hence that $\Q'=\Q$.

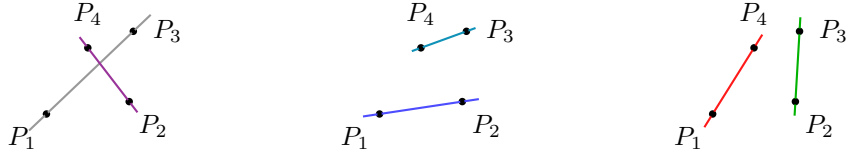
\begin{figure}[h]
\centering
\begin{minipage}{0.3\textwidth}
\centering
\begin{tikzpicture}[scale=0.55, every node/.style={circle, fill=black, inner sep=1pt, label distance=1mm}]
\node (A) at (0,0) [label=below left:$P_1$] {};
\node (B) at (2,0.3) [label=below right:$P_2$] {};
\node (C) at (2.1,2) [label=right:$P_3$] {};
\node (D) at (1,1.6) [label=above:$P_4$] {};
\foreach \P/\Q in {A/C} {
  \draw[thick, gray!80] ($( \P )! -0.2! ( \Q )$) -- ($( \P )! 1.2! ( \Q )$);
}
\foreach \P/\Q in {B/D} {
  \draw[thick, violet!80] ($( \P )! -0.2! ( \Q )$) -- ($( \P )! 1.2! ( \Q )$);
}
\foreach \P in {A,B,C,D} \fill (\P) circle (1.5pt);
\end{tikzpicture}
\end{minipage}
\hfill
\begin{minipage}{0.3\textwidth}
\centering
\begin{tikzpicture}[scale=0.55, every node/.style={circle, fill=black, inner sep=1pt, label distance=1mm}]
\node (A) at (0,0) [label=below left:$P_1$] {};
\node (B) at (2,0.3) [label=below right:$P_2$] {};
\node (C) at (2.1,2) [label=right:$P_3$] {};
\node (D) at (1,1.6) [label=above:$P_4$] {};
\foreach \P/\Q in {A/B} {
  \draw[thick, blue!70] ($( \P )! -0.2! ( \Q )$) -- ($( \P )! 1.2! ( \Q )$);
}
\foreach \P/\Q in {C/D} {
  \draw[thick, cyan!70!black] ($( \P )! -0.2! ( \Q )$) -- ($( \P )! 1.2! ( \Q )$);
}
\foreach \P in {A,B,C,D} \fill (\P) circle (1.5pt);
\end{tikzpicture}
\end{minipage}
\hfill
\begin{minipage}{0.3\textwidth}
\centering
\begin{tikzpicture}[scale=0.55, every node/.style={circle, fill=black, inner sep=1pt, label distance=1mm}]
\node (A) at (0,0) [label=below left:$P_1$] {};
\node (B) at (2,0.3) [label=below right:$P_2$] {};
\node (C) at (2.1,2) [label=right:$P_3$] {};
\node (D) at (1,1.6) [label=above:$P_4$] {};
\foreach \P/\Q in {A/D} {
  \draw[thick, red!90] ($( \P )! -0.2! ( \Q )$) -- ($( \P )! 1.2! ( \Q )$);
}
\foreach \P/\Q in {B/C} {
  \draw[thick, green!70!black] ($( \P )! -0.2! ( \Q )$) -- ($( \P )! 1.2! ( \Q )$);
}
\foreach \P in {A,B,C,D} \fill (\P) circle (2pt);
\end{tikzpicture}
\end{minipage}

\caption{Reducible conics through $P_1,P_2,P_3,P_4$ with no three collinear.}
\label{fig:qis3}
\end{figure}

\noindent \textbf{If $\mathbf{q=2}$}, then there are $3$ non collinear rational
points on $\Q$. Hence, the linear system interpolating these points is
parameterized by $\PP^2$: there are $q^2+q+1=7$
rational conics passing through these $3$ points. Let us count the
$\Fq$-reducible conics again. Denote by $P_1, P_2, P_3$ the $3$
rational points and by $\ell_{ij}$ the line passing through $P_i$ and
$P_j$.  Take line $\ell_{12}$ passing through $P_1, P_2$ and an arbitrary
rational line passing through $P_3$. There are $3$ possible choices
for the line through $P_3$, which leads to $3$ reducible conics
passing through $P_1, P_2, P_3$ and containing $\ell_{12}$. Proceed to the same operation
considering the union of $\ell_{13}$ with a line passing through
$P_2$. Similarly there are $3$ such reducible conics including
$\ell_{13} \cup \ell_{12}$ which was already counted. This yields $2$
additional conics.  Finally consider the union of $\ell_{23}$ with a line
through $P_1$: we already counted $\ell_{23}\cup \ell_{12}$ and
$\ell_{23} \cup \ell_{13}$ and there remains the line passing through $P_1$
and the third rational point of $\ell_{23}$. This yields one additional
reducible conic. In summary, there are $6$ reducible conics in this set
of seven conics interpolating $\Q(\Fq) = \{P_1, P_2, P_3\}$. See Figure~\ref{fig:qis2} for an illustration. Hence, as
in the $\F_3$ case, this entails that $\Q$ is the unique irreducible
conic in this set and hence that $\Q = \Q'$. 

\begin{figure}[h]
\centering
\begin{minipage}{0.3\textwidth}
\centering
\begin{tikzpicture}[scale=0.8]
\node (A) at (0,0) [label=below left:$P_1$] {};
\node (B) at (1.6,0.2) [label=below right:$P_3$] {};
\node (C) at (1.2,1.2) [label=above:$P_2$] {};
\node (D) at (1.8,1.8) [label=above:${}$] {};
\foreach \P/\Q in {A/D} {
  \draw[thick, blue!80] ($( \P )! -0.2! ( \Q )$) -- ($( \P )! 1.2! ( \Q )$)
  node [midway, left] {$\ell_{12}$};
}
\foreach \P/\Q in {B/A} {
  \draw[thick, gray!80] ($( \P )! -0.2! ( \Q )$) -- ($( \P )! 1.2! ( \Q )$)
  node [midway, below] {$\ell_{13}$};
}
\foreach \P in {A,B,C,D} \fill (\P) circle (1.5pt);

\draw[red, thick] (D) -- ++(-0.1,-0.1) -- ++(0.2,0.2)  ++(-0.2,0) -- ++(0.2,-0.2);
\end{tikzpicture}
\end{minipage}
\hfill
\begin{minipage}{0.3\textwidth}
\centering
\begin{tikzpicture}[scale=0.8]
\node (A) at (0,0) [label=below left:$P_1$] {};
\node (B) at (1.6,0.2) [label=below right:$P_3$] {};
\node (C) at (1.2,1.2) [label=above:$P_2$] {};
\node (D) at (1.8,1.8) [label=above:${}$] {};
\foreach \P/\Q in {A/D} {
  \draw[thick, blue!80] ($( \P )! -0.2! ( \Q )$) -- ($( \P )! 1.2! ( \Q )$)
  node [midway, left] {$\ell_{12}$};
}
\foreach \P/\Q in {B/C} {
  \draw[thick, red!70!black] ($( \P )! -0.2! ( \Q )$) -- ($( \P )! 1.2! ( \Q )$)
  node [midway, right] {$\ell_{23}$};
}
\foreach \P in {A,B,C,D} \fill (\P) circle (1.5pt);

\draw[red, thick] (D) -- ++(-0.1,-0.1) -- ++(0.2,0.2)  ++(-0.2,0) -- ++(0.2,-0.2);
\end{tikzpicture}
\end{minipage}
\hfill
\begin{minipage}{0.3\textwidth}
\centering
\begin{tikzpicture}[scale=0.8]
\node (A) at (0,0) [label=below left:$P_1$] {};
\node (B) at (1.6,0.2) [label=below right:$P_3$] {};
\node (C) at (1.2,1.2) [label=above:$P_2$] {};
\node (D) at (1.8,1.8) [label=above:${}$] {};
\foreach \P/\Q in {A/D} {
  \draw[thick, blue!80] ($( \P )! -0.2! ( \Q )$) -- ($( \P )! 1.2! ( \Q )$)
  node [midway, left] {$\ell_{12}$};
}
\foreach \P/\Q in {B/D} {
  \draw[thick, magenta!70] ($( \P )! -0.2! ( \Q )$) -- ($( \P )! 1.2! ( \Q )$)
  ;
}
\foreach \P in {A,B,C,D} \fill (\P) circle (1.5pt);

\draw[red, thick] (D) -- ++(-0.1,-0.1) -- ++(0.2,0.2)  ++(-0.2,0) -- ++(0.2,-0.2);
\end{tikzpicture}
\end{minipage}

\centering
\begin{minipage}{0.3\textwidth}
\centering
\begin{tikzpicture}[scale=1]
\node (A) at (0,0) [label=below left:$P_1$] {};
\node (B) at (1.6,0.2) [label=below right:$P_3$] {};
\node (C) at (1.2,1.2) [label=left:$P_2$] {};
\node (E) at (2.5,0.3125) [label=above:${}$] {};
\foreach \P/\Q in {A/E} {
  \draw[thick, green!70!black] ($( \P )! -0.2! ( \Q )$) -- ($( \P )! 1.2! ( \Q )$)
  node [midway, below] {$\ell_{13}$};
}
\foreach \P/\Q in {B/C} {
  \draw[thick, red!70!black] ($( \P )! -0.2! ( \Q )$) -- ($( \P )! 1.2! ( \Q )$)
  node [midway, right] {$\ell_{23}$};
}
\foreach \P in {A,B,C,E} \fill (\P) circle (1.5pt);

\draw[red, thick] (E) -- ++(-0.1,-0.1) -- ++(0.2,0.2)  ++(-0.2,0) -- ++(0.2,-0.2);
\end{tikzpicture}
\end{minipage}
\hfill
\begin{minipage}{0.3\textwidth}
\centering
\begin{tikzpicture}[scale=0.8]
\node (A) at (0,0) [label=below left:$P_1$] {};
\node (B) at (1.6,0.2) [label=below right:$P_3$] {};
\node (C) at (1.2,1.2) [label=left:$P_2$] {};
\node (E) at (2.5,0.3125) [label=above:${}$] {};
\foreach \P/\Q in { A/E} {
  \draw[thick, green!70!black] ($( \P )! -0.2! ( \Q )$) -- ($( \P )! 1.2! ( \Q )$)
  node [midway, below] {$\ell_{13}$};
}
\foreach \P/\Q in {C/E} {
  \draw[thick, violet!80] ($( \P )! -0.2! ( \Q )$) -- ($( \P )! 1.2! ( \Q )$);
}
\foreach \P in {A,B,C,E} \fill (\P) circle (1.5pt);

\draw[red, thick] (E) -- ++(-0.1,-0.1) -- ++(0.2,0.2)  ++(-0.2,0) -- ++(0.2,-0.2);
\end{tikzpicture}
\end{minipage}
\hfill
\begin{minipage}{0.3\textwidth}
\centering
\begin{tikzpicture}[scale=0.8]
\node (A) at (0,0) [label=below left:$P_1$] {};
\node (B) at (1.6,0.2) [label=below right:$P_3$] {};
\node (C) at (1.2,1.2) [label=right:$P_2$] {};
\node (F) at (1,1.7) [label=above:${}$] {};
\foreach \P/\Q in {B/F} {
  \draw[thick, red!70!black] ($( \P )! -0.2! ( \Q )$) -- ($( \P )! 1.2! ( \Q )$)
  node [midway, below right] {$\ell_{23}$};
}
\foreach \P/\Q in {A/F} {
  \draw[thick, cyan!80!black] ($( \P )! -0.2! ( \Q )$) -- ($( \P )! 1.2! ( \Q )$);
}
\foreach \P in {A,B,C,F} \fill (\P) circle (1.5pt);

\draw[red, thick] (F) -- ++(-0.1,-0.1) -- ++(0.2,0.2)  ++(-0.2,0) -- ++(0.2,-0.2);
\end{tikzpicture}
\end{minipage}

\caption{Reducible conics through non collinear points $P_1,P_2,P_3$.}
\label{fig:qis2}
\end{figure}
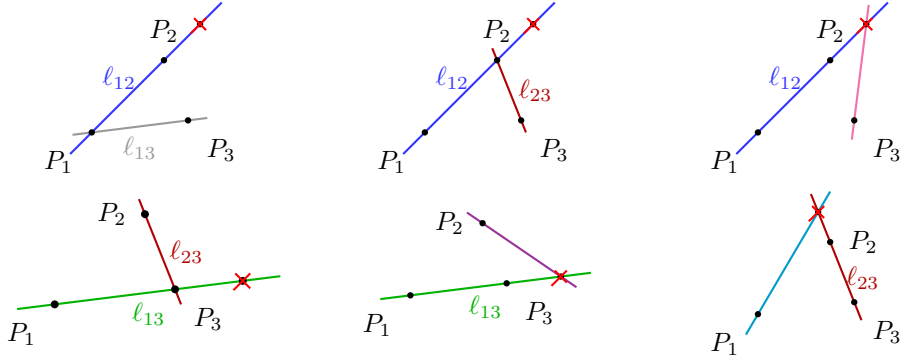

\subsubsection{The case $N=3$}

Recall that, using Proposition~\ref{prop:SingPts} and
Proposition~\ref{prop:rankSing}, we have $r\geq r'$. Moreover, since
$\Q$ is smooth, $r=4$ and, since $\Q'$ is absolutely irreducible, by Proposition~\ref{prop:Rk}, we obtain
$r'\ge 3$. In summary: $4=r\ge r'\ge 3$.

\medskip

\noindent \textbf{Suppose $\Q\sim \E_4$ and $\Q'\sim \E_4$.}
From Theorem~\ref{thm:IrrClassQuadrics}, after a suitable linear
change of variables, one can assume that
$F(X_0,X_1,X_2,X_3)=f(X_0,X_1)+X_2X_3$ where $f$ is $\Fq$-irreducible.
Then, $\Q\cap \V(X_0)$ is parabolic of rank $3$ and, from
Proposition~\ref{prop:HyperplaneCutRk}, $\Q'\cap \V(X_0)$ has rank $2$
or $3$. If $\rk (\Q'\cap \V(X_0))=2$, then either it is a union of two
rational lines or a union of two conjugate $\F_{q^2}$--lines. In the
former case, then $g(\Q'\cap \V(X_0))=1$ which is a contradiction to
$\Q'$ being elliptic of rank $4$ as $g(\Q')=0$ by
Proposition~\ref{prop:ProjIndex}. In the latter case, then
$\Q'\cap \V(X_0)$ has a unique rational point while containing the
$q+1$ rational points of $\Q \cap \V (X_0)$, a contradiction. Thus,
$\Q'\cap \V(X_0)$ is also parabolic of rank $3$. By the result proved
for $N=2$, $\Q\cap \V(X_0)=\Q'\cap \V(X_0)$, \emph{i.e.}
$F'(0,X_1,X_2,X_3)=\lambda F(0,X_1,X_2,X_3)$. Write
$F'=F'(0,X_1,X_2,X_3)+X_0 L'$ for some linear form
$L'\in \Fq[X_0,X_1,X_2,X_3]$. Then,
\begin{equation}\label{eq:1st_difference}
  F'-\lambda F=X_0 (L'-\lambda (X_0+\alpha X_1)),
\end{equation}
which implies that $\Q(\Fq)\subset \V(X_0(L'-\lambda(X_0+ \alpha X_1)))$. By Theorem~\ref{thm:irredandred}, when $q\ne 2$ it follows that $L'=\lambda(X_0+ \alpha X_1)$  and hence $\Q' = \Q$.
When $q=2$, applying the same argument to the hyperplane sections given by $\V(X_1)$, we obtain 
\begin{equation}\label{eq:2nd_difference}
  F'-\lambda'F=X_1(L''-\lambda' (\alpha X_0+X_1))
\end{equation}
for some linear form $L''\in \F_2[X_0,\ldots,X_3]$. Note that, since
we work over $\F_2$ then $\alpha = \lambda = \lambda'=1$.
Taking (\ref{eq:1st_difference}) and (\ref{eq:2nd_difference}) together
shows that the quadratic form $F - F'$ is divisible by both $X_0$ and $X_1$.
Therefore, either $F=F'$ and we are done or $F-F' = X_0X_1$. In the latter case,
$F' = X_0^2+X_1^2 + X_2X_3 = (X_0+X_1)^2+X_2X_3$ would be parabolic of rank $3$,
which is a contradiction. As a conclusion, $\Q = \Q'$.

\medskip

\noindent \textbf{Suppose $r=4$ and $r'=3$.}
Since $\sharp\Q(\Fq)\leq \sharp\Q'(\Fq)$, Proposition~\ref{prop:NoOfPoints} implies that $\Q$ is elliptic. As $\Q'$ being parabolic is a cone with a unique singular point, say $P$, we can describe $\Q'(\Fq)$  as a union of $q+1$ lines through $P$, say $\ell_1,\ldots,\ell_{q+1}$ corresponding to the fact that $\sharp \P_2(\Fq)=q+1$ in $\PP^2$. Then, 
\begin{equation}\label{eq:parabola}
     	\Q(\Fq)  \subset \bigcup_{i=1}^{q+1} (\Q\cap \ell_i)(\Fq) \quad  \implies  \quad  \sharp \Q(\Fq)  \leq \sum_{i=1}^{q+1} \sharp (\Q\cap \ell_i)(\Fq) 
     	\leq 2(q+1),
     \end{equation}
where last inequality uses $g(\Q)=0$
     (Proposition~\ref{prop:ProjIndex}) which shows that none of the
     $\ell_i$'s are contained in $\Q$ and hence all intersect $\Q$ at
     most at two rational points. This gives a contradiction for
     $q\ne 2$ as $\sharp\Q(\Fq)=q^2+1$ by
     Proposition~\ref{prop:NoOfPoints}. If $q=2$, for
     \eqref{eq:parabola} to hold, we should have $P\notin \Q(\Fq)$ as
     otherwise $P\in \Q\cap \ell_i$ for all $i$ and hence
     \eqref{eq:parabola} would imply $5 = q^2+1\leq (q+1) +1=4$.
     Moreover, \eqref{eq:parabola} imposes that
     $\Q\cap \ell_i(\F_2)=2$ for at least two lines, say $\ell_1$ and
     $\ell_2$. Now note that $\ell_1$ and $\ell_2$ are
     coplanar by virtue of sharing the vertex $P$, let
     $\Pi\subset \PP^3$ denote the plane they span. Then,
     $\sharp (\Pi\cap \Q)(\F_2)\geq 4$. But $\rk (\Q\cap \Pi)=2$ or
     $3$ by Proposition~\ref{prop:Rk}. In the first case, since
     $g(\Q)=0$, we obtain $\sharp (\Pi\cap \Q)(\F_2)=1$, corresponding
     to the union of $\F_{2^2}$-lines, and in the second case,
     $\Pi\cap \Q$ is parabolic of rank $3$ and hence
     $\sharp (\Pi\cap \Q)(\F_2)=2+1=3$, a contradiction.

\medskip
     
\noindent \textbf{Suppose $\Q\sim \E_4$ and $\Q'\sim \H_4$.}
By the Segre embedding~\eqref{eq:mapTheta}, $\Q'(\Fq)$ is a union of $q+1$ disjoint lines, say $\ell_1,\ldots, \ell_{q+1}$. Then again,
\begin{equation*}
     	\Q(\Fq)  \subset \bigcup_{i=1}^{q+1} (\Q\cap \ell_i)(\Fq)  \quad \implies \quad \sharp \Q(\Fq)  \leq \sum_{i=1}^{q+1} \sharp (\Q\cap \ell_i)(\Fq) 
     	\leq 2(q+1),
     \end{equation*}   
     a contradiction for $q\geq 3$, consistent with the hypothesis of the theorem that $q\ne 2$ for this case.

\medskip
     
\noindent \textbf{Suppose $\Q\sim \H_4$ and $\Q'\sim \H_4$.}
As in the induction step~(\ref{subsec:InductionStep}), assume again
that $\Q$ is defined by $F(X_0,X_1,X_2,X_3) = X_0 X_1 + X_2 X_3 =
0$. Then, $\Q\cap \V(X_0)$ is a union of distinct lines in $\PP^2$ (it
has equation $X_2X_3 = 0$). Since $\Q(\Fq) \subset \Q'(\Fq)$, then, from
Corollary~\ref{cor:linear}, $\Q' \cap \V(X_0)$ is the same union of
lines and hence is also defined by $X_2X_3=0$. Therefore,
$F'(0,X_1,X_2,X_3)=\lambda F(0,X_1,X_2,X_3)$ for some nonzero scalar
$\lambda$ and hence $F' = \lambda F + X_0 L'$ for some linear form
$L'$. If $L' = 0$, we are done. Otherwise, since
$\Q(\Fq)\subset \Q'(\Fq)$, we obtain $\Q(\Fq)\subset \V(X_0L')(\Fq)$,
\emph{i.e.} $\Q(\Fq)$ is contained in a union of hyperplanes, which
contradicts Theorem~\ref{thm:irredandred}, as desired.

\section{Minimal codewords of projective Reed--Muller codes of
degree $2$}\label{sec:minimal_codewords}

Regarding codes, we restate our main result:

\mincodewords*

\begin{proof}
Follows from Theorem~\ref{thm:irredandred} and Theorem~\ref{thm:main}. 
\end{proof}

For $r\geq 1$, let $\Q_r$ be a smooth quadric in $\PP^{r-1}$, in
particular, $\rk \Q_r=r$. Denote by $\N(\Q_r)$ the number of quadrics
which are projectively equivalent to $\Q_r$. Here we recall the
following statement.

\begin{thm}[{\cite[Thm.~1.44(ii)]{hirschfeld1991general}}]
  We have,
  \begin{itemize}
\item $\N(\P_r)=\displaystyle q^{(r-1)(r+1)/4}\prod_{i=1}^{(r-1)/2}(q^{2i+1}-1)$, \hfill (r odd)
\item $\N(\H_r)=\displaystyle \frac{1}{2}q^{r^2/4}(q^{r/2}+1) \prod_{i=1}^{(r-2)/2} (q^{2i+1}-1)$, \hfill (r even)
\item $\N(\E_r)=\displaystyle \frac{1}{2} q^{r^2/4} (q^{r/2}-1) \prod_{i=1}^{(r-2)/2} (q^{2i+1}-1)$. \hfill (r even)
\end{itemize}
\end{thm}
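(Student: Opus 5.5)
This is a cited result, so the plan is to recover the count by an orbit–stabiliser argument rather than to reprove the classification. By Theorem~\ref{thm:IrrClassQuadrics}, smooth quadrics in $\PP^{r-1}$ of a given type form a single orbit under $\mathbf{PGL}_r(\Fq)$. Hence
\[
\N(\Q_r)=\frac{\sharp\mathbf{PGL}_r(\Fq)}{\sharp \mathrm{Stab}(\Q_r)}.
\]
The stabiliser of the quadric $\V(F)$ is the image in $\mathbf{PGL}_r$ of the group of similitudes of $F$, that is, the matrices $g$ with $F\circ g=\mu F$ for some $\mu\in\F_q^\times$. Modulo scalars, this is the projective orthogonal similitude group.

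The plan has three steps.
\begin{enumerate}
\item Write $\sharp\mathbf{PGL}_r(\Fq)=q^{r(r-1)/2}\prod_{i=2}^{r}(q^i-1)$.
\item Recall the orders of the orthogonal groups. For $r=2m+1$, $\sharp O_{2m+1}(q)=2q^{m^2}\prod_{i=1}^m(q^{2i}-1)$ in odd characteristic. In characteristic $2$ one uses the group preserving the quadratic form, of order $q^{m^2}\prod(q^{2i}-1)$. For $r=2m$ and $\epsilon=\pm1$ (hyperbolic or elliptic), $\sharp O^\epsilon_{2m}(q)=2q^{m(m-1)}(q^m-\epsilon)\prod_{i=1}^{m-1}(q^{2i}-1)$.
\item Pass from isometries to projective similitudes. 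The similitude group modulo scalars has order $\sharp O\cdot(q-1)/\gcd$-type corrections, and the net effect is that $\sharp\mathrm{Stab}=\sharp O_r^\epsilon(q)$ in the even case and $\sharp O_r(q)/2$ (or the characteristic-$2$ analogue) in the odd case. The cleanest way to see this is that every multiplier $\mu$ is attained up to scalars, so the group $\{g: F\circ g\in\F_q^\times F\}/\F_q^\times$ has order $\sharp O\cdot(q-1)/(q-1)$ up to the $\pm1$ overlap.
\end{enumerate}

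Dividing then telescopes. The factors $q^{2i}-1$ in the denominator cancel the even-indexed $q^i-1$ in $\sharp\mathbf{PGL}_r$, leaving $\prod(q^{2i+1}-1)$. In the even case the remaining $q^m-\epsilon$ combines with $q^{2m}-1$ to leave $q^m+\epsilon$, and the factor $2$ in $\sharp O^\epsilon$ produces the $\tfrac12$. The powers of $q$ reduce to $q^{r(r-1)/2-m(m-1)}=q^{r^2/4}$ when $r=2m$, and to $q^{(r-1)(r+1)/4}$ when $r=2m+1$.

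As a consistency check, sum over all types to recover the total number of nondegenerate forms up to scalars, and test small cases: for $r=3$, $\N(\P_3)=q^2(q^3-1)$, the number of smooth conics.

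The main obstacle is step~3: correctly identifying the stabiliser with similitudes modulo scalars. This requires tracking the factor $2$ from $\pm1$ and non-square multipliers in odd characteristic, and handling characteristic $2$ uniformly, where the odd-dimensional form has a nucleus and the orthogonal group is isomorphic to $\mathrm{Sp}_{2m}$. I would settle this by checking small $r$ against direct counts.
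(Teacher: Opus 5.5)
The paper gives no proof of this statement; it is quoted from Hirschfeld--Thas, so the question is whether your argument stands on its own. Your overall route is the standard one: orbit--stabiliser, with the stabiliser of $\V(F)$ equal to the similitude group of $F$ modulo scalars, which is legitimate because $\V(F)$ determines $F$ up to a scalar. The stabiliser orders you finally assert are the right ones. The telescoping also checks out: $r(r-1)/2-m^2=(r-1)(r+1)/4$ for $r=2m+1$, $r(r-1)/2-m(m-1)=r^2/4$ for $r=2m$, and $(q^{2m}-1)/(q^m-\epsilon)=q^m+\epsilon$.

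The gap is step~3. This is exactly where it is decided whether the stabiliser has order $\sharp O$ or $\sharp O/2$, and you neither prove it nor give a correct reason. The reason you give, that ``every multiplier $\mu$ is attained up to scalars'', is false when $r$ is odd and $q$ is odd. It is precisely its failure that produces $\sharp O_{2m+1}(q)/2$. Checking small $r$ numerically cannot replace an argument valid for all $r$.

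What is needed is the following. Let $\mathrm{GO}(F)=\{g\in\mathbf{GL}_r(\F_q): F\circ g=\mu(g)F\}$. The multiplier $\mu$ is a homomorphism to $\F_q^\times$ with kernel $O(F)$, and the stabiliser is $\mathrm{GO}(F)/\F_q^\times$. Hence
\[
\sharp\,\mathrm{Stab}=\frac{\sharp O(F)\cdot\sharp\,\mathrm{im}\,\mu}{q-1}.
\]
\begin{enumerate}
\item Scalars $\lambda I$ have multiplier $\lambda^2$, so $\mathrm{im}\,\mu$ contains all squares. When $q$ is even this is already all of $\F_q^\times$.
\item For $r=2m$, every $\mu$ occurs for any $q$. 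Multiply one variable of each hyperbolic pair $X_{2i}X_{2i+1}$ by $\mu$. In the elliptic case, the anisotropic form $f$ is a scalar multiple of the norm form of $\F_{q^2}/\F_q$, so act on that plane by multiplication by an element of norm $\mu$.
\item For $r=2m+1$ and $q$ odd, write $F=X^\top AX$ with $\det A\neq 0$. The identity $g^\top Ag=\mu A$ gives $\det(g)^2=\mu^{r}$, so $\mu$ is a square and $\sharp\,\mathrm{im}\,\mu=(q-1)/2$.
\end{enumerate}
This yields $\sharp\,\mathrm{Stab}=\sharp O^\epsilon_{2m}(q)$ in even rank, $\sharp O_{2m+1}(q)/2$ in odd rank and odd characteristic, and $\sharp O_{2m+1}(q)=q^{m^2}\prod_{i=1}^m(q^{2i}-1)$ in odd rank and characteristic $2$. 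With this in place your computation goes through.

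Note also that step~2 imports the orders of the orthogonal groups, which carry essentially the same information as the statement; Hirschfeld--Thas obtain the two together. So your argument reduces one classical citation to another rather than deriving the count from scratch.
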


Using this result, we give the number of minimal codewords of $\prm_q(2,N)$ as well.
Recall that the Gaussian binomial coefficient
\[
  \GaussBinom{n}{k}
\]
denotes the number of $k$--dimensional subspaces of $\F_q^n$.

\begin{thm}
  The number of minimal codewords of weight $w$ in $\prm_q(2,N)$  is
  \begin{enumerate}[$(i)$]
  \item\label{item:i} $(q-1)\GaussBinom{N+1}{2} {q+1 \choose 2},$ if  $w=q^{N}-q^{N-1}$;
    \medskip
  \item\label{item:ii} $(q-1)\sum\limits_{\substack{r=3+\delta \\ r \text{ odd}}}^{N+1}\GaussBinom{N+1}{r}\N(\P_r),$ if $w=q^N$;
    \medskip
  \item\label{item:iii} $(q-1)\GaussBinom{N+1}{r} \N(\H_r)$ if  $w=q^N-q^{N-r/2}$  for $4 \leq r\leq N+1$ even;
    \medskip
  \item\label{item:iv} $(q-1)\GaussBinom{N+1}{r} \N(\E_r)$ if  $w=q^N+q^{N-r/2}$ for  $4+\epsilon \leq r\leq N+1$ even;
    \medskip
  \item\label{item:v} $0$ otherwise,
  \end{enumerate}  
where $\delta=0$ if $q>3$, $\delta=2$ if $q\leq 3$, $\epsilon=0$ if $q\ne 2$ and $\epsilon=2$ if $q=2$. 
\end{thm}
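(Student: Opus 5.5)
The plan is to combine the classification of Theorem~\ref{thm:intro:mincodewords} with a direct count of quadrics of each type. There are three steps: (a) show that minimal codewords correspond $(q-1)$-to-one to the admissible quadrics; (b) compute the weight attached to each type; (c) count the quadrics of each type.

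\textbf{Step (a): from codewords to quadrics.} Let $c=\Ev(F)$ be minimal, with $F$ as in Theorem~\ref{thm:intro:mincodewords}. Its support is the complement of $\V(F)(\Fq)$. By Corollary~\ref{cor:characterisation}, this point set determines the quadric $\V(F)$, because admissible quadrics are never irreducible-but-not-absolutely-irreducible. By the Nullstellensatz remark of Subsection~\ref{subsec:Classes}, the quadric in turn determines $F$ up to a scalar $\lambda\in\F_q^\times$. Since $c\neq 0$, the $q-1$ codewords $\Ev(\lambda F)$ are pairwise distinct. This holds even for $q=2$, where $\Ev$ itself need not be injective, because we never use injectivity of $\Ev$ on forms. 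Hence the number of minimal codewords of a given weight is $(q-1)$ times the number of admissible quadrics whose number of rational points is $p_N-w$.

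\textbf{Step (b): weights.} I would use the point counts recalled before Proposition~\ref{prop:NoOfPoints} and Proposition~\ref{prop:NoOfPoints} itself.
\begin{itemize}
\item A union of two distinct hyperplanes gives $w=p_N-2q^{N-1}-p_{N-2}=q^N-q^{N-1}$.
\item $\P_r$ gives $w=p_N-p_{N-1}=q^N$.
\item $\H_{2s}$ gives $w=q^N-q^{N-s}$.
\item $\E_{2s}$ gives $w=q^N+q^{N-s}$.
\end{itemize}
For hyperbolic and elliptic quadrics we have $s\ge 2$, so $N-s\le N-2$. Hence the weights of different types are pairwise distinct, apart from the trivially matching parabolic ranks, which are all summed in (ii). Any other weight therefore has no minimal codeword, which gives (v).

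\textbf{Step (c): counting quadrics.} For pairs of hyperplanes, the intersection $\L_1\cap\L_2$ is a codimension-$2$ subspace. Such subspaces are counted by $\GaussBinom{N+1}{2}$ (as $2$-dimensional subspaces of the dual space). The pencil of rational hyperplanes through that subspace has $q+1$ members, and we choose an unordered pair of them: ${q+1\choose 2}$ choices.

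For an absolutely irreducible quadric of rank $r$, Proposition~\ref{prop:rankRad} shows that its radical is a subspace of dimension $N+1-r$. Proposition~\ref{prop:ToBaseQuadric} shows that, once the radical is fixed, $F$ is a form on the $r$-dimensional quotient. This quotient form is nondegenerate, so it defines a smooth quadric $\Q_r$ of the same type. Conversely, every smooth form on the quotient pulls back to a quadric of rank $r$ with exactly that radical.

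So the quadrics of rank $r$ and a given type are in bijection with pairs (radical, smooth quadric on the quotient). The number of radicals is $\GaussBinom{N+1}{N+1-r}=\GaussBinom{N+1}{r}$. The number of smooth quadrics of that type on the quotient is $\N(\P_r)$, $\N(\H_r)$ or $\N(\E_r)$ by the theorem of Hirschfeld--Thas. Finally I remove the cases excluded in Theorem~\ref{thm:intro:mincodewords}: rank $3$ when $q\le3$, giving the shift $\delta$, and elliptic rank $4$ when $q=2$, giving the shift $\epsilon$. This yields (i)--(iv).

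\textbf{Main obstacle.} The step I expect to need the most care is the bijection in Step (c). One has to check that the pull-back from the quotient preserves the parabolic/hyperbolic/elliptic type, and that it is compatible with the characteristic-$2$ notion of radical, $\rad F$ rather than $\rad B_F$. I would handle this by choosing coordinates adapted to the radical, exactly as in the proof of Proposition~\ref{prop:ToBaseQuadric}, and then reading off the canonical form of Theorem~\ref{thm:IrrClassQuadrics} directly in those coordinates.
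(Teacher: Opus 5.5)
Your proposal is correct and follows essentially the same route as the paper: the $(q-1)$ scalar factor, the $\GaussBinom{N+1}{r}$ choices for the singular locus, the Hirschfeld--Thas counts $\N(\cdot)$ with the exclusions of Theorem~\ref{thm:intro:mincodewords}, and the pencil count $\binom{q+1}{2}$ for pairs of hyperplanes. The only differences are that you pass to the quotient by the radical where the paper picks a complementary subspace $\Pi$ as in Proposition~\ref{prop:SingCone}, and one harmless slip: $\Ev$ is in fact injective on quadratic forms also when $q=2$ (no nonzero form of degree at most $q$ vanishes on all of $\PP^N(\Fq)$), so your step (a) is valid but could simply invoke injectivity.
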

\begin{proof}
  The number of quadratic forms
  corresponding to a fixed quadric is $(q-1)$, which explains the
  $(q-1)$ factor in all the formulas. Using Theorem~\ref{thm:main}, we
  thus require to count the quadrics which are either of the form
  $\P_r, \H_r, \E_r$ for some $r\geq 3$ or a union of distinct rational
  hyperplanes.

   We proceed to count quadrics using
   Proposition~\ref{prop:SingCone}. For a fixed rank $r$ and hence a
   fixed dimension of the singular locus $N-r$, we first count the
   number of possible choices for $\sing (\Q)$ which is nothing but
  \[
    \GaussBinom{N+1}{N+1-r} = \GaussBinom{N+1}{r},
  \]
  where the equality is a well--known property of Gaussian binomial
  coefficients. Once the singular locus is fixed, we chose an
  arbitrary $\Pi$ of dimension $r-1$ and count the number of
  candidates for $\Q \cap \Pi$, which is the number of possible
  nonsingular quadrics of a given class in $\PP^{r-1}$.

  \medskip
  \noindent \emph{Proof of (\ref{item:i}).}  Note that the
  characterisation as a cone holds even for union of hyperplanes
  $\L_1 \cup \L_2$. In the latter case, the singular locus is the
  codimension $2$ linear space at the intersection of the two
  hyperplanes and $\Pi$ can be taken as a line $\ell$ disjoint from
  the aforementioned singular locus. The ``quadric'' $\ell \cap \Q$ is
  then a ``quadric'' in $\PP^1$ which is a union of two points: two
  rational points give rise to a union of rational hyperplanes, while
  two conjugated $\F_{q^2}$--points give rise to an irreducible non
  absolutely irreducible quadric. This yields (\ref{item:i}) : the
  Gaussian binomial term counting the possible choices for the
  singular locus, the usual binomial term counting the number of
  possible unordered pairs of elements in $\PP^{1}(\Fq)$. The $(q-1)$
  factor was explained in the beginning of this proof.

  \medskip
  \noindent \emph{Proof of (\ref{item:ii}).} According to Proposition~\ref{prop:NoOfPoints},
  parabolic quadrics have the same number of rational points whatever
  their rank.  Hence we need to consider all the possible (odd) ranks,
  which explains the summation.  The Gaussian binomial term is the
  number of possible choices for the singular locus and the
  $\N (\P_r)$ is the number of choices for $\Q \cap \Pi$ as explained
  above.  Regarding the summation indexes, we consider all the odd
  integers from $3$ to $N+1$ unless $q=2,3$ where rank $3$ should be
  discarded according to
  Theorem~\ref{thm:intro:mincodewords}(\ref{item:mincodewords2}).

  \medskip
  \noindent \emph{Proof of (\ref{item:iii}), (\ref{item:iv}) and
    (\ref{item:v}).} Here, each class of quadric and each rank gives rise
  to a different number of rational points. Formulas (\ref{item:iii}),
  (\ref{item:iv}) are obtained in the same way as the previous
  ones. Formula (\ref{item:iv}) excludes elliptic quadrics of rank $4$ when $q=2$
  according to
  Theorem~\ref{thm:intro:mincodewords}(\ref{item:mincodewords2}).
\end{proof}

\bibliographystyle{plain}

\end{document}